\newfont{\nset}{msbm10}
\newtheorem{theo}{Theorem}[section]
\newtheorem{proposition}[theo]{Proposition}
\newtheorem{lemma}[theo]{Lemma}
\newtheorem{definition}[theo]{Definition}
\theoremstyle{empty}
\newtheorem{refproof}{theorem}
\journal{Theoretical Computer Science}
\begin{document}

\begin{frontmatter}

\title{Maximum matchings in scale-free networks with identical degree distribution}

\author[lable1,label2]{Huan Li}

\author[lable1,label2]{Zhongzhi Zhang}
\ead{zhangzz@fudan.edu.cn}

\address[lable1]{School of Computer Science, Fudan
University, Shanghai 200433, China}
\address[label2]{Shanghai Key Laboratory of Intelligent Information
Processing, Fudan University, Shanghai 200433, China}

\begin{abstract}
The size and number of maximum matchings in a network have found a large variety of applications in many fields. As a ubiquitous property of diverse real systems, power-law degree distribution was shown to have a profound influence on size of maximum matchings in scale-free networks, where the size of maximum matchings is small and a perfect matching often does not exist. In this paper, we study analytically the maximum matchings in two scale-free networks with identical degree sequence, and show that the first network has no perfect matchings, while the second one has many. For the first network, we determine explicitly the size of maximum matchings, and provide an exact recursive solution for the number of maximum matchings. For the second one, we design an orientation and prove that it is Pfaffian, based on which we derive a closed-form expression for the number of perfect matchings. Moreover, we demonstrate that the entropy for perfect matchings is equal to that corresponding to the extended Sierpi\'nski graph with the same average degree as both studied scale-free networks. Our results indicate that power-law degree distribution alone is not sufficient to characterize the size and number of maximum matchings in scale-free networks.
\end{abstract}

\begin{keyword}
Maximum matching, Perfect matching, Pfaffian orientation, Scale-free network, Complex network
\end{keyword}

\end{frontmatter}

\section{Introduction}

A matching in a graph with $N$ vertices is a set of edges, where no two edges are incident to a common vertex. A maximum matching is a matching of maximum cardinality, with a perfect matching being a particular case containing $\frac{N}{2}$ edges. The size and number of maximum matchings have numerous applications in physics~\cite{Mo64}, chemistry~\cite{Vu11}, computer science~\cite{LoPl86}, among others. For example, in the context of structural controllability~\cite{LiSlBa11,BaGe15}, the minimum number of driving vertices to control the whole network and the possible configurations of driving vertices are closely related to the size and number of maximum matchings in a bipartite graph.


Due to the relevance of diverse aspects, it is of theoretical and practical importance to study the size and number of maximum matchings in networks, which is, however, computationally intractable~\cite{Pr99}. Valiant proved that enumerating perfect matchings in general graphs is formidable~\cite{Va79TCS,Va79SiamJComput}, it is \#P-complete even in bipartite graph~\cite{Va79SiamJComput}. Thus, it is of great interest to find specific graphs for which the maximum matching problem can be solved exactly~\cite{LoPl86}. In the past decades, the problems related maximum matchings have attracted considerable attention from the community of mathematics and theoretical computer science~\cite{KaSi81,Ga96,Un97,MaVa00,GaKTa01,Jm03,LiLi04,Er04,YaYeZh05,YaZh05,YaZh06,Ke06,ZdMe06,YaZh08,TeSt09,ChFrMe10,DaAlTa12,KoNaPaPi13,Yu13,Me16}.


A vast majority of previous works about maximum matchings focused on regular graphs or random graphs~\cite{KaSi81,ZdMe06,ChFrMe10,Yu13}, which cannot well describe realistic networks. Extensive empirical works~\cite{Ne03} indicated that most real networks exhibit the prominent scale-free property~\cite{BaAl99}, with their degree distribution following a power law form. It has been shown that power law behavior has a strong effect on the properties of maximum matchings in a scale-free network~\cite{LiSlBa11}. For example, in the Barab\'asi-Albert (BA) scale-free network~\cite{BaAl99}, a perfect matching is almost sure not to exist, and the size of a maximum matching is much less than half the number of vertices. The same phenomenon was also observed in a lot of real scale-free networks, which are far from being perfectly matched as the BA network. Then, an interesting question arises naturally: whether the power-law degree distribution is the only ingredient characterizing maximum matchings in scale-free networks?



In order to answer the above-raised problem, in this paper, we present an analytical study of maximum matchings in two scale-free networks with identical degree distribution~\cite{ZhZhZoChGu09}, and show that the first network has no perfect matchings, while the second network has many. For the first network, we derive an exact expression for the size of maximum matchings, and provide a recursive solution to the number of maximum matchings. For the second network, by employing the Pfaffian method proposed independently by Kasteleyn~\cite{Ka61}, Fisher and Temperley~\cite{TeFi61}, we construct a Pfaffian orientation of the network. On the basis of Pfaffian orientation, we determine the number of perfect matchings  as well as its entropy, which is proved equal to that corresponding to the extended Sierpi\'nski graph~\cite{KlMo05}. Our findings suggest that the power-law degree distribution by itself cannot determine the properties of maximum matchings in scale-free networks.

%
%
%
%

\section{Preliminaries}

In this section, we introduce some useful notations and results that will be applied in the sequel.

\subsection{Graph and operation}

Let $\mathcal{G}=(\mathcal{V}(\mathcal{G}),\,\mathcal{E}(\mathcal{G}))$ be a graph with $N$ vertices and $E$ edges, where $\mathcal{V}(\mathcal{G})$ is the vertex set $\{v_1,v_2,\cdots,v_N\}$ and $\mathcal{E}(\mathcal{G})$ is the edge set. In this paper, all graphs considered are simple graphs without loops and parallel edges, having an even number of vertices. Throughout the paper, the two terms graph and network are used indistinctly.

Let $e=\{u,v\} \in \mathcal{E}(\mathcal{G})$ be an edge in  $\mathcal{G}$. We say that the edge $e$ is subdivided if we insert a new vertex $w$ between them, that is, edge $e$ is replaced by a path $u-w-v$ of length $2$. 
The subdivision graph $B(\mathcal{G})$ of a graph $\mathcal{G}$ is the graph obtaining from $\mathcal{G}$ by performing  the subdivision operation on each edge in $\mathcal{E}(\mathcal{G})$.

The line graph $L(\mathcal{G})$ of a graph $\mathcal{G}$ is the graph, where the vertex set is exactly the edge set $\mathcal{E}(\mathcal{G})$ of $\mathcal{G}$, and two vertices are adjacent if and only if their corresponding edges of $\mathcal{G}$ are connected to a common vertex in $\mathcal{V}(\mathcal{G})$.

The subdivided-line graph $\Gamma(\mathcal{G})$ of a graph $\mathcal{G}$
is the line graph of the subdivision graph of $\mathcal{G}$, i.e.,
$\Gamma(\mathcal{G}) = L(B(\mathcal{G}))$. We call $\Gamma$ the subdivided-line
graph operation. The $g$-iterated ($g\geq 1$) subdivided-line
graph $\Gamma^g(\mathcal{G})$ of $\mathcal{G}$ is the graph obtained from $\mathcal{G}$ by iteratively using the subdivided-line graph operation $g$ times.

\subsection{Structural properties of a graph}

For a network, the distance between two vertices is defined as the number of edges in the shortest path between them, and the average distance of the network is defined as the arithmetic average of distances over all pairs of vertices. The diameter of a network is the length of the shortest path between any pair of farthermost vertices in the network. A network is said to be small-world~\cite{WaSt98} if its average distance grows logarithmically with the number of vertices, or more slowly.

A random variable $x$ is said to follow a power law distribution if its probability density function $P(x)$ obeys the form $P(x) \sim x^{-\gamma}$. A network is scale-free~\cite{BaAl99} if its degree satisfies, at least asymptotically, a power law distribution $P(d) \sim d^{-\gamma}$. In realistic scale-free networks~\cite{Ne03}, the power exponent $\gamma$ of degree distribution typically lies between $2$ and $3$. Cumulative degree distribution $P_{\rm cum}(d)$ of a network is defined as $P_{\rm cum}(d)=\sum_{d'\geq d} P(d')$. For a scale-free network with  power law degree distribution $P(d)\sim d^{-\gamma}$, its cumulative degree distribution is also power law $P_{\rm cum}(d) \sim d^{-(\gamma-1)}$~\cite{Ne03}. 

In real networks there are nontrivial degree correlations among vertices~\cite{Ne02}. There are two measures characterizing degree correlations in a network. The first one is the average degree of the nearest neighbors for vertices with degree $d$ as a function of this degree value, denoted as $k_{\rm nn}(d)$~\cite{PaVaVe01}. When $k_{\rm nn}(d)$ is an increasing function of $d$, it means that vertices have a tendency to connect to vertices with a similar or larger degree. In this case, the network is called assortative. For example, the small-world Farey graph~\cite{ZhCo11,YiZhLiCh15} is assortative. In contrast, if $k_{\rm nn}(d)$ is a decreasing function of $d$, which implies that vertices of large degree are likely to be connected to vertices with small degree, then the network is said to be disassortative. And if $k_{\rm nn}(d)={\rm const}$, the network is uncorrelated.

The other quantity describing degree correlations is Pearson correlation coefficient of the degrees of the endvertices of all edges~\cite{Ne02}.
For a general graph $\mathcal{G}=(\mathcal{V}(\mathcal{G}),\,\mathcal{E}(\mathcal{G}))$, this coefficient is defined as
\begin{equation}\label{rNe}
r(\mathcal{G})=\frac{E \displaystyle\sum_{i=1}^E j_i \,k_i -\left[ \displaystyle\sum_{i=1}^E \frac{1}{2} (j_i+ k_i)\right]^2 }{E \displaystyle \sum_{i=1}^E \frac{1}{2} \left(j_i^2+ k_i^2\right) -\left[ \displaystyle\sum_{i=1}^E \frac{1}{2}(j_i+ k_i)\right]^2 }\,,
\end{equation}
where $j_i$, $k_i$ are the degrees of the two endvertices of the $i$th edge, with $i=1,2,\cdots, E$. The Pearson correlation coefficient is in the range
$-1\leq r(\mathcal{G}) \leq 1$. Network $\mathcal{G}$ is uncorrelated, if $r(\mathcal{G})=0$; $\mathcal{G}$ is disassortative, if $r(\mathcal{G})<0$; and $\mathcal{G}$ is assortative, if $r(\mathcal{G})>0$.


A network $\mathcal{G}$ is fractal if it has a finite fractal dimension, otherwise it is non-fractal~\cite{ SoHaMa05}. In general, the fractal dimension of $\mathcal{G}$ can be computed by a box-covering approach~\cite{SoGaHaMa07}.   {A box of size  $l_{B}$ is a vertex set such that all distances between any pair of vertices in the box are less than $l_{B}$.  We use boxes of  size $l_{B}$ to cover all vertices in $\mathcal{G}$, and let $N_B$ be minimum possible number of boxes  required to cover all vertices in $\mathcal{G}$.   If   $N_B$ satisfies $N_B \sim l_{B}^{-d_B}$ with $0<d_B<\infty$, then $\mathcal{G}$ is fractal with its fractal dimension being $d_B$.} Self-similarity of a network~\cite{KiGoKaKi07} refers to the scale invariance of the degree distribution under coarse-graining with various box sizes $l_{B}$ as well as under the iterative operations of coarse-graining with fixed $l_{B}$. Intuitively, a self-similar network resembles a part of itself. Notice that fractality and self-similarity do not always imply each other. A fractal network is self-similar, while a self-similar network may be not fractal.



\subsection{Matchings in a graph}

A matching in $\mathcal{G}$ is a subset $M \subseteq \mathcal{E}(\mathcal{G})$ such that no two edges in $M$ have a vertex in common. The size of a matching $M$ is the number of edges in $M$. A matching of the largest possible size is called a maximum matching. Matching number of $\mathcal{G}$ is defined as the cardinality of any maximum matching in $\mathcal{G}$. A vertex incident with an edge in  $M$ is said to be covered by  $M$. A matching $M$ is said to be perfect if it covers  every vertex of $\mathcal{G}$. Obviously, any perfect matching is a maximum matching. We use $\psi(\mathcal{G})$ to denote the number of perfect matchings of $\mathcal{G}$. 

\begin{lemma}\cite{DoYaZh13}   \label{linegraph}
For a connected graph $\mathcal{G}$ with $N$ vertices and $E$ edges, where
$E$ is even, the number $\psi(L(\mathcal{G}))$ of perfect matchings in its line graph satisfies $\psi(L(\mathcal{G})) \geq 2^{E-N+1}$, where equality holds
if the degree of all vertices in $\mathcal{G}$ is less than or equal to 3.
\end{lemma}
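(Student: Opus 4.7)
The plan is to build an explicit combinatorial correspondence that turns a perfect matching of $L(\mathcal{G})$ into a pair consisting of a ``middle-vertex'' assignment and a local pairing of edges around each vertex. First I would observe that a perfect matching $M$ of $L(\mathcal{G})$ is exactly a partition of $\mathcal{E}(\mathcal{G})$ into unordered pairs of edges that share a common endpoint. Since two distinct edges in a simple graph meet in at most one vertex, every such pair has a well-defined ``center'', and this defines a function $f:\mathcal{E}(\mathcal{G})\to \mathcal{V}(\mathcal{G})$ with $f(e)\in e$ and with $|f^{-1}(v)|$ even at every vertex $v$. Conversely, given such an admissible $f$, any perfect pairing of the edges in $f^{-1}(v)$ at each vertex $v$ recovers a perfect matching of $L(\mathcal{G})$. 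Hence
\[
\psi(L(\mathcal{G}))=\sum_{f}\prod_{v\in\mathcal{V}(\mathcal{G})}(|f^{-1}(v)|-1)!!\,,
\]
where the sum is over all admissible $f$ and we use the convention $(-1)!!=1$.

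Next I would enumerate admissible $f$'s by viewing each one as an orientation of $\mathcal{G}$: orient $e$ towards $f(e)$, so that $|f^{-1}(v)|$ is the indegree of $v$, and admissibility becomes the condition that every indegree is even. Fixing any reference orientation and writing a general orientation as its symmetric difference with a subset $S\subseteq\mathcal{E}(\mathcal{G})$, the parity condition becomes $|S\cap\delta(v)|\equiv t(v)\pmod 2$ for an explicit $t:\mathcal{V}(\mathcal{G})\to\mathbb{F}_2$ read off the reference orientation. The consistency requirement $\sum_v t(v)\equiv E\equiv 0\pmod 2$ is automatic because $E$ is even, so at least one valid $S$ exists, and the set of valid $S$ then forms a coset of the cycle space of $\mathcal{G}$ over $\mathbb{F}_2$. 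Since $\mathcal{G}$ is connected, this cycle space has dimension $E-N+1$, so there are exactly $2^{E-N+1}$ admissible $f$.

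Combining the two steps, each factor $(2k-1)!!$ is at least $1$, so
\[
\psi(L(\mathcal{G}))\;\geq\;\#\{\text{admissible }f\}\;=\;2^{E-N+1}.
\]
For the equality case, if every vertex of $\mathcal{G}$ has degree at most $3$ then $|f^{-1}(v)|\leq d_v\leq 3$ and, by the parity condition, $|f^{-1}(v)|\in\{0,2\}$. Every factor in the product then equals $1$ and the lower bound is attained.

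The step I expect to require the most care is the cycle-space / $T$-join count: one has to verify that ``$E$ even'' is precisely the parity hypothesis needed to ensure a valid $S$ exists, and that the coset structure yields the exact exponent $E-N+1$ rather than an inequality. Everything else is bookkeeping: translating perfect matchings of $L(\mathcal{G})$ into admissible centers, counting intra-vertex pairings by double factorials, and noting that the double factorials collapse to $1$ exactly when the maximum degree is at most $3$.
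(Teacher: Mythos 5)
Your proof is correct, but there is nothing in the paper to compare it against: the paper states this lemma with a citation to Dong, Yan and Zhang and never proves it, so you have in effect supplied a self-contained proof of an imported result. Your argument is sound at every step: perfect matchings of $L(\mathcal{G})$ correspond bijectively to pairs consisting of a center function $f$ (equivalently, an orientation of $\mathcal{G}$ with all indegrees even) together with a pairing of the edges assigned to each vertex, giving the exact formula $\psi(L(\mathcal{G}))=\sum_{f}\prod_{v}\bigl(|f^{-1}(v)|-1\bigr)!!$; the count of admissible $f$ as a coset of the $\mathbb{F}_2$ cycle space is right, with connectedness giving dimension $E-N+1$ and evenness of $E$ guaranteeing solvability (note that solvability also uses connectedness, since it is for connected graphs that the image of the boundary map is exactly the even-parity vectors --- you invoke connectedness only for the dimension count, so it is worth stating this explicitly); and the equality case follows since degree at most $3$ forces $|f^{-1}(v)|\in\{0,2\}$ and every double factorial equals $1$. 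Your route in fact yields slightly more than the quoted statement (an exact expansion of $\psi(L(\mathcal{G}))$ from which one could also read off when the bound is strict), whereas the paper only needs the equality case for the $3$-regular extended Sierpi\'nski graphs and is content to quote the literature.
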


{A path is called an elementary one if it touches each vertex one time at most. A  cycle is  elementary if it touches each vertex one time at most, except the  starting and ending vertices.} In this paper, all paths and cycles mentioned are elementary paths and elementary cycles, respectively. A cycle $C$ of $\mathcal{G}$ is nice if $\mathcal{G} \setminus C$ contains a perfect matching, where $\mathcal{G} \setminus C$ represents the induced subgraph of $\mathcal{G}$ obtained from $\mathcal{G}$ by removing all vertices of $C$ and the edges connected to them. Similarly, a path $P$ of $\mathcal{G}$ is nice if $\mathcal{G} \setminus P$ includes a perfect matching. Since the number of vertices of any graph considered in this paper is even, the length (number of edges) of every nice cycle is even, while the length of every nice path is odd.

Let $\mathcal{G}^{e}$ be an orientation of $\mathcal{G}$. Then the skew adjacency matrix of $\mathcal{G}^{e}$, denoted by $A(\mathcal{G}^{e})$, is defined as
\begin{equation}
A(\mathcal{G}^{e})=(a_{ij})_{N \times N}\,,
\end{equation}
where
\begin{equation}\label{Equ:Pre01}
a_{ij}=\begin{cases}
1, &(v_i,v_j) \in \mathcal{E}(\mathcal{G}^{e})\,, \\
-1, &(v_j,v_i) \in \mathcal{E}(\mathcal{G}^{e})\,, \\
0, &\text{otherwise}\,.
\end{cases}
\end{equation}

For a cycle $C$ with even length, we shall say $C$ is oddly (or evenly) oriented in $\mathcal{G}^{e}$ if $C$ contains an odd (or even) number of co-oriented edges when the cycle is traversed in either direction. Similarly, a path $P$ is said to be oddly (or evenly) oriented in $\mathcal{G}^{e}$ if it has an odd (or even) number of co-oriented edges when $P$ is traversed from its starting vertex to ending vertex.  $\mathcal{G}^{e}$ is a Pfaffian orientation of $\mathcal{G}$ if every nice cycle of $\mathcal{G}$ is oddly oriented in $\mathcal{G}^{e}$~\cite{LoPl86}.

If $\mathcal{G}^{e}$ is a Pfaffian orientation of network $\mathcal{G}$, then the number of perfect matchings of $\mathcal{G}$, $\psi(\mathcal{G})$, is equal to the square root of the determinant of $A(\mathcal{G}^{e})$:
\begin{lemma}\label{Lem:Pre01}\cite{LoPl86,Ka63}
Let $\mathcal{G}^{e}$ be a Pfaffian orientation of $\mathcal{G}$. Then
\begin{equation}\label{Equ:Pre01}
\psi(\mathcal{G}) =\sqrt{\det \left(A(\mathcal{G}^{e})\right)}\,.
\end{equation}
\end{lemma}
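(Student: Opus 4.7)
The plan is to establish the classical identity of Kasteleyn via the Leibniz expansion of $\det(A(\mathcal{G}^{e}))$. First I would write
\[
\det\!\bigl(A(\mathcal{G}^{e})\bigr) = \sum_{\sigma \in S_N} \text{sgn}(\sigma) \prod_{i=1}^{N} a_{i\sigma(i)},
\]
and note that a nonzero contribution forces every pair $\{i,\sigma(i)\}$ to be an edge of $\mathcal{G}$; since the diagonal of $A(\mathcal{G}^{e})$ vanishes, such a $\sigma$ has no fixed points and decomposes into cycles of length $\geq 2$.

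Next I would show that any permutation $\sigma$ containing a cycle of odd length contributes $0$ in total, by pairing $\sigma$ with the permutation $\sigma'$ obtained by reversing the direction of that cycle. The skew-symmetry $a_{ij} = -a_{ji}$ of $A(\mathcal{G}^{e})$ implies the two monomials differ by a sign $(-1)^{\text{length}}$, while $\text{sgn}(\sigma) = \text{sgn}(\sigma')$, so the two terms cancel. Thus only permutations whose cycles all have even length survive. Each surviving $\sigma$ corresponds uniquely to an unordered pair $(M_1,M_2)$ of perfect matchings of $\mathcal{G}$ whose symmetric difference $M_1 \triangle M_2$ is a disjoint union of cycles of even length (the $2$-cycles of $\sigma$ encoding the edges common to $M_1$ and $M_2$, and the longer even cycles encoding the alternating cycles of $M_1 \triangle M_2$).

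The key step is then to evaluate the sign of each contribution. For any cycle $C$ of $M_1\triangle M_2$, I would compute that the product $\text{sgn}(\sigma_C)\prod_{i\in C} a_{i\sigma(i)}$ equals $+1$ if $C$ is oddly oriented in $\mathcal{G}^{e}$ and $-1$ otherwise; this is the standard lemma obtained by tracking the $\pm 1$ entries around the cycle and comparing with the parity of the cyclic permutation. Because each such cycle is alternating with respect to the perfect matching $(M_1\cup M_2)\setminus C$, it is nice, and the Pfaffian hypothesis guarantees it is oddly oriented. Hence every surviving $\sigma$ contributes exactly $+1$. Counting: the number of ordered pairs $(M_1,M_2)$ of perfect matchings of $\mathcal{G}$ equals $\psi(\mathcal{G})^2$, so $\det(A(\mathcal{G}^{e})) = \psi(\mathcal{G})^2$, and taking the positive square root yields the claim.

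The main obstacle is the cycle-sign computation in the third paragraph: one must carefully reconcile the sign $\text{sgn}(\sigma)$ of the cyclic permutation with the product of the skew entries $a_{i\sigma(i)}$ read around $C$, and show this reduces cleanly to the parity of co-oriented arcs in $\mathcal{G}^{e}$. Once this local sign lemma is in place, the global evaluation and the cancellation of odd-cycle terms fit together routinely to give $\det(A(\mathcal{G}^{e})) = \psi(\mathcal{G})^2$, which in turn gives \eqref{Equ:Pre01}.
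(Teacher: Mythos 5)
Your argument is correct in substance, but note that the paper does not prove this lemma at all: it is quoted directly from Lov\'asz--Plummer and Kasteleyn, so there is no internal proof to compare against. Your sketch is a faithful reconstruction of the classical argument, working directly with the Leibniz expansion of $\det\bigl(A(\mathcal{G}^{e})\bigr)$ rather than routing through the identity $\det = \mathrm{Pf}^2$ and the matching expansion of the Pfaffian, which is how the cited sources usually organize it; the two routes are equivalent, and yours has the advantage of avoiding the definition of the Pfaffian altogether. Two small points deserve tightening. First, the cancellation of permutations with an odd cycle needs a well-defined involution (e.g.\ reverse the odd cycle containing the smallest vertex), since reversing ``that cycle'' is ambiguous when several odd cycles are present; with that convention the sign bookkeeping you describe goes through. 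Second, you say each surviving $\sigma$ ``corresponds uniquely to an unordered pair $(M_1,M_2)$'' but then count ordered pairs: the correct statement is that surviving permutations are in bijection with \emph{ordered} pairs of perfect matchings (if the symmetric difference has $c$ alternating cycles, the $2^c$ choices of traversal direction match the $2^c$ ways to distribute the cycles between $M_1$ and $M_2$), and it is this ordered count that gives $\det\bigl(A(\mathcal{G}^{e})\bigr)=\psi(\mathcal{G})^2$. The remaining steps --- each cycle of $M_1\triangle M_2$ is nice because the shared edges match $\mathcal{G}\setminus C$, the Pfaffian hypothesis forces odd orientation, and the local sign computation (a $2k$-cycle has permutation sign $-1$ and entry product $(-1)^{\#\text{reverse arcs}}=-1$, while a $2$-cycle contributes $(-1)\cdot a_{ij}a_{ji}=+1$) --- are exactly right, so the proposal is a sound proof of the lemma.
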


An interesting quantity related to perfect matchings is entropy. For a network $\mathcal{G}$ with sufficiently large number of vertices, the entropy for perfect matchings is defined as follows~\cite{BuPe93,WU200620}:
\begin{equation}\label{Equ:Pre02}
z(\mathcal{G})=\lim\limits_{N\to \infty} \frac{\ln \psi(\mathcal{G})}{ \frac{N}{2} } \,.
\end{equation}


After introducing related notations, in what follows we will study maximum matchings of two scale-free networks with the same degree sequence~\cite{ZhZhZoChGu09}: one is fractal and large-world, while the other is non-fractal and small-world. We will show that for both networks, the properties of their maximum matchings differ greatly.

\section{Maximum matchings in a fractal and scale-free network}

In this section, we study the size and number of maximum matchings for a fractal scale-free network.

\subsection{Construction and structural properties}

We first introduce the construction methods of the fractal network and study some of its structural properties.
The fractal scale-free network is generated in an iterative way~\cite{BeOs79}.
\begin{definition}\label{Def:LW01}
Let $\mathcal{F}_g=(\mathcal{V}(\mathcal{F}_g),\,\mathcal{E}(\mathcal{F}_g))$, $g \geq 1$, denote the fractal scale-free network after $g$ iterations, with $\mathcal{V}(\mathcal{F}_g)$ and $\mathcal{E}(\mathcal{F}_g)$ being the vertex set and the edge set, respectively. Then, $\mathcal{F}_g$ is constructed as follows:

For $g=1$, $\mathcal{F}_1$ is a quadrangle containing four vertices and edges.

For $g>1$, $\mathcal{F}_g$ is obtained from $\mathcal{F}_{g-1}$ by replacing each edge of $\mathcal{F}_{g-1}$ with a quadrangle on the right-hand side (rhs) of the arrow in Fig.~\ref{Fig.11}.
\end{definition}

\begin{figure}
\centering
\includegraphics[width=0.4\textwidth]{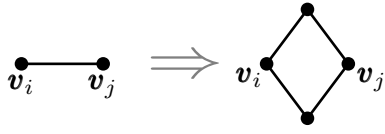}
\caption{Construction method for the fractal scale-free network. To obtain network of next iteration, each edge $(v_i,v_j)$ of current iteration is replaced by a quadrangle on the rhs of the arrow, with $v_i$ and $v_j$ being two diagonal vertices.}
\label{Fig.11}
\end{figure}

Figure~\ref{Fig.12} illustrates the construction process of the first several iterations.

\begin{figure}
\centering
\includegraphics[width=0.55\textwidth]{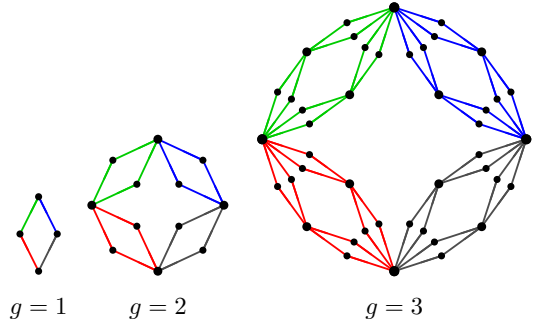}
\caption{The iteration process for the fractal scale-free network.}
\label{Fig.12}
\end{figure}

The fractal scale-free network is self-similar, which
can be easily seen from alternative construction approach~\cite{ZhZhZo07}. As will be shown below, in $\mathcal{F}_g$, $g \geq 1$, there are four vertices with the largest degree, which we call hub vertices. For the four hub vertices in $\mathcal{F}_1$, we label one pair of diagonal vertices as $v_1$ and $v_2$, and label the other pair of vertices as $v_3$ and $v_4$. Then, the fractal scale-free network can be created in another way as illustrated in Fig.~\ref{Fig.13}.

\begin{definition}\label{Def:LW02}
Given the network $\mathcal{F}_{g-1}=(\mathcal{V}(\mathcal{F}_{g-1}),\,\mathcal{E}(\mathcal{F}_{g-1}))$, $g > 1$, $\mathcal{F}_g=(\mathcal{V}(\mathcal{F}_g),\,\mathcal{E}(\mathcal{F}_g))$ is obtained by performing the following operations:

(i)  Merging four replicas of $\mathcal{F}_{g-1}$, denoted by $\mathcal{F}_{g-1}^{(i)}$, $i=1,2,3,4$, the four hub vertices of which are denoted by $v_k^{(i)}$, $k=1,2,3,4$, with $v_k^{(i)}$ in $\mathcal{F}_{g-1}^{(i)}$ corresponding to $v_k$ in $\mathcal{F}_{g-1}$.

(ii)  Identifying $v_1^{(1)}$ and $v_1^{(4)}$ (or $v_2^{(2)}$ and $v_1^{(3)}$, $v_2^{(1)}$ and $v_1^{(2)}$, $v_2^{(3)}$ and $v_2^{(4)}$) are as the hub vertex $v_1$ (or $v_2$,  $v_3$,  $v_4$) in $\mathcal{F}_g$.
\end{definition}


\begin{figure}
\centering
\includegraphics[width=0.55\textwidth]{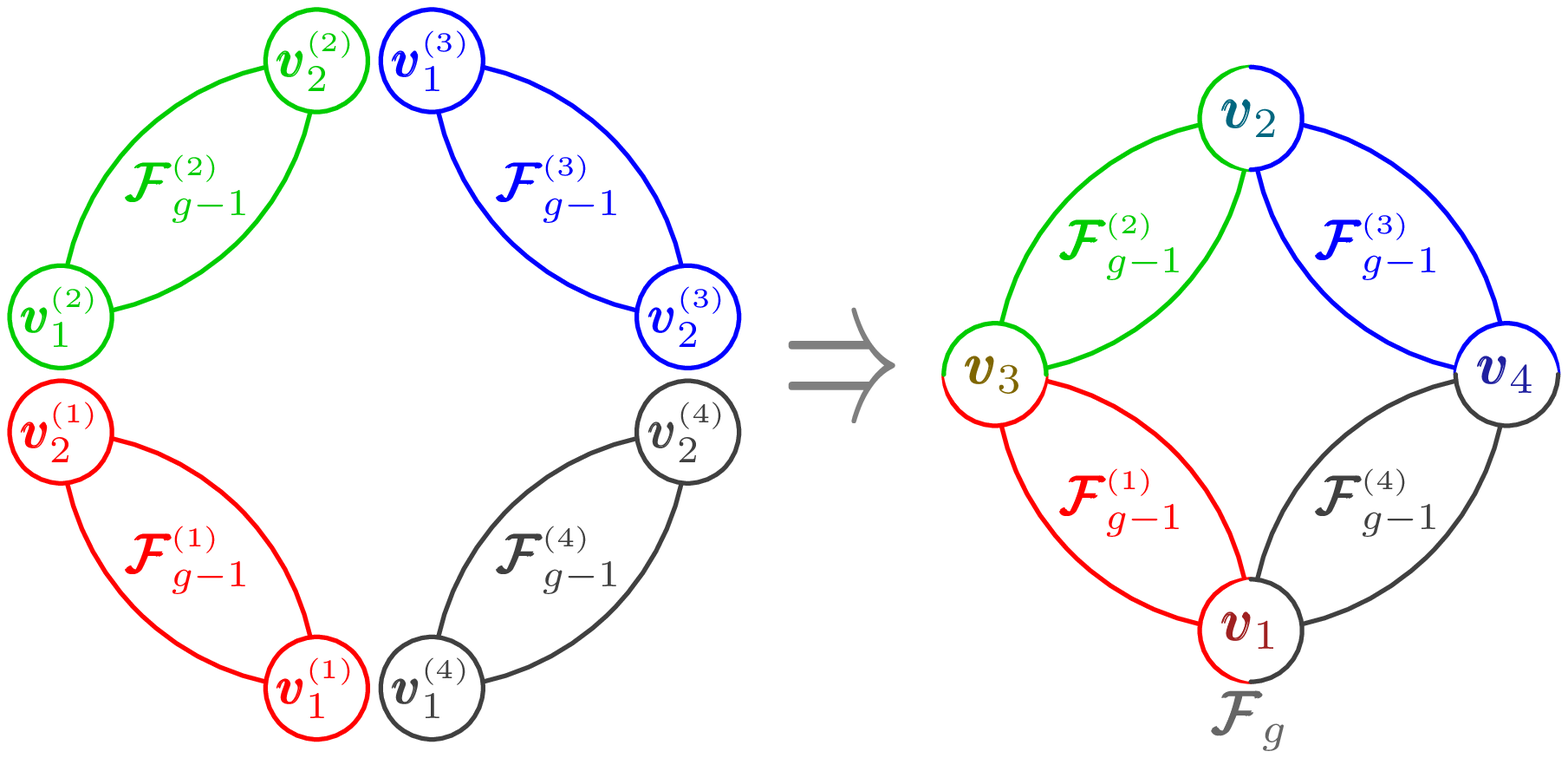}
\caption{Second approach for constructing the fractal scale-free network.} 
\label{Fig.13}
\end{figure}

Let $N_g$ and $E_g$ denote, respectively, the number of vertices and edges in $\mathcal{F}_g$.
By construction, $N_g$ and $E_g$ obey relations $N_g = 4N_{g-1}-4$ and $E_g=4 E_{g-1}$.
With the initial condition $N_1=E_1=4$, we have $N_g=\frac{2}{3}\left(4^{g}+2 \right)$ and $E_g=4^g$.

{According to the first construction, we can determine the degree for all vertices in  $\mathcal{F}_g$ and its distribution.} Let $L_v(g_i)$ denote the number of vertices created at iteration $g_i$. Then, $L_v(1)=4$ and $L_v(g_i)=2\times 4^{ g_i -1}$ for $ g_i >1$. In network $\mathcal{F}_g$, any two vertices generated at the same iteration have the same degree. Let $d_i(g)$ be the degree of a vertex in $\mathcal{F}_g$, which was created at iteration $g_i$. {After each iteration, the degree of each vertex doubles, implying $d_i(g)=2\,d_i(g-1)$, which together with $d_i(g_i)=2$  leads to $d_i(g)= 2^{g-g_i+1}$.} Thus, all possible degree of vertices in $\mathcal{F}_g$ is $2,2^2,2^3,\dots,2^{g-1},2^{g}$, and the number of  vertices with degree $\delta =2^{ g-g_i+1}$ is $L_v(g_i)$. From the degree sequence we can determine the cumulative degree distribution of $\mathcal{F}_g$.
\begin{proposition}\cite{ZhZhZo07}
The cumulative degree distribution of network $\mathcal{F}_g$  obeys a power law form $P_{\rm cum}(d)\sim d^{-2}$.
\end{proposition}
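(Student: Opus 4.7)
The plan is to compute $P_{\rm cum}(\delta)$ exactly at the discrete degree values $\delta = 2^{g-g_i+1}$ for $g_i = 1, 2, \ldots, g$ that actually occur in $\mathcal{F}_g$, and then read off the power-law exponent from the resulting closed form.

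The first step is to observe that, since the formula $d_i(g) = 2^{g-g_i+1}$ from the preceding discussion is strictly decreasing in $g_i$, a vertex has degree at least $\delta = 2^{g-g_i+1}$ if and only if it was generated at some iteration $g_j \leq g_i$. Therefore the number of vertices with degree at least $\delta$ is
\begin{equation*}
N_{\rm cum}(\delta) \;=\; \sum_{g_j=1}^{g_i} L_v(g_j) \;=\; 4 + 2\sum_{g_j=2}^{g_i} 4^{g_j-1},
\end{equation*}
which, by a routine geometric-sum evaluation, collapses to $\tfrac{2}{3}(4^{g_i}+2)$.

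Next I would divide by $N_g = \tfrac{2}{3}(4^g+2)$ to obtain the exact expression
\begin{equation*}
P_{\rm cum}(\delta) \;=\; \frac{4^{g_i}+2}{4^g+2}.
\end{equation*}
Then I would take the thermodynamic limit $g \to \infty$ with $\delta$ held fixed. Since $g_i = g - \log_2 \delta + 1$, the leading behavior is $P_{\rm cum}(\delta) \sim 4^{g_i-g} = 4^{\,1-\log_2 \delta} = 4\,\delta^{-2}$, which establishes $P_{\rm cum}(d) \sim d^{-2}$ as claimed, together with the implicit exponent $\gamma = 3$ for the underlying $P(d)$.

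I do not anticipate a real obstacle: every step is an explicit counting argument on the recursive construction combined with one geometric sum. The only thing to be careful about is the direction of monotonicity when translating the condition ``degree $\geq \delta$'' into a range of generation indices $g_j$, and the clean way to handle this is to use the alternative construction of Definition~\ref{Def:LW02} as a sanity check: in each step the four hubs of the previous iteration keep doubling in degree while each newly inserted vertex enters with degree $2$, which is exactly what the counts $L_v(g_i)$ encode.
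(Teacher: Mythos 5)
Your computation is correct, and it is exactly the argument the paper intends when it says the cumulative distribution follows ``from the degree sequence'': summing $L_v(g_j)$ over $g_j\le g_i$ gives $\frac{2}{3}(4^{g_i}+2)$ vertices of degree at least $\delta=2^{g-g_i+1}$, and dividing by $N_g$ yields $P_{\rm cum}(\delta)\sim 4\,\delta^{-2}$, hence exponent $\gamma=3$. No gaps; this matches the standard derivation cited from the original construction paper.
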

Thus, network $\mathcal{F}_g$ is scale-free with its power exponent $\gamma$ of degree distribution being 3.

\begin{proposition}\cite{HiBe06}
The average distance of $\mathcal{F}_g$ is
$$
\mu(\mathcal{F}_g) = \frac{22\times 2^g \times 16^g+8^g(21g+42)+27\times 4^g+98\times 2^g}{42\times16^g+105\times 4^g+42}.
$$
\end{proposition}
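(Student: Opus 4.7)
The plan is to compute the total pairwise distance sum $S_g := \sum_{\{u,v\}\subseteq \mathcal{V}(\mathcal{F}_g)} d_{\mathcal{F}_g}(u,v)$, from which the average distance is recovered via $\mu(\mathcal{F}_g) = S_g/\binom{N_g}{2}$ with $N_g = \tfrac{2}{3}(4^g+2)$ already known. I would exploit the self-similar construction of Definition~\ref{Def:LW02}: $\mathcal{F}_g$ is the union of four copies $\mathcal{F}_{g-1}^{(1)},\ldots,\mathcal{F}_{g-1}^{(4)}$ of $\mathcal{F}_{g-1}$ glued at hub vertices. This suggests splitting $S_g = S_g^{\rm in} + S_g^{\rm cr}$ into an intra-copy contribution and a cross-copy contribution.

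For the intra-copy piece I would note that each pairwise distance inside a single copy $\mathcal{F}_{g-1}^{(i)}$ is inherited unchanged in $\mathcal{F}_g$ (no shortcut through the other copies can help, which needs a small verification based on the quadrangular gluing). This gives $S_g^{\rm in} = 4S_{g-1}$, up to a correction for pairs of hub vertices that get identified across copies and thus double-counted. For the cross-copy piece, any geodesic from $u\in \mathcal{F}_{g-1}^{(i)}$ to $v\in \mathcal{F}_{g-1}^{(j)}$ with $i\neq j$ must enter $\mathcal{F}_{g-1}^{(j)}$ through a shared hub. Hence I introduce the auxiliary quantities
\begin{equation}
H_k(g) \;:=\; \sum_{v\in \mathcal{V}(\mathcal{F}_g)} d_{\mathcal{F}_g}(v,v_k),\qquad k=1,2,3,4,
\end{equation}
and, using the obvious $\mathbb{Z}_2\times \mathbb{Z}_2$ symmetry of the construction, reduce the four sequences $H_k(g)$ to essentially one: the two diagonal pairs yield $H_1(g)=H_2(g)$ and $H_3(g)=H_4(g)$, with a further symmetry making all four equal (since each edge of $\mathcal{F}_1$ is treated identically by the replacement rule). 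Each cross-copy pair distance then decomposes as $d(u,v) = d(u,h) + d(h,v)$ for an appropriate shared hub $h$ (possibly minimised over a small finite set), so $S_g^{\rm cr}$ can be written as a linear combination of $H_k(g-1)$ and the six hub-to-hub distances in $\mathcal{F}_{g-1}$.

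The hub-to-hub distances satisfy an easy recursion: the diameter-like quantities $d_{\mathcal{F}_g}(v_1,v_2)$ and $d_{\mathcal{F}_g}(v_1,v_3)$ simply double at each iteration, giving $2^g$ and $2^{g-1}$ respectively (read off from Fig.~\ref{Fig.11}). For $H_k(g)$ I would set up a linear recursion by applying Definition~\ref{Def:LW02}: summing $d(v,v_1)$ over $\mathcal{V}(\mathcal{F}_g)$ splits into contributions from each of the four replicas, where in each replica one either uses $H_k(g-1)$ directly (if $v_1$ lies in that replica) or uses $H_k(g-1)$ plus $N_{g-1}$ times the hub-to-hub distance connecting that replica to $v_1$. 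Assembling these gives a recursion of the form $H_k(g) = 4H_k(g-1) + \alpha\cdot 2^g\cdot 4^{g-1}+\beta$ with small explicit constants, solvable by elementary generating-function or telescoping techniques.

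Plugging the closed form for $H_k(g)$ back into the formula for $S_g^{\rm cr}$ and combining with $S_g^{\rm in}=4S_{g-1}$ produces a first-order linear recursion for $S_g$ with an inhomogeneous term polynomial in $2^g$, $4^g$, $8^g$ and $16^g$. Solving this recursion yields $S_g$ as a combination of these exponential terms together with a linear-in-$g$ correction (the $21g+42$ factor in the numerator comes from resonance in the $8^g$ term), and dividing by $\binom{N_g}{2} = \tfrac{1}{2}N_g(N_g-1)$ gives the claimed expression after routine algebraic simplification. The main obstacle will be the \emph{no-shortcut lemma}: carefully verifying that geodesics between different replicas really do pass through the identified hubs and cannot be shortened by detouring through a third replica; once this is established, the rest of the argument is bookkeeping driven by the symmetries of the construction.
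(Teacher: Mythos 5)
First, note that the paper itself contains no proof of this proposition: it is quoted verbatim from \cite{HiBe06}, so there is no in-paper argument to compare yours against and your sketch must stand on its own. Its overall architecture (total distance sum $S_g$, hub-distance sums $H_k(g)$, the ring-of-four-replicas decomposition, doubling of hub-to-hub distances) is indeed the natural way to derive such formulas, and the part you flag as the main obstacle is actually the easy part: a geodesic between two vertices of one replica that left that replica would have to travel between its two gateway hubs outside it, at cost $3\cdot 2^{g-1}$ against $2^{g-1}$ inside, so the no-shortcut lemma is immediate (and no double-counting correction is needed for the intra-copy sum, since two replicas share at most one vertex).

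The genuine gap is the claim that the cross-copy contribution reduces to a linear combination of the $H_k(g-1)$ and hub-to-hub distances. Order the replicas around the ring $v_1$--copy\,1--$v_3$--copy\,2--$v_2$--copy\,3--$v_4$--copy\,4--$v_1$. The key identity your plan needs but never isolates is that every vertex of a replica lies on a geodesic between its two gateway hubs, i.e.\ $d(u,v_1^{(i)})+d(u,v_2^{(i)})=2^{g-1}$; with it one shows the shared hub is always on a geodesic between \emph{adjacent} replicas, so those cross terms are linear in $H_k(g-1)$. But for the two pairs of \emph{opposite} replicas (copy 1 vs.\ copy 3, copy 2 vs.\ copy 4) the two routes around the ring give
$d(u,w)=2^{g}-\left|\,d(u,v_1)-d(w,v_2)\right|$,
and $\sum_{u,w}\left|d(u,v_1)-d(w,v_2)\right|$ is not determined by the sums $H_k(g-1)$: it requires the full distribution of distances to a hub (equivalently an extra auxiliary quantity with its own recursion, such as the generating function $\sum_u t^{\,d(u,v_1)}$). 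The same two-gateway issue infects your $H_k$ recursion as stated: a vertex in a replica not containing $v_k$ exits through whichever gateway is closer, so the contribution is not ``$H_k(g-1)$ plus $N_{g-1}$ times a hub-to-hub distance''; using the identity above, the opposite replica contributes $2^{g}N_{g-1}-H_k(g-1)$, and the correct homogeneous coefficient is $2$, not the $4$ you propose (e.g.\ $H(1)=4$, $H(2)=24$). So the program is repairable---track the hub-distance distribution (or equivalent auxiliary sums) alongside $S_g$ and $H_k(g)$ and use the alignment identity---but as written your recursions do not close and would not reproduce the stated formula.
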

For large $g$, $\mu(\mathcal{F}_g) \sim \frac{11}{21}2^g$.
On the other hand, when $g$ is very large, $N_g \sim \frac{2}{3}\cdot 4^g$. Thus, $\mu(\mathcal{F}_g)$ grows as a square root of the number of vertices, implying that the network is ``large-world",  instead of small-world. Note that although most real networks are small-world, there exist some ``large-world" networks, e.g. global network of avian influenza outbreaks \cite{SmXuZhZhSuLu08}.

In addition, the network $\mathcal{F}_g$ is fractal and disassortative.
\begin{proposition}\cite{HiBe06}
The network $\mathcal{F}_g$ is fractal with the fractal dimension being 2.
\end{proposition}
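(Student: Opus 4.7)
The plan is to establish the scaling $N_B(l_B) \sim l_B^{-2}$ by combining a distance-doubling property of the edge-to-quadrangle substitution with the self-similar decomposition given in Definition~\ref{Def:LW02}.

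First, I would show by induction on $g$ that every subdivision step doubles all distances between previously existing vertices. When an edge $\{v_i,v_j\}$ is replaced by a quadrangle with $v_i$ and $v_j$ as opposite corners, $v_i$ and $v_j$ are now at distance $2$ via either of the two new paths of length $2$, and no shortcut can arise because each newly inserted vertex has degree $2$. Propagating this edge-by-edge along any geodesic of $\mathcal{F}_{g-1}$ gives $d_{\mathcal{F}_g}(u,v)=2\,d_{\mathcal{F}_{g-1}}(u,v)$ for old vertices $u,v$, and in particular $\mathrm{diam}(\mathcal{F}_g)=2\cdot\mathrm{diam}(\mathcal{F}_{g-1})=2^g$.

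Second, I would iterate Definition~\ref{Def:LW02} to decompose $\mathcal{F}_g$ into $4^{g-k}$ copies of $\mathcal{F}_k$, glued only at hub vertices. Each such copy has diameter $2^k$, so choosing $l_B=2^k+1$ gives the covering upper bound $N_B(l_B)\le 4^{g-k}$. A matching lower bound would come from a complementary induction showing that any ball of radius $l_B$ in $\mathcal{F}_g$ contains at most $O(l_B^2)$ vertices; combined with $N_g\sim\tfrac{2}{3}4^g$, this forces $N_B(l_B)\ge \Omega(4^{g-k})$. Hence $N_B(l_B)\sim l_B^{-2}$ and $d_B=2$.

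The main obstacle is the lower bound on $N_B$: one has to rule out that an ``efficient'' box straddles several $\mathcal{F}_k$-subcopies in a way that reduces the count, which uses the fact that a hub vertex is shared by only two subcopies at each merging step. A cleaner fallback, should this combinatorial bookkeeping become unwieldy, is to invoke the asymptotics already derived in the excerpt: since $N_g\sim 4^g$ while $\mathrm{diam}(\mathcal{F}_g)\sim 2^g$, the ratio $\ln N_g/\ln\mathrm{diam}(\mathcal{F}_g)\to 2$, which pins down the fractal dimension in the sense defined in Section~2.2.
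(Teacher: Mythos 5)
This proposition is quoted from Hinczewski and Berker \cite{HiBe06}; the paper itself gives no proof, so your sketch can only be judged on its own merits. The first half is sound and is the standard argument: since each replaced edge loses its original edge and the inserted vertices have degree $2$, distances between pre-existing vertices exactly double, giving $\mathrm{diam}(\mathcal{F}_g)\sim 2^g$, and iterating Definition~\ref{Def:LW02} covers $\mathcal{F}_g$ by $4^{g-k}$ copies of $\mathcal{F}_k$ of diameter $2^k$, so $N_B(l_B)\le 4^{g-k}$ for $l_B\approx 2^k$. This yields one direction of the scaling.

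The genuine gap is your lower bound. The intermediate claim you propose -- that every ball of radius $l_B$ contains $O(l_B^2)$ vertices -- is false in $\mathcal{F}_g$, because the network is strongly inhomogeneous: the top hub has degree $2^g$, so already the radius-$1$ ball around it has $2^g+1$ vertices, and more generally a ball of radius $r$ centred at a hub contains on the order of $2^g r$ vertices (at $r=\mathrm{diam}$ this correctly recovers $N_g\sim\tfrac23 4^g$). Consequently no $g$-independent $O(l_B^2)$ volume bound exists, and the ``total vertices divided by maximal box size'' argument only gives $N_B\gtrsim 2^{g-k}$, far short of the needed $\Omega(4^{g-k})$. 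A correct lower bound has to exploit that only few boxes can be hub-centred: e.g.\ charge each box to the $\mathcal{F}_k$-subcopies it meets and show that a box of diameter $<2^k$ intersects a bounded number of subcopies unless it contains one of the comparatively rare vertices of degree $\ge 2^{k+1}$ (the gluing hubs of the level-$k$ decomposition, which a single vertex may share among as many as $2^{g-k}$ subcopies); this is precisely the bookkeeping your sketch defers. Your fallback does not close the gap either: the definition in Section~2.2 is the box-covering scaling $N_B\sim l_B^{-d_B}$, and the ratio $\ln N_g/\ln\mathrm{diam}(\mathcal{F}_g)\to 2$ is a similarity-dimension heuristic that only restates the covering upper bound; it does not exclude coverings more efficient than the subcopy covering, hence it cannot by itself pin down $d_B=2$.
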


\begin{proposition}\cite{ZhZhZo07}
In network $\mathcal{F}_g$, $g\geq 1$, the average degree of the neighboring vertices for vertices with degree $d$ is
$$k_{\rm nn} (d) =
\begin{cases}
2g, & d = 2,\\
2, & d > 2.
\end{cases}
$$
\label{corrFt0}
\end{proposition}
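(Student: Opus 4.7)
The plan is to exploit the recursive construction of Definition~\ref{Def:LW01}. Each iteration replaces every edge $\{v_i,v_j\}\in\mathcal{E}(\mathcal{F}_{g-1})$ by a $4$-cycle in which $v_i,v_j$ sit at opposite corners and two new degree-$2$ vertices $w_1,w_2$ fill the remaining corners. Two structural facts fall out immediately: first, every old vertex has its degree in $\mathcal{F}_g$ equal to twice its degree in $\mathcal{F}_{g-1}$, so the degree-$2$ vertices of $\mathcal{F}_g$ are precisely those introduced at iteration $g$; second, the neighbors of any old vertex in $\mathcal{F}_g$ are exclusively the freshly introduced $w$'s, because the original edges of $\mathcal{F}_{g-1}$ themselves are destroyed in the replacement.

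The case $d>2$ is then immediate: by the degree classification recorded just before the proposition, any vertex of degree $d>2$ must have been created at some iteration $g_i<g$, so each of its neighbors is a new vertex of degree $2$, giving $k_{\rm nn}(d)=2$. For $d=2$ I would compute the average directly. A degree-$2$ vertex $w$ comes from some edge $\{v_i,v_j\}\in\mathcal{E}(\mathcal{F}_{g-1})$ and its two neighbors $v_i,v_j$ have $\mathcal{F}_g$-degree $2\,d_{\mathcal{F}_{g-1}}(v_i)$ and $2\,d_{\mathcal{F}_{g-1}}(v_j)$. Since each edge of $\mathcal{F}_{g-1}$ yields two such $w$'s and $\mathcal{F}_g$ has $2E_{g-1}$ vertices of degree $2$, combining with the handshake identity $\sum_{\{v_i,v_j\}\in\mathcal{E}}(d(v_i)+d(v_j))=\sum_v d(v)^2$ gives
\begin{equation*}
k_{\rm nn}(2)\;=\;\frac{1}{2E_{g-1}}\sum_{\{v_i,v_j\}\in\mathcal{E}(\mathcal{F}_{g-1})}2\cdot\frac{2\,d_{\mathcal{F}_{g-1}}(v_i)+2\,d_{\mathcal{F}_{g-1}}(v_j)}{2}\;=\;\frac{1}{E_{g-1}}\sum_{v\in\mathcal{V}(\mathcal{F}_{g-1})}d_{\mathcal{F}_{g-1}}(v)^{2}.
\end{equation*}

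What remains is a short evaluation of $\sum_v d_{\mathcal{F}_{g-1}}(v)^2$ from the known degree sequence: $L_v(1)=4$ vertices of degree $2^{g-1}$, and, for $g_i=2,\dots,g-1$, $L_v(g_i)=2\cdot 4^{g_i-1}$ vertices of degree $2^{g-g_i}$. Collecting terms yields $4\cdot 4^{g-1}+(g-2)\cdot 2\cdot 4^{g-1}=2g\cdot 4^{g-1}=2g\,E_{g-1}$, so $k_{\rm nn}(2)=2g$, closing the proposition (the base $g=1$ reduces to the $4$-cycle $\mathcal{F}_1$ and is checked by inspection). I do not foresee a genuine obstacle; the only point that calls for a little care is verifying that no edge of $\mathcal{F}_{g-1}$ survives as an edge of $\mathcal{F}_g$, so that an old vertex really sees \emph{only} new degree-$2$ neighbors, but this is immediate from the replacement rule of Fig.~\ref{Fig.11}.
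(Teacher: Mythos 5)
Your argument is correct. Note first that the paper itself offers no proof of this proposition---it is imported from the cited reference \cite{ZhZhZo07}---so the relevant comparison is with the paper's proof of the analogous statement for $\mathcal{H}_g$ (Proposition~\ref{corrHg0}), which proceeds by classifying, for a vertex created at iteration $g_i$, the links it makes to older vertices, to later-generation vertices, and to its simultaneously created partner, and then summing. Your route is different and, for $\mathcal{F}_g$, simpler: you use the structural observation that every edge of $\mathcal{F}_g$ (for $g\geq 2$) joins an iteration-$g$ vertex of degree $2$ to an older vertex, which settles $k_{\rm nn}(d)=2$ for $d>2$ at once, and you reduce $k_{\rm nn}(2)$ to $\frac{1}{E_{g-1}}\sum_{v} d_{\mathcal{F}_{g-1}}(v)^2$ via the handshake identity and the degree-doubling rule $d_{\mathcal{F}_g}=2\,d_{\mathcal{F}_{g-1}}$. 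Your evaluation of that sum from the degree sequence of $\mathcal{F}_{g-1}$ ($4$ vertices of degree $2^{g-1}$, and $2\cdot 4^{g_i-1}$ vertices of degree $2^{g-g_i}$ for $2\leq g_i\leq g-1$) is correct, giving $2g\cdot 4^{g-1}=2g\,E_{g-1}$, and the base cases $g=1,2$ check out by inspection; your result is also consistent with the paper's later use of $k_{\rm nn}$ in the proof of Proposition~\ref{corrFt}, where $\sum_m j_mk_m=g\cdot 4^{g+1}$. One small point worth keeping explicit (you do flag it) is that all degree-$2$ vertices of $\mathcal{F}_g$, $g\geq 2$, are exactly the iteration-$g$ vertices and that since they all share the same degree, the per-vertex averaging you use coincides with the total-degree-ratio convention employed in the paper's proof of Proposition~\ref{corrHg0}.
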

Thus, network $\mathcal{F}_g$ is disassortative, which can also be seen from its  Pearson correlation coefficient.

\begin{proposition}
The Pearson  correlation coefficient of the network $\mathcal{F}_g$, $g\geq 1$, is
$$r(\mathcal{F}_g)= \frac{(g-1)^2}{-3\times2^g+g^2+2g+3}.$$
\label{corrFt}
\end{proposition}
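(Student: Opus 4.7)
The plan is to substitute the known degree sequence of $\mathcal{F}_g$ directly into the defining formula (\ref{rNe}). For bookkeeping, I abbreviate the three sums appearing there as
\begin{equation*}
S_1=\sum_{i=1}^{E}\tfrac12(j_i+k_i),\qquad
S_2=\sum_{i=1}^{E}j_i k_i,\qquad
S_3=\sum_{i=1}^{E}\tfrac12(j_i^2+k_i^2),
\end{equation*}
so that $r(\mathcal{F}_g)=(E\,S_2-S_1^2)/(E\,S_3-S_1^2)$ with $E=4^g$.

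The first step is to pin down the structure of the edges of $\mathcal{F}_g$. Because Definition~\ref{Def:LW01} replaces every edge of $\mathcal{F}_{g-1}$ by a fresh quadrangle, the old edges are destroyed, so every one of the $4^g$ edges in $\mathcal{F}_g$ is born at iteration $g$ and joins an ``old'' vertex (created at some iteration $g_i\le g-1$, hence of degree $d_i(g)=2^{g-g_i+1}$) to one of the two ``new'' vertices inserted at iteration $g$, which necessarily has degree $2$. Combined with $L_v(1)=4$ and $L_v(g_i)=2\cdot 4^{g_i-1}$ for $g_i\ge 2$, this completely determines the multiset of edge-endpoint degree pairs.

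The second step is to evaluate $S_1$, $S_2$, $S_3$ in closed form. Using the standard identities $\sum_e(j+k)=\sum_v d(v)^2$ and $\sum_e(j^2+k^2)=\sum_v d(v)^3$, each of $S_1$ and $S_3$ becomes $\tfrac12\sum_{g_i=1}^{g}L_v(g_i)\,d_i(g)^p$ with $p=2,3$; both collapse via a geometric series in $g_i$ to expressions of the form $c_1\, 4^g+c_2\, g\, 4^g$ and $c_3\, 8^g+c_4\, 4^g$ respectively. For $S_2$ I use that $k_e=2$ for \emph{every} edge, so $S_2=2\sum_{g_i=1}^{g-1}L_v(g_i)\,d_i(g)^2$, which is the same kind of sum but with the range truncated at $g-1$ and therefore yields a clean multiple of $g\cdot 4^g$.

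The final step is pure algebra: substitute the closed forms into (\ref{rNe}), factor the common $16^g$ out of numerator and denominator, and check that the remaining polynomials in $g$ and $2^g$ coincide with $(g-1)^2$ and $-3\cdot 2^g+g^2+2g+3$. The only real obstacle is careful bookkeeping, in particular the asymmetric role of $g_i=1$ (where $L_v(1)=4$ rather than $2\cdot 4^{g_i-1}$) and the truncation of the range at $g-1$ in $S_2$, together with a careful handling of the geometric sum $\sum_{g_i=2}^{g}2^{-g_i}$ that arises in $S_3$. No deeper idea is needed: disassortativity was already captured qualitatively by Proposition~\ref{corrFt0}, and the present proposition simply quantifies it.
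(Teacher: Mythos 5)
Correct, and essentially the paper's own approach: both arguments simply evaluate the three sums in Eq.~\eqref{rNe} from the degree sequence $L_v(g_i)$, $d_i(g)=2^{g-g_i+1}$, obtaining $\sum_i j_ik_i=g\cdot 4^{g+1}$, $\sum_i(j_i+k_i)=(g+1)2^{2g+1}$, $\sum_i(j_i^2+k_i^2)=2^{2g+1}(3\cdot 2^g-2)$, and then substitute. The only cosmetic difference is that you get the product sum directly from the observation that (for $g\ge 2$) every edge joins a degree-$2$ vertex created at the last iteration to an older vertex, while the paper encodes the same structural fact via $k_{\rm nn}$ from Proposition~\ref{corrFt0}; the case $g=1$, where your edge description does not apply, is degenerate anyway since the stated expression reduces to $0/0$ there, just as in the paper.
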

\begin{proof}
We first calculate the following three summations over all $E_g$ edges in $\mathcal{F}_g$.
\begin{align*}
\sum \limits_{m=1}^{E_g} j_m k_m  &=  \frac{1}{2} \sum \limits_{m=1}^{E_g} (j_m k_m + k_m j_m)
 =  \frac{1}{2} \sum \limits_{i=1}^{N_g}\left[ d_i(g)\sum\limits_{(i,j)\in \mathcal{E}(\mathcal{F}_g)} d_j(g) \right ], \\
&= \frac{1}{2} \sum \limits_{g_i=1}^{g} L_v(g_i) d^2_i(g) k_{\rm nn}(d_i(g))
 =  g\cdot 4^{g+1},
\end{align*}
\begin{equation*}
\sum \limits_{m=1}^{E_g} (j_m  + k_m)  = \sum \limits_{i=1}^{N_g}d_i^2(g)
 =  \sum \limits_{g_i=1}^{g} L_v(g_i) d^2_i(g)
 =  (g+1)2^{2g+1},
\end{equation*}
\begin{equation*}
\sum \limits_{m=1}^{E_g} (j_m^2  + k_m^2) = \sum \limits_{i=1}^{N_g}d_i^3(g)
 =  \sum \limits_{g_i=1}^{g} L_v(g_i) d^3_i(g)
 =  2^{2g+1}(3\times 2^g-2).
\end{equation*}
Inserting these results into Eq.~\eqref{rNe} and considering $E_g=4^g$ yields the result. 
\end{proof}



\subsection{Size of maximum matchings}

Although for a general graph, the size of maximum matchings is not easy to determine, for the fractal scale-free network $\mathcal{F}_g$, we can obtain it by using its self-similar structure.
\begin{theo}\label{theomm}
The matching number of network $\mathcal{F}_g$ is $\frac{4^g+8}{6}$.
\end{theo}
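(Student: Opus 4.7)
My plan is to exploit a natural bipartite structure of $\mathcal{F}_g$ arising from its inductive construction, and then to combine a trivial cardinality upper bound with Hall's marriage theorem.

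The key structural observation is that $\mathcal{F}_g$ is bipartite for every $g\geq 1$. The base case $\mathcal{F}_1=C_4$ is clear, and for $g\geq 2$ I would set $A_g:=\mathcal{V}(\mathcal{F}_{g-1})$ and $B_g:=\mathcal{V}(\mathcal{F}_g)\setminus A_g$. By Definition~\ref{Def:LW01}, every edge of $\mathcal{F}_g$ belongs to a quadrangle that replaces some edge $\{v_i,v_j\}$ of $\mathcal{F}_{g-1}$, and each of the four edges of that quadrangle joins a diagonal ``old'' vertex ($v_i$ or $v_j$, lying in $A_g$) to one of the two fresh ``side'' vertices (lying in $B_g$). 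Hence $(A_g,B_g)$ is a bipartition, and a direct count gives
\[
|A_g|=N_{g-1}=\tfrac{2}{3}(4^{g-1}+2)=\tfrac{4^g+8}{6},\qquad |B_g|=2E_{g-1}=2\cdot 4^{g-1},
\]
so that $|A_g|\leq|B_g|$ whenever $g\geq 2$.

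The upper bound is then immediate: in any bipartite graph the matching number cannot exceed the smaller part, so it is at most $|A_g|=\frac{4^g+8}{6}$. (Equivalently, applying the Tutte-Berge formula to the deficit set $S=A_g$ shows that $\mathcal{F}_g-S$ consists of $|B_g|$ isolated odd components and yields the same bound, reflecting that new vertices are mutually non-adjacent.)

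For the matching lower bound I would verify Hall's condition on $A_g$. Every $w\in B_g$ is a side vertex of a unique replacement quadrangle and has exactly the two diagonal endpoints of the corresponding edge $e_w\in\mathcal{E}(\mathcal{F}_{g-1})$ as its neighbors in $\mathcal{F}_g$. Therefore, for any $S\subseteq A_g$,
\[
|N_{\mathcal{F}_g}(S)| \;=\; 2\,e(S),\qquad e(S):=\bigl|\{e\in\mathcal{E}(\mathcal{F}_{g-1}):\,e\cap S\neq\emptyset\}\bigr|.
\]
A standard double count of edge-endpoints gives $\sum_{v\in S}\dgr_{\mathcal{F}_{g-1}}(v)\leq 2e(S)$, and the construction ensures that every vertex of $\mathcal{F}_{g-1}$ has degree at least $2$ ($\mathcal{F}_1=C_4$ is $2$-regular, and each subsequent iteration doubles the degrees of old vertices while inserting only new vertices of degree $2$). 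Hence $|N_{\mathcal{F}_g}(S)|\geq\sum_{v\in S}\dgr_{\mathcal{F}_{g-1}}(v)\geq 2|S|\geq|S|$, so Hall's condition holds and an $A_g$-saturating matching of size $|A_g|=\frac{4^g+8}{6}$ exists, matching the upper bound.

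The main conceptual step is just noticing the bipartite structure of $\mathcal{F}_g$; once that is in place, the matching-number computation is a short application of Hall's theorem using only the minimum degree $2$ of $\mathcal{F}_{g-1}$. The sole bookkeeping subtlety is the base case $g=1$, where the recursive argument is vacuous, but $\mathcal{F}_1=C_4$ has matching number $2=\frac{4+8}{6}$ by direct inspection.
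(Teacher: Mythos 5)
Your proof is correct, and it takes a genuinely different route from the paper. The paper exploits the self-similar decomposition of $\mathcal{F}_{g+1}$ into four copies of $\mathcal{F}_g$, setting up coupled recursions for the matching numbers $a_g$, $b_g$, $c_g$ of $\mathcal{F}_g\setminus\{v_1,v_2\}$, $\mathcal{F}_g\setminus\{v_1\}$ and $\mathcal{F}_g$ via a case analysis of which hub vertices are covered, and then solves the recursions. You instead observe that the edge-replacement rule deletes each old edge (consistent with $E_g=4E_{g-1}$ and with the degree-doubling $d_i(g)=2d_i(g-1)$) and joins only old vertices to new ones, so $\mathcal{F}_g$ is bipartite with parts $A_g=\mathcal{V}(\mathcal{F}_{g-1})$ of size $N_{g-1}=\frac{4^g+8}{6}$ and the independent set $B_g$ of new vertices; the trivial bound gives matching number at most $|A_g|$, and your Hall-condition computation $|N(S)|=2e(S)\geq\sum_{v\in S}\dgr_{\mathcal{F}_{g-1}}(v)\geq 2|S|$ (using minimum degree $2$ in $\mathcal{F}_{g-1}$) gives an $A_g$-saturating matching, with the base case $g=1$ checked directly. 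All steps hold, so this is a clean, more elementary argument that avoids the configuration case analysis entirely. What the paper's heavier recursive machinery buys is the auxiliary quantities $a_g$ and $b_g$ and the identification of the optimal hub configurations, which are reused immediately afterwards to count the number of maximum matchings (Theorem 3.7); your approach yields the matching number more transparently but would need to be supplemented to recover that enumeration.
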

\begin{proof}
In order to determine the size of maximum  matchings of network $\mathcal{F}_g$, denoted by $c_g$, we define some useful quantities. Let $a_g$ be the size of maximum matchings of $\mathcal{F}_g\setminus \{v_1,v_2\}$, and let
$b_g$ denote the size of maximum matchings of $\mathcal{F}_g\setminus \{v_1\}$, which equals the size of maximum matchings of  $\mathcal{F}_g\setminus \{v_2\}$. We now determine the three quantities $a_g$, $b_g$, and $c_g$ by using the self-similar architecture of the network.

Figs.~\ref{ageq},~\ref{bgeq},~\ref{cgeq} show, respectively, all the possible configurations of maximum  matchings of network $\mathcal{F}_{g+1}\setminus \{v_1,v_2\}$, $\mathcal{F}_{g+1}\setminus \{v_1\}$, and $\mathcal{F}_{g+1}$. Note that in Figs.~\ref{ageq},~\ref{bgeq},~\ref{cgeq}, only the hub vertices are shown explicitly, with solid line hubs being covered and dotted line hubs being vacant. From Figs.~\ref{ageq},~\ref{bgeq},~\ref{cgeq}, we can establish recursive relations for $a_g$, $b_g$, and $c_g$, given by
\begin{align} \label{mmeq}
a_{g+1} &= 2a_g+2b_g\,, \notag\\
b_{g+1} &= {\rm max}\{2a_g+b_g+c_g ,a_g+3b_g \}\,, \notag\\
c_{g+1} &= {\rm max}\{2a_g+2c_g,4b_g,a_g+2b_g+c_g \}\,.
\end{align}
With initial condition $a_1=0$, $b_1=1$, and $c_1=2$, the above equations are solved to obtain $a_g = \frac{4^g-4}{6}$, $b_g = \frac{4^g+2}{6}$ and $c_g = \frac{4^g+8}{6}$.
\end{proof}


\begin{figure}
    \centering
    \begin{subfigure}[b]{0.2\textwidth}
        \includegraphics[width=\textwidth]{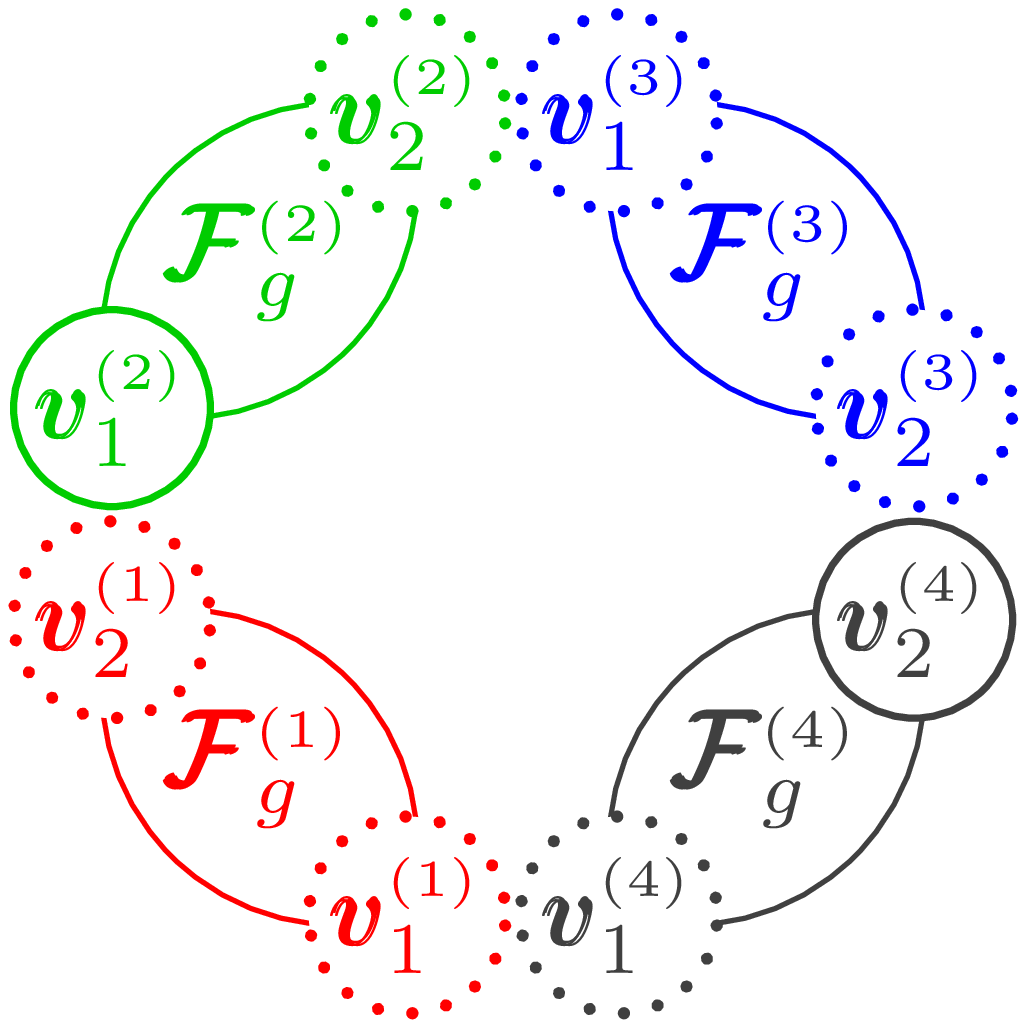}
    \end{subfigure}
    \begin{subfigure}[b]{0.2\textwidth}
        \includegraphics[width=\textwidth]{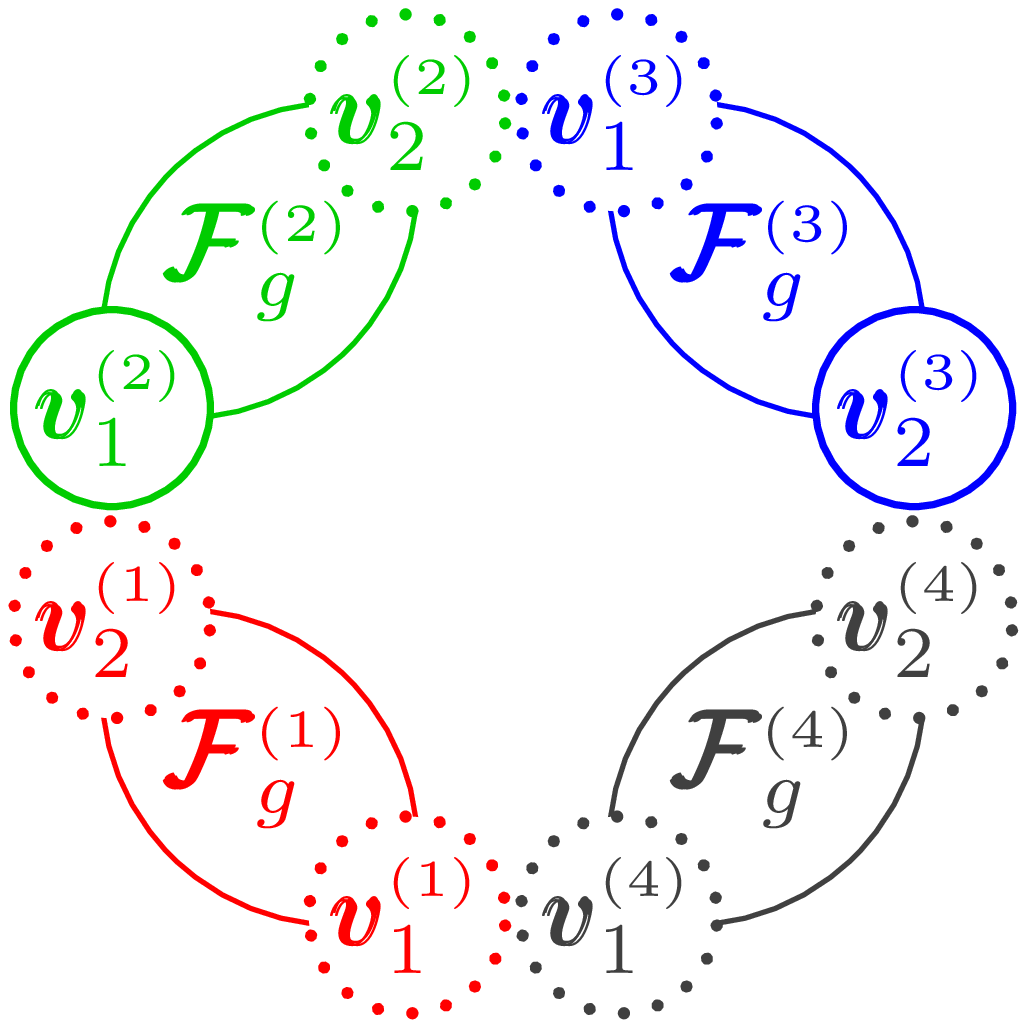}
    \end{subfigure}
    \begin{subfigure}[b]{0.2\textwidth}
        \includegraphics[width=\textwidth]{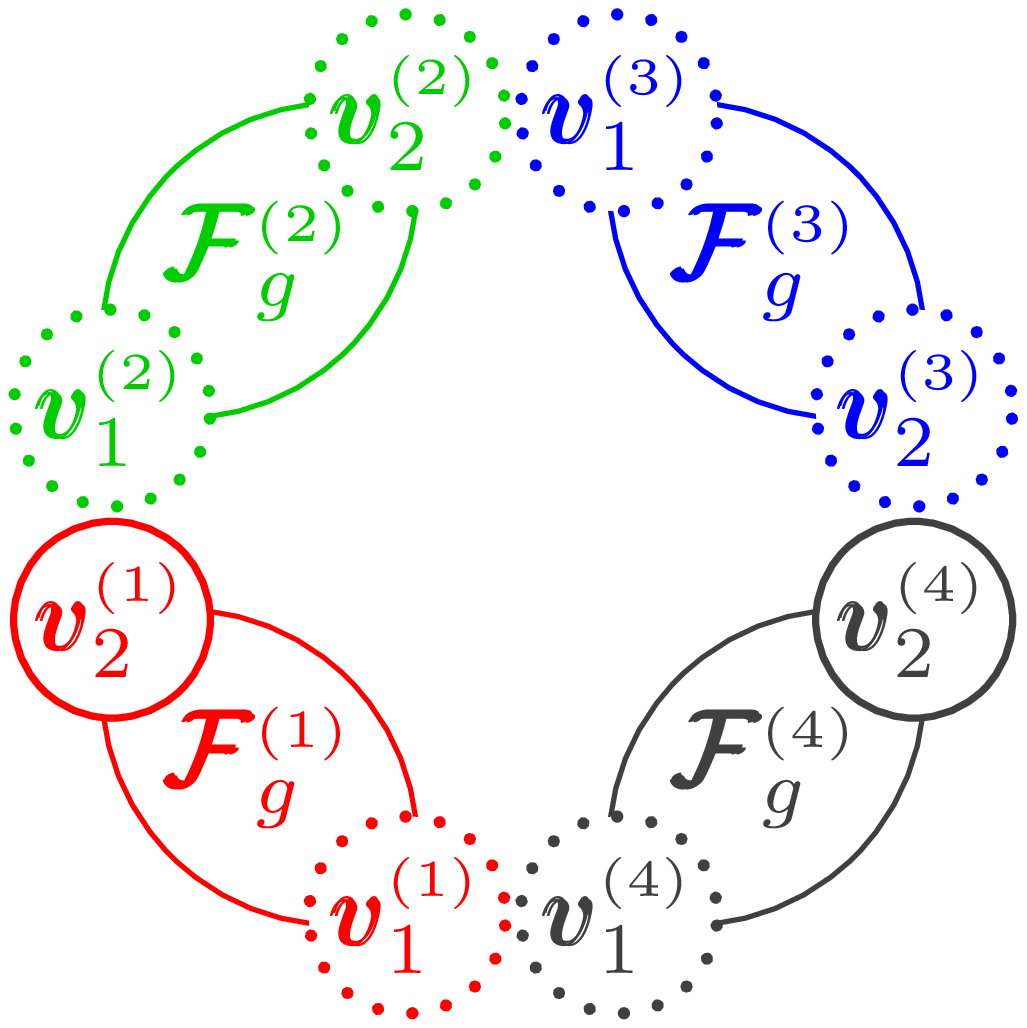}
    \end{subfigure}
    \begin{subfigure}[b]{0.2\textwidth}
        \includegraphics[width=\textwidth]{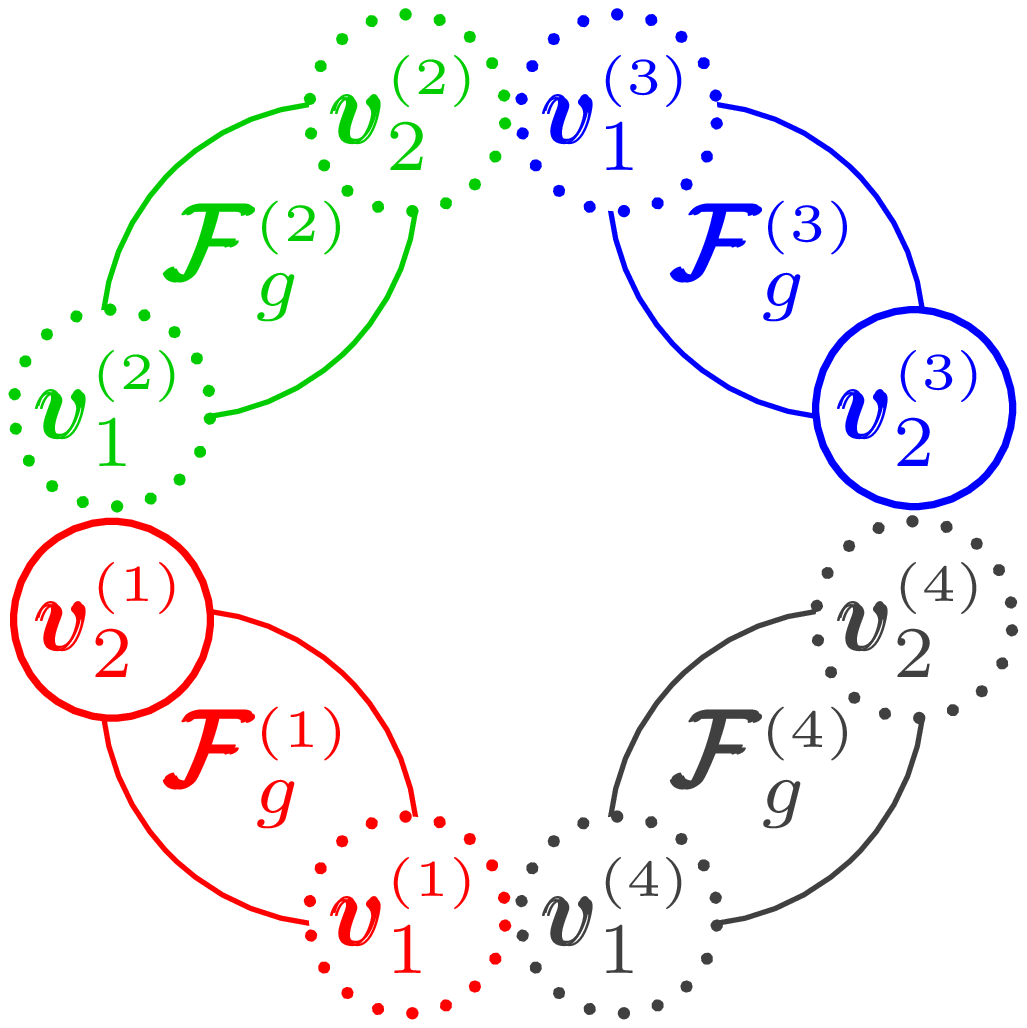}
    \end{subfigure}
    \caption{Illustration of all possible configurations of maximum matchings for $\mathcal{F}_{g+1} \setminus \{v_1,v_2\}$. {For each maximum matching belonging to the leftmost  configuration, the number of edges in $\mathcal{F}^{(1)}_{g}$, $\mathcal{F}^{(2)}_{g}$, $\mathcal{F}^{(3)}_{g}$,  and $\mathcal{F}^{(4)}_{g}$, is $a_{g}$, $b_{g}$, $a_{g}$, and $b_{g}$, respectively. Thus, the size of each maximum matching in the leftmost configuration is  $2a_{g}+2b_{g}$. Analogously, the size of each maximum matching in the other three configurations is also $2a_{g}+2b_{g}$.}}
  \label{ageq}
  \end{figure}

\begin{figure}
\begin{subfigure}[b]{\textwidth}
    \centering
    \begin{subfigure}[b]{0.2\textwidth}
        \includegraphics[width=\textwidth]{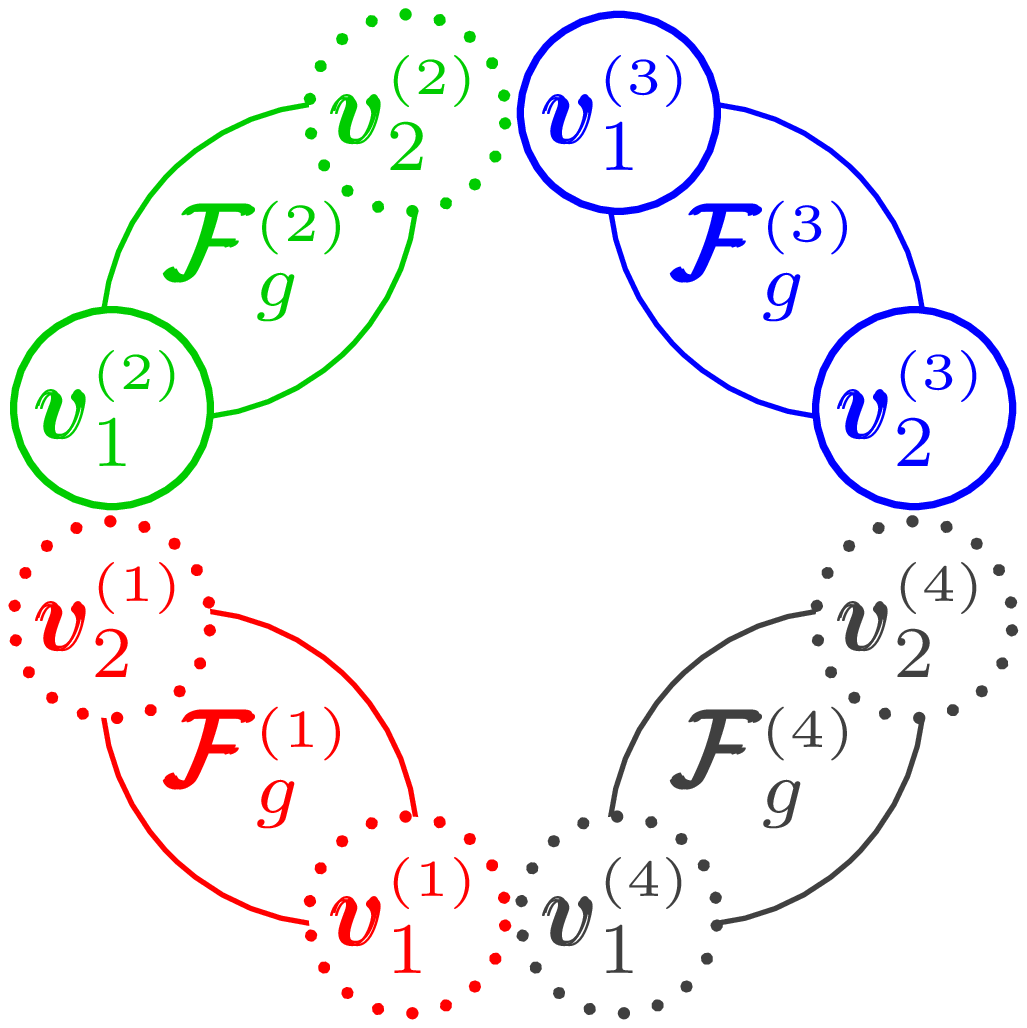}
    \end{subfigure}
    \begin{subfigure}[b]{0.2\textwidth}
        \includegraphics[width=\textwidth]{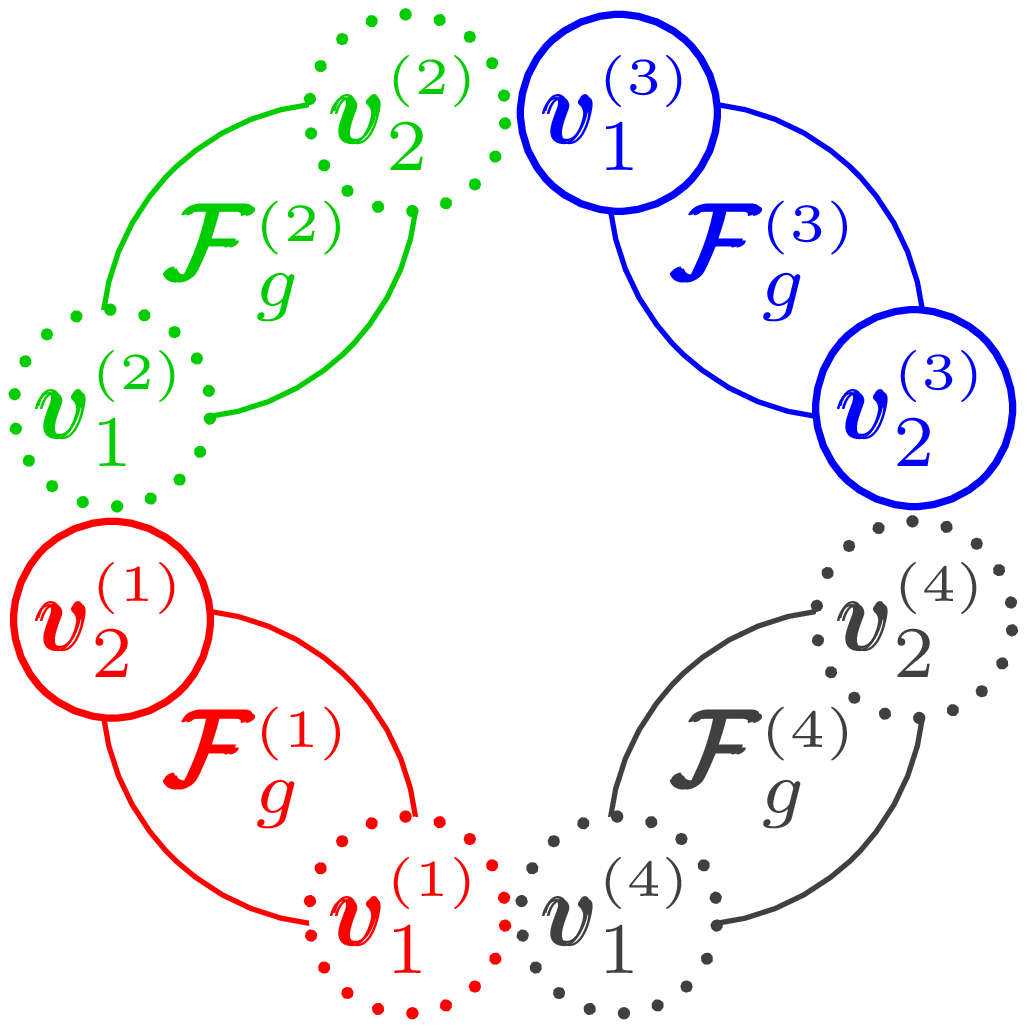}
    \end{subfigure}
    \begin{subfigure}[b]{0.2\textwidth}
        \includegraphics[width=\textwidth]{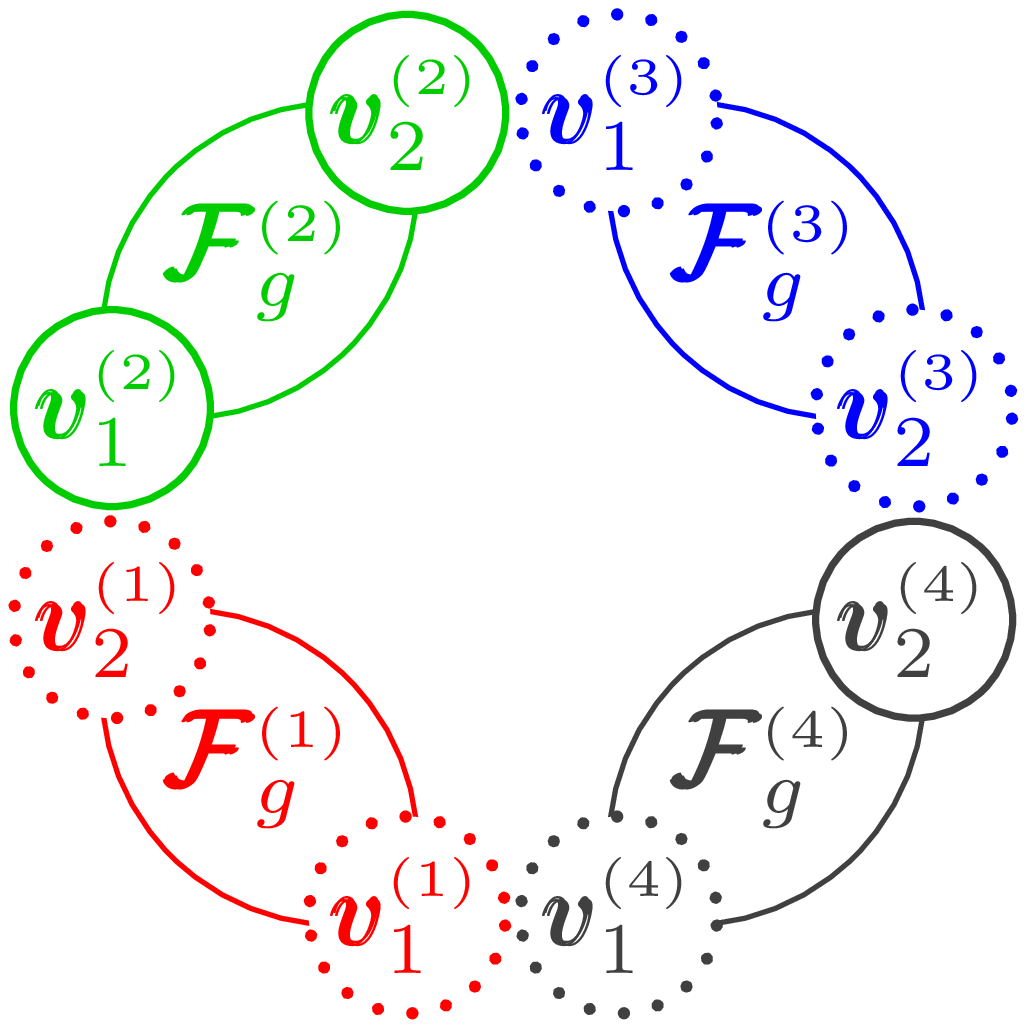}
    \end{subfigure}
    \begin{subfigure}[b]{0.2\textwidth}
    	 \includegraphics[width=\textwidth]{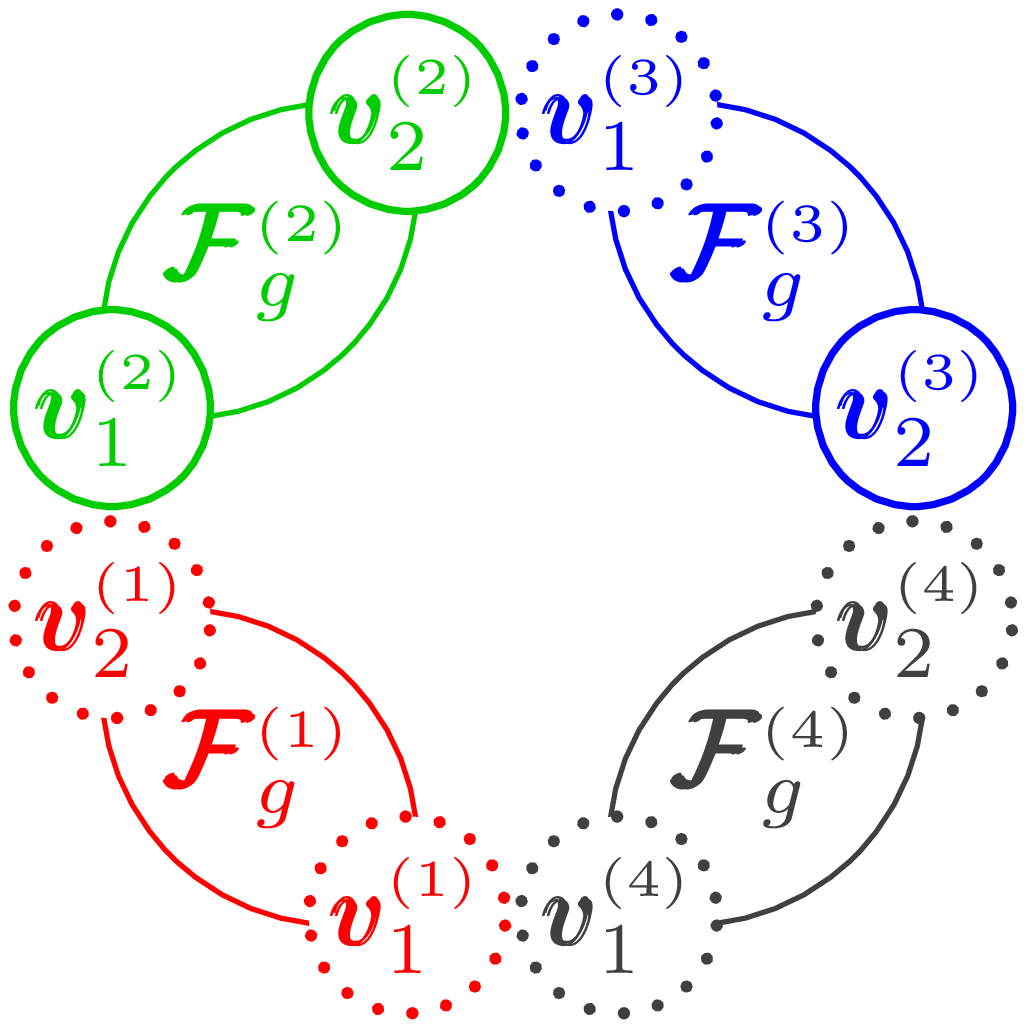}
    \end{subfigure}
    \caption{Configurations of maximum matchings for $\mathcal{F}_{g+1}\setminus \{v_1\}$ with size \(2a_g+b_g+c_g\).}
\end{subfigure}
\vfill
\null
\begin{subfigure}[b]{\textwidth}
    \centering
    \begin{subfigure}[b]{0.2\textwidth}
        \includegraphics[width=\textwidth]{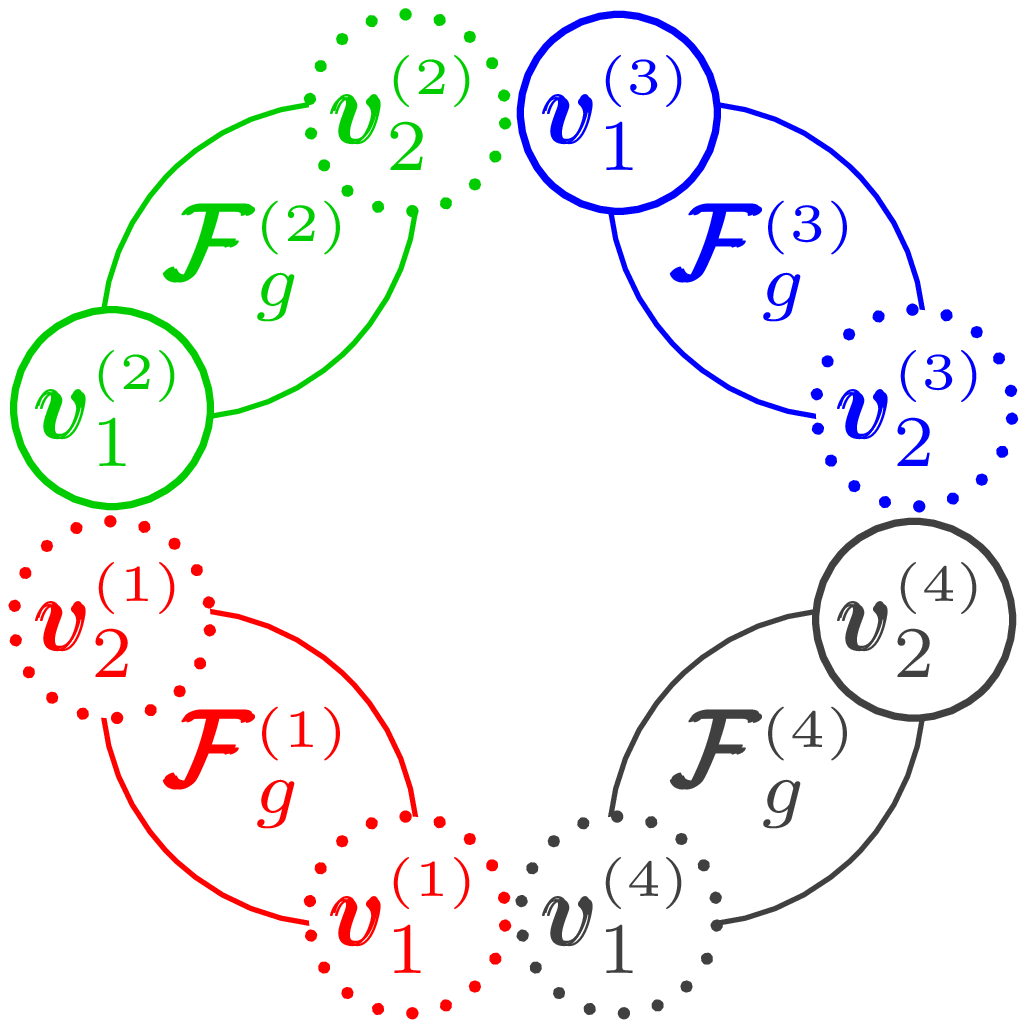}
    \end{subfigure}
    \begin{subfigure}[b]{0.2\textwidth}
        \includegraphics[width=\textwidth]{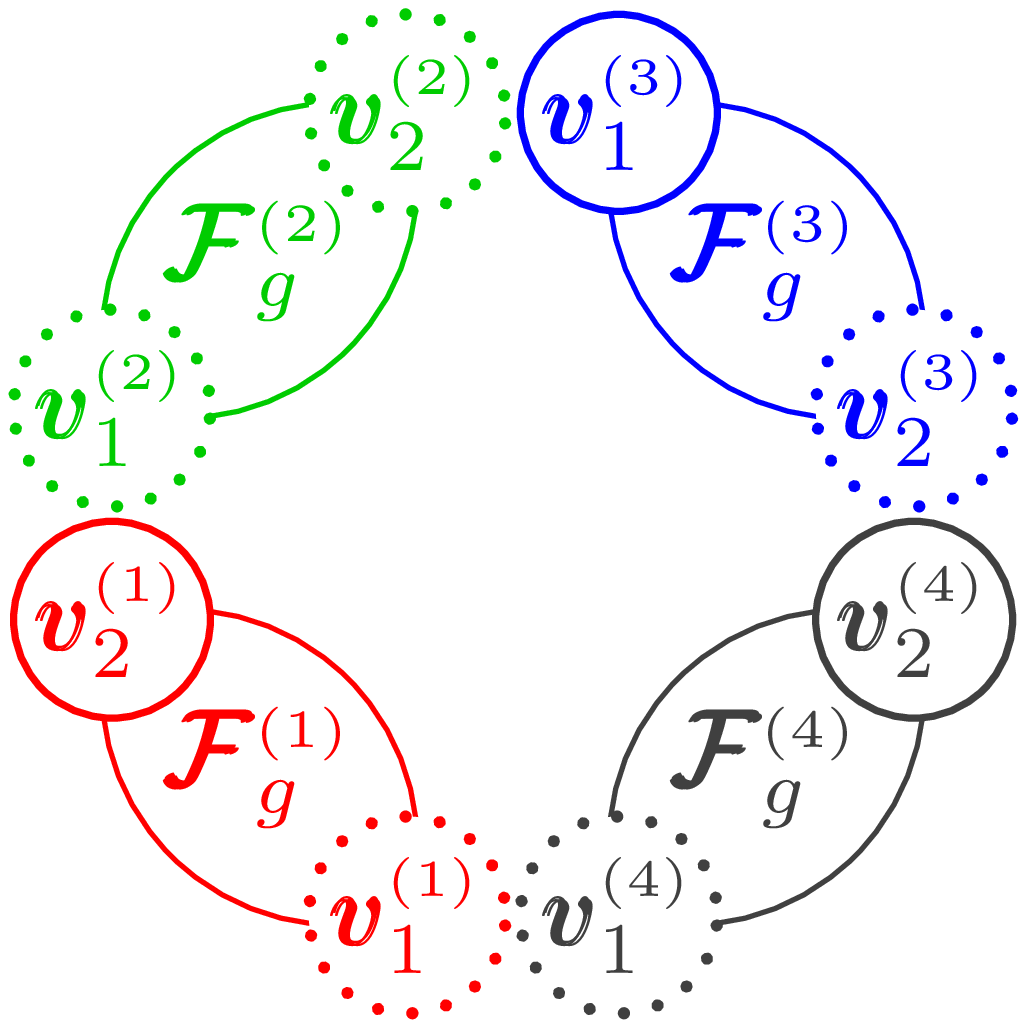}
    \end{subfigure}
    \begin{subfigure}[b]{0.2\textwidth}
        \includegraphics[width=\textwidth]{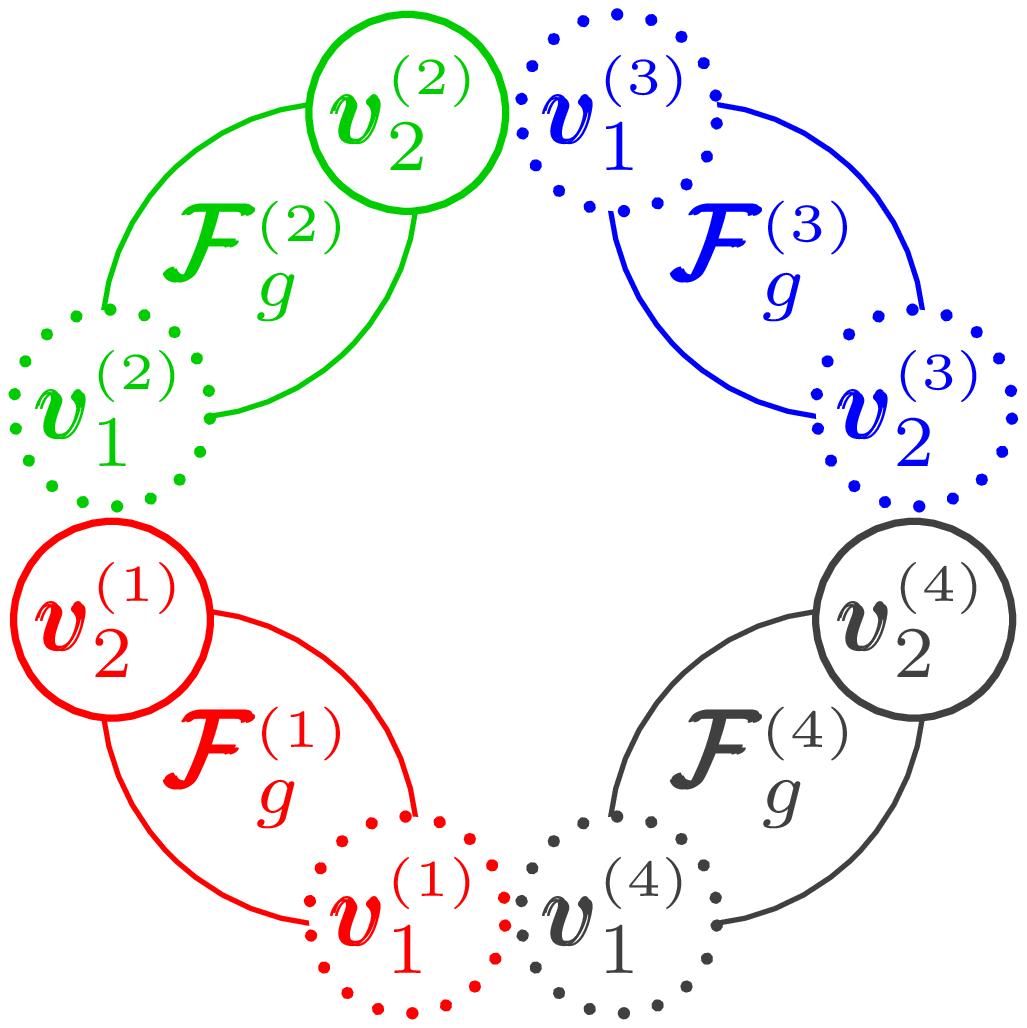}
    \end{subfigure}
    \begin{subfigure}[b]{0.2\textwidth}
        \includegraphics[width=\textwidth]{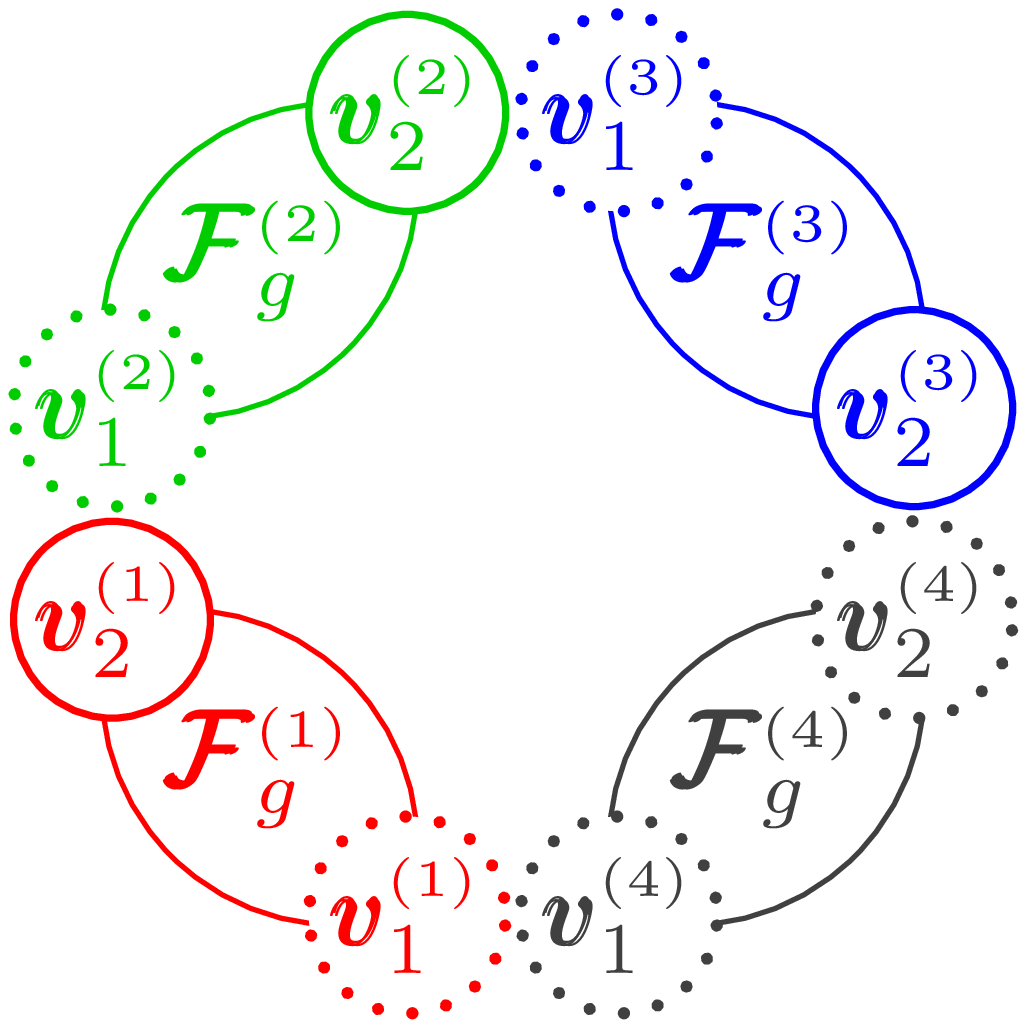}
    \end{subfigure}
    \caption{Configurations of maximum matchings for $\mathcal{F}_{g+1}\setminus \{v_1\}$ with size \(a_g+3b_g\).}
\end{subfigure}
\caption{Illustration of all possible configurations of maximum matchings for $\mathcal{F}_{g+1}\setminus \{v_1\}$. }
\label{bgeq}
\end{figure}

\begin{figure}
\centering
\begin{subfigure}[b]{0.49\textwidth}
	\centering
	\begin{subfigure}[b]{0.4\textwidth}
		\includegraphics[width=\textwidth]{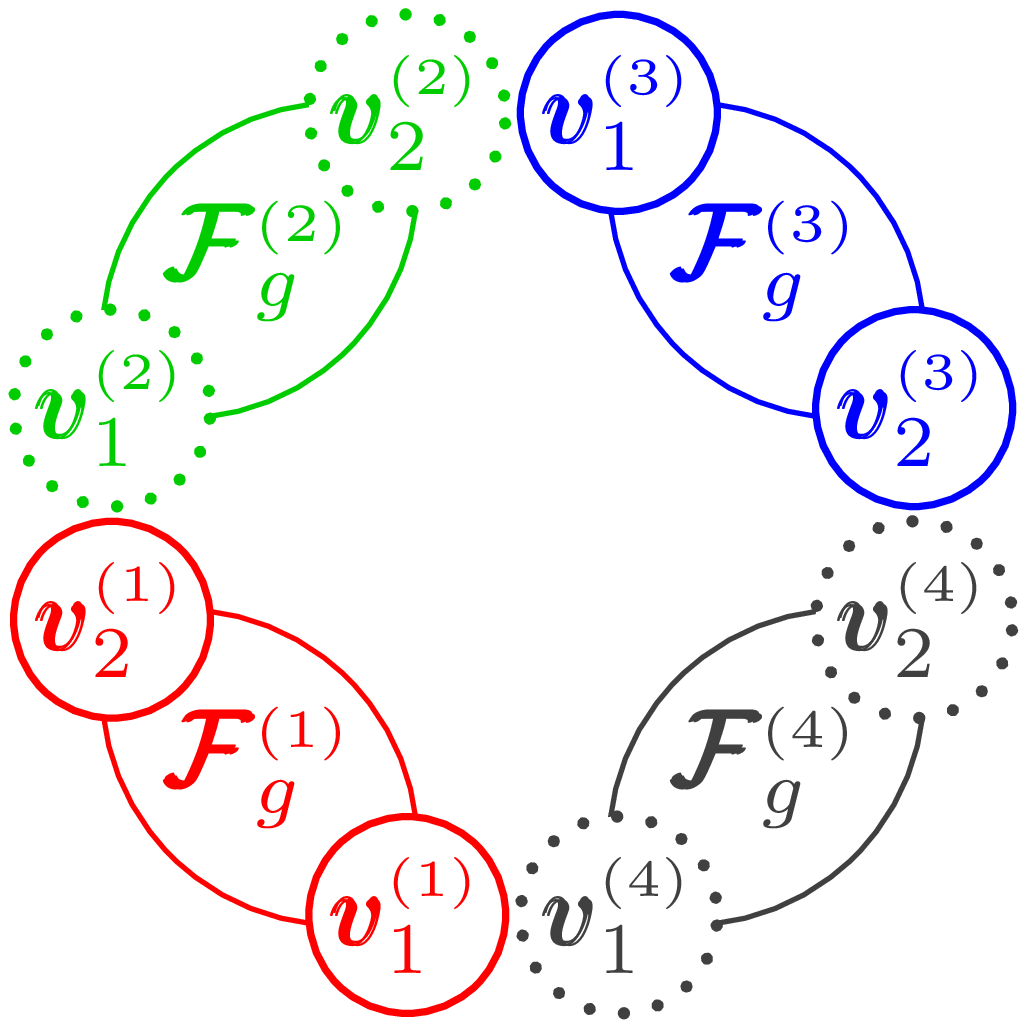}
	\end{subfigure}
	\begin{subfigure}[b]{0.4\textwidth}
		\includegraphics[width=\textwidth]{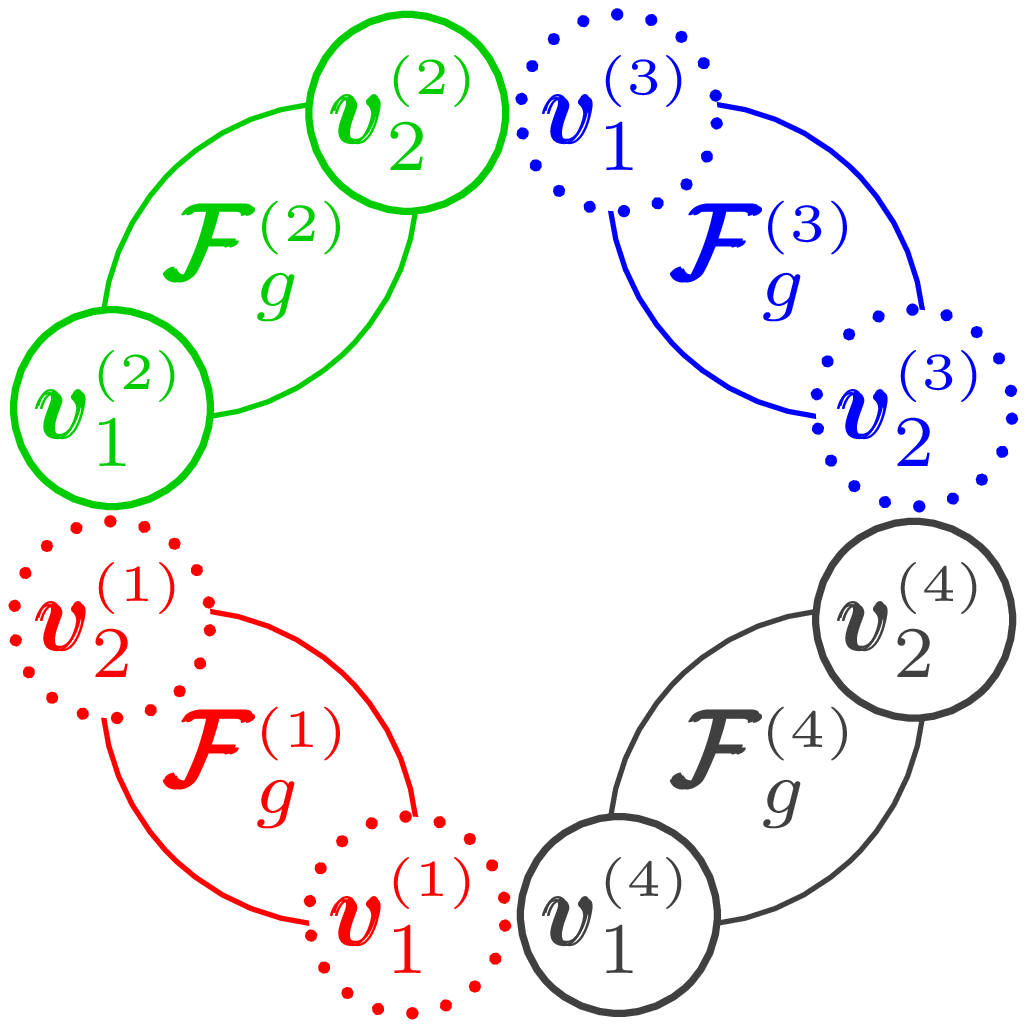}
	\end{subfigure}
	\caption{Configurations of maximum matchings for $\mathcal{F}_{g+1}$ with size $2a_g+2c_g$.}
\end{subfigure}
\begin{subfigure}[b]{0.49\textwidth}
	\begin{subfigure}[b]{0.4\textwidth}
		\includegraphics[width=\textwidth]{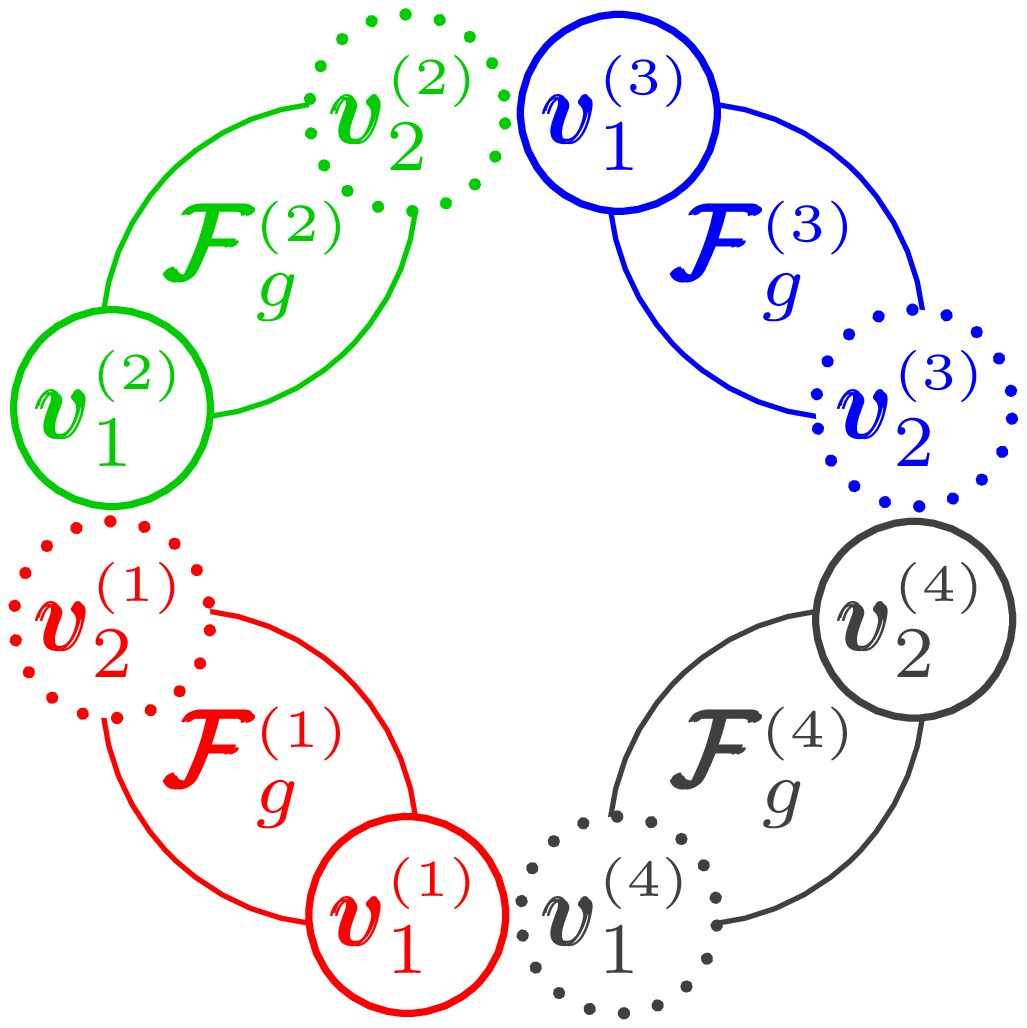}
	\end{subfigure}
	\begin{subfigure}[b]{0.4\textwidth}
		\includegraphics[width=\textwidth]{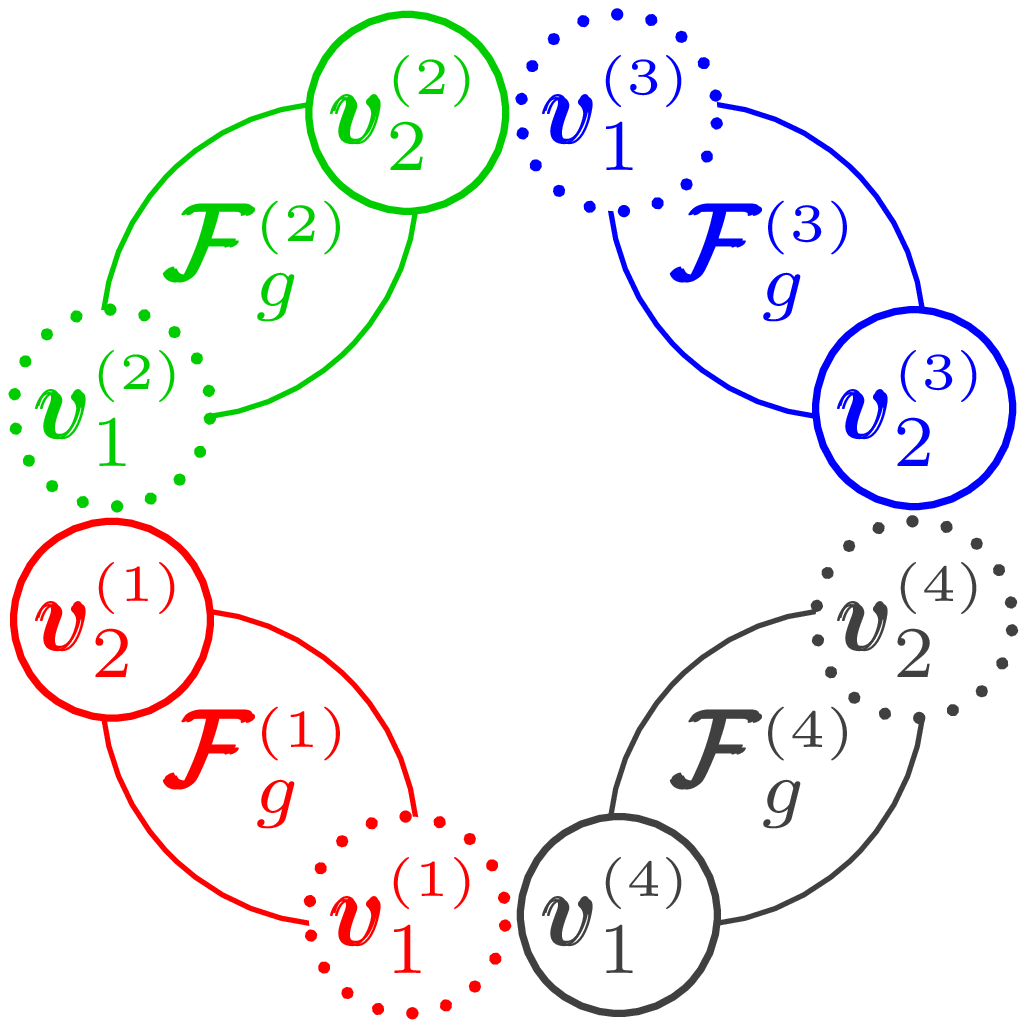}
	\end{subfigure}
	\caption{Configurations of maximum matchings for $\mathcal{F}_{g+1}$ with size $4b_g$.}
\end{subfigure}
\vfill
\null
\begin{subfigure}[b]{\textwidth}
    \centering
    \begin{subfigure}[b]{0.2\textwidth}
        \includegraphics[width=\textwidth]{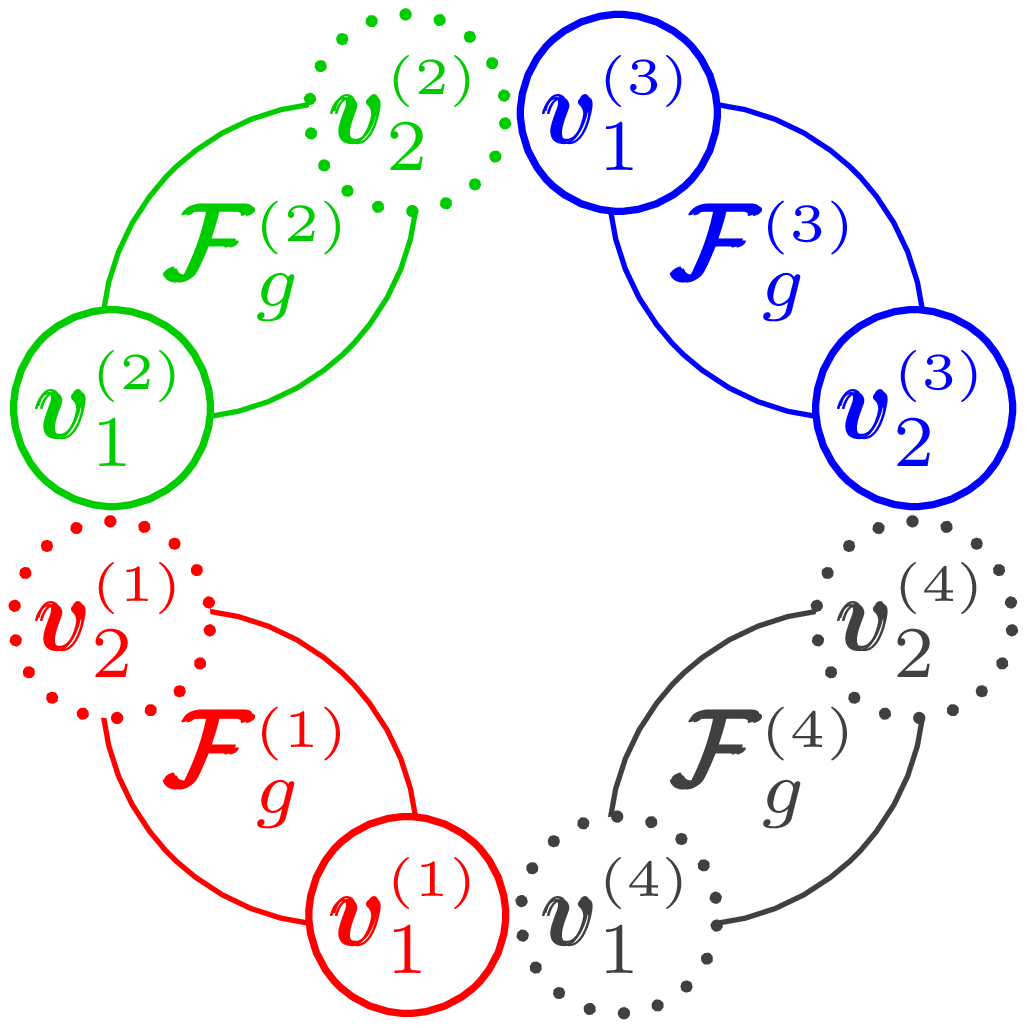}
    \end{subfigure}
    \begin{subfigure}[b]{0.2\textwidth}
        \includegraphics[width=\textwidth]{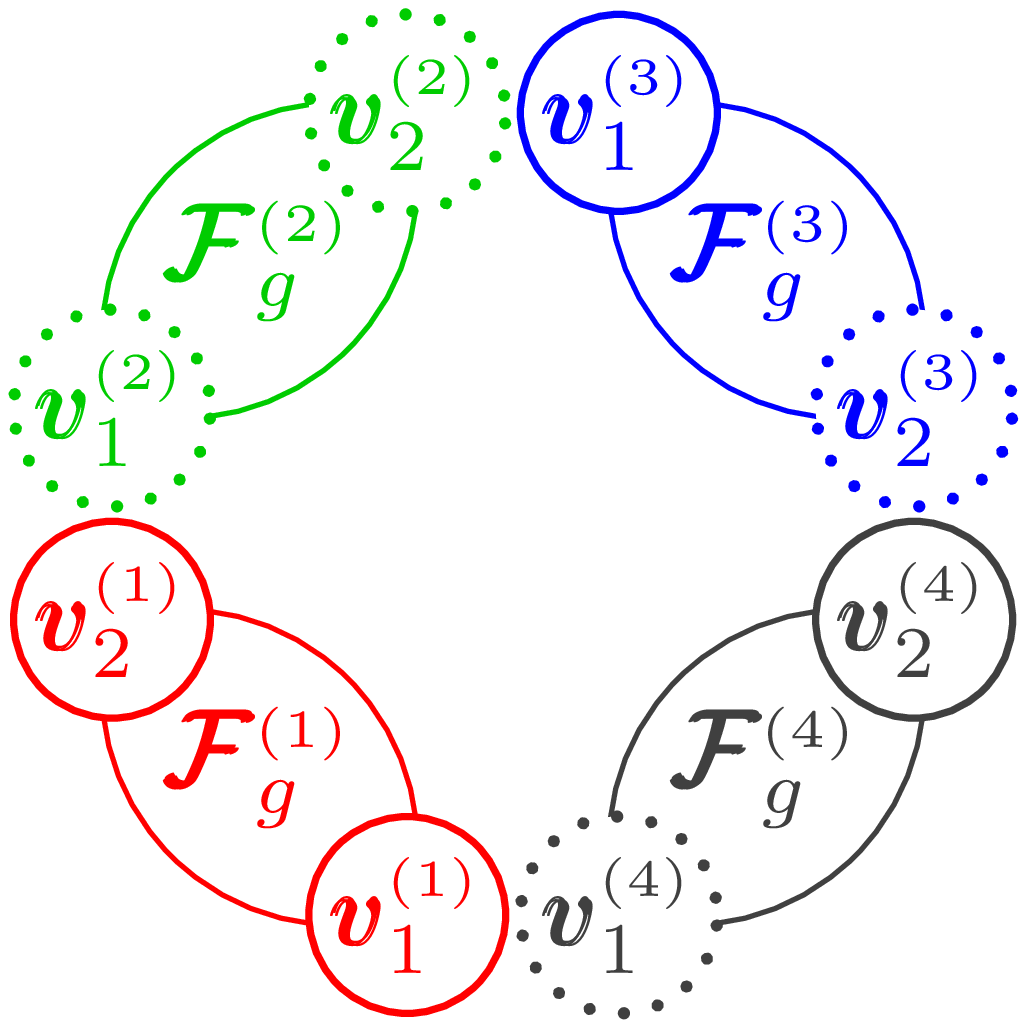}
    \end{subfigure}
    \begin{subfigure}[b]{0.2\textwidth}
        \includegraphics[width=\textwidth]{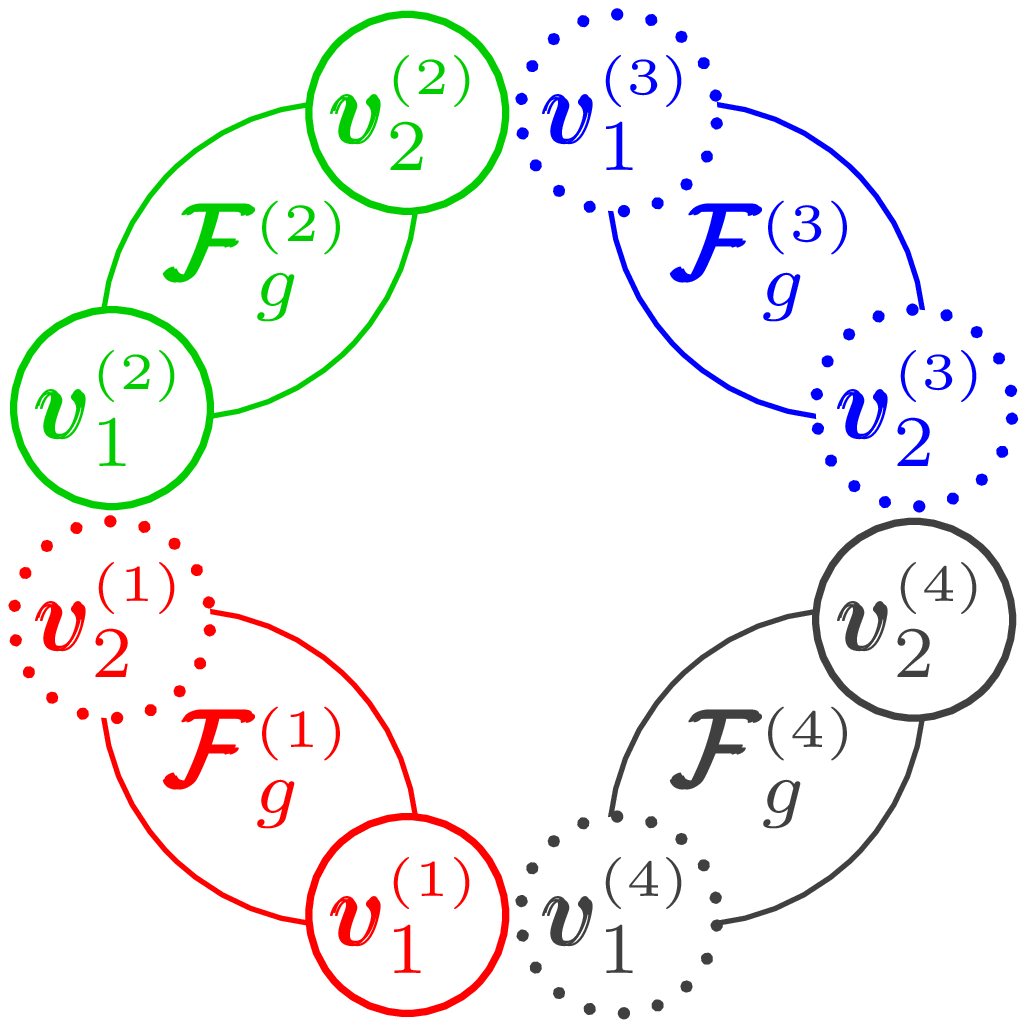}
    \end{subfigure}
    \begin{subfigure}[b]{0.2\textwidth}
        \includegraphics[width=\textwidth]{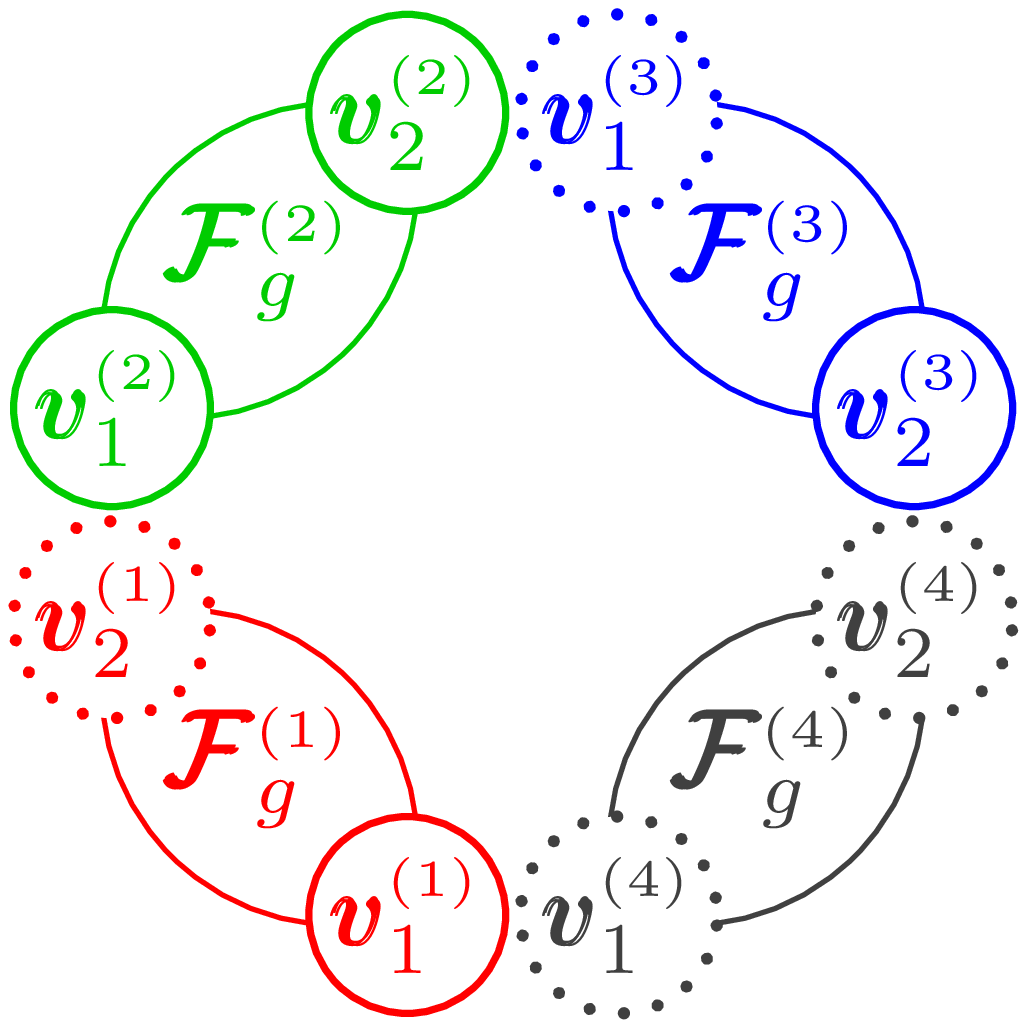}
    \end{subfigure}
    \vfill
    \null
    \begin{subfigure}[b]{0.2\textwidth}
        \includegraphics[width=\textwidth]{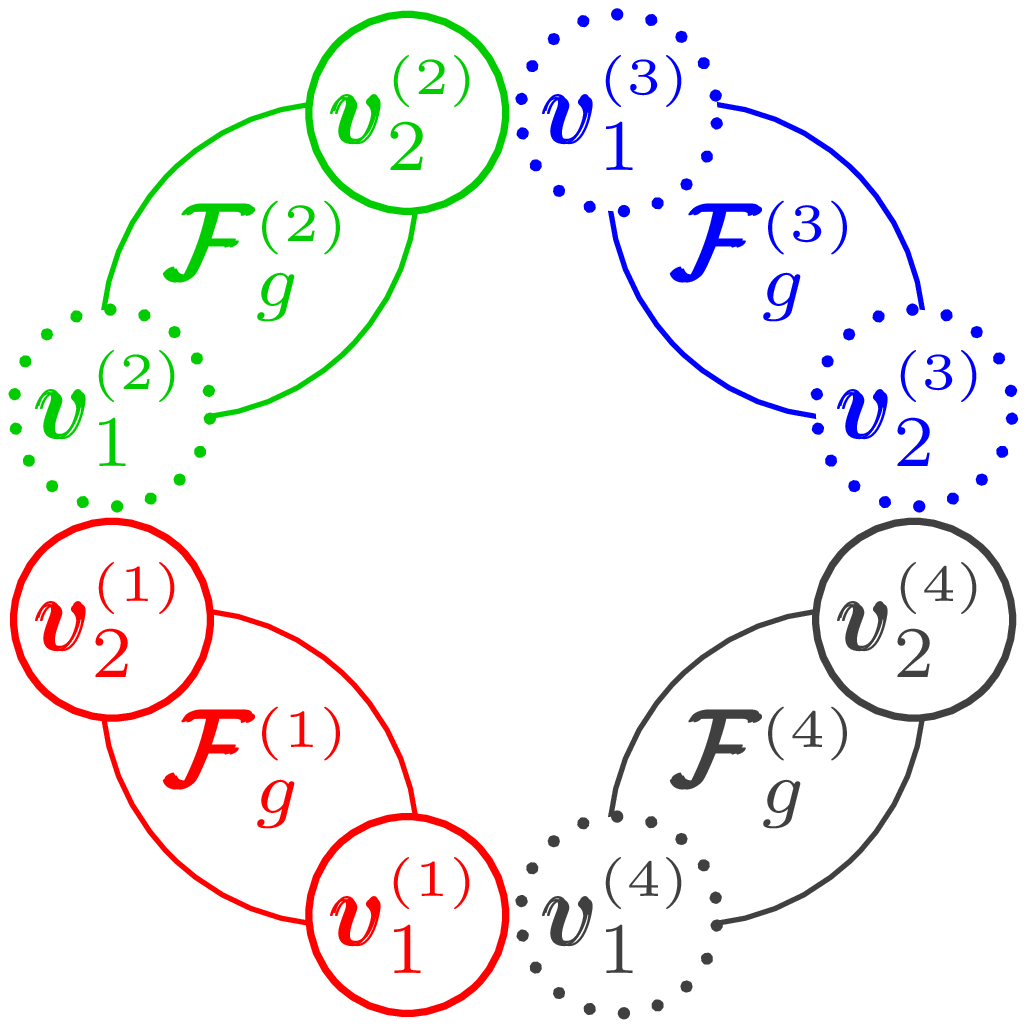}
    \end{subfigure}
    \begin{subfigure}[b]{0.2\textwidth}
        \includegraphics[width=\textwidth]{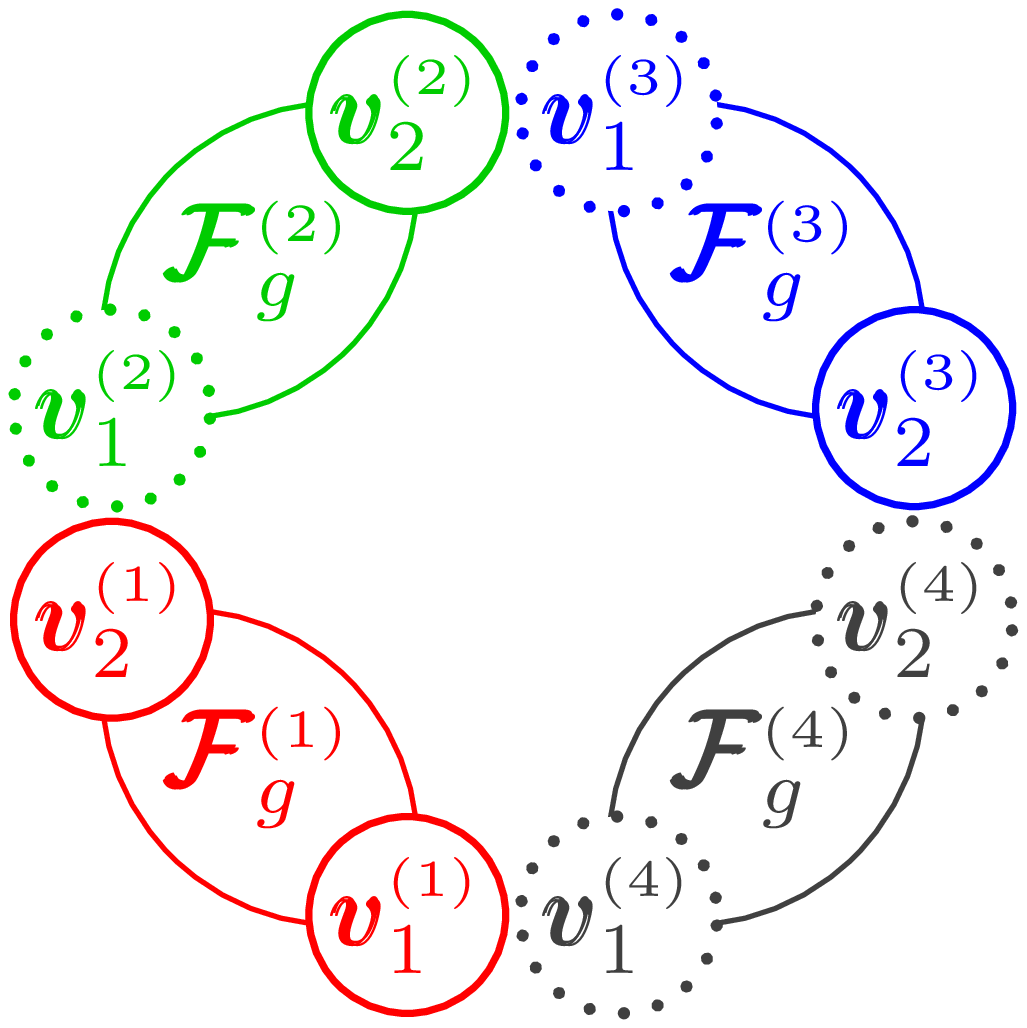}
    \end{subfigure}
    \begin{subfigure}[b]{0.2\textwidth}
        \includegraphics[width=\textwidth]{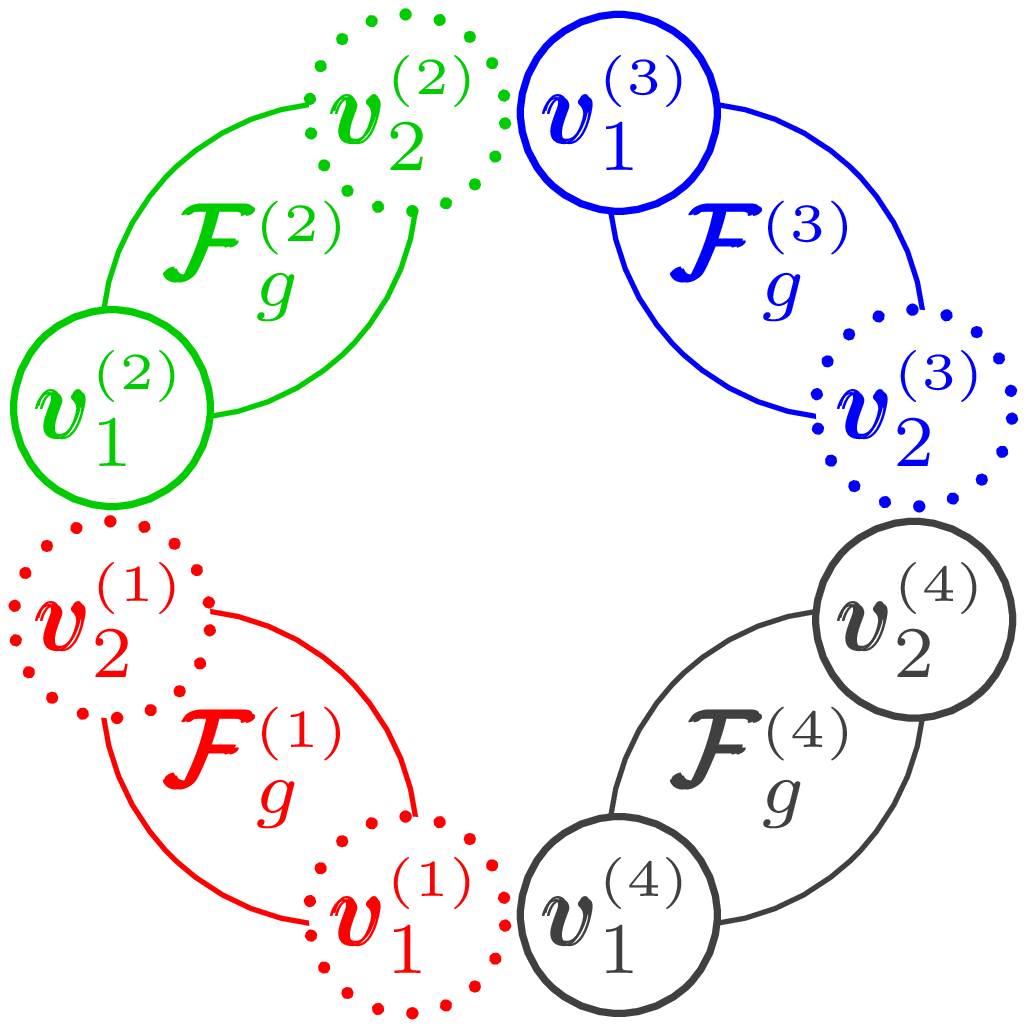}
    \end{subfigure}
    \begin{subfigure}[b]{0.2\textwidth}
        \includegraphics[width=\textwidth]{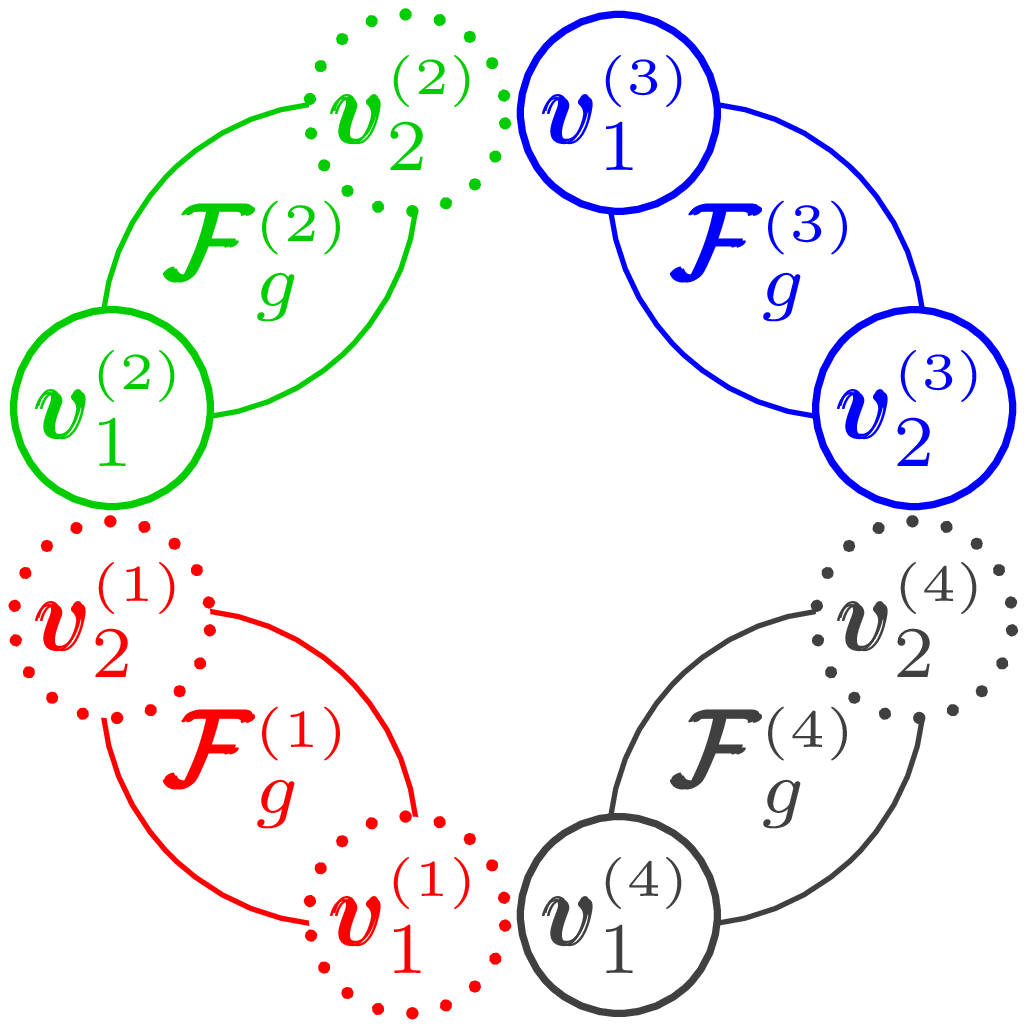}
    \end{subfigure}
    \vfill
    \null
    \begin{subfigure}[b]{0.2\textwidth}
        \includegraphics[width=\textwidth]{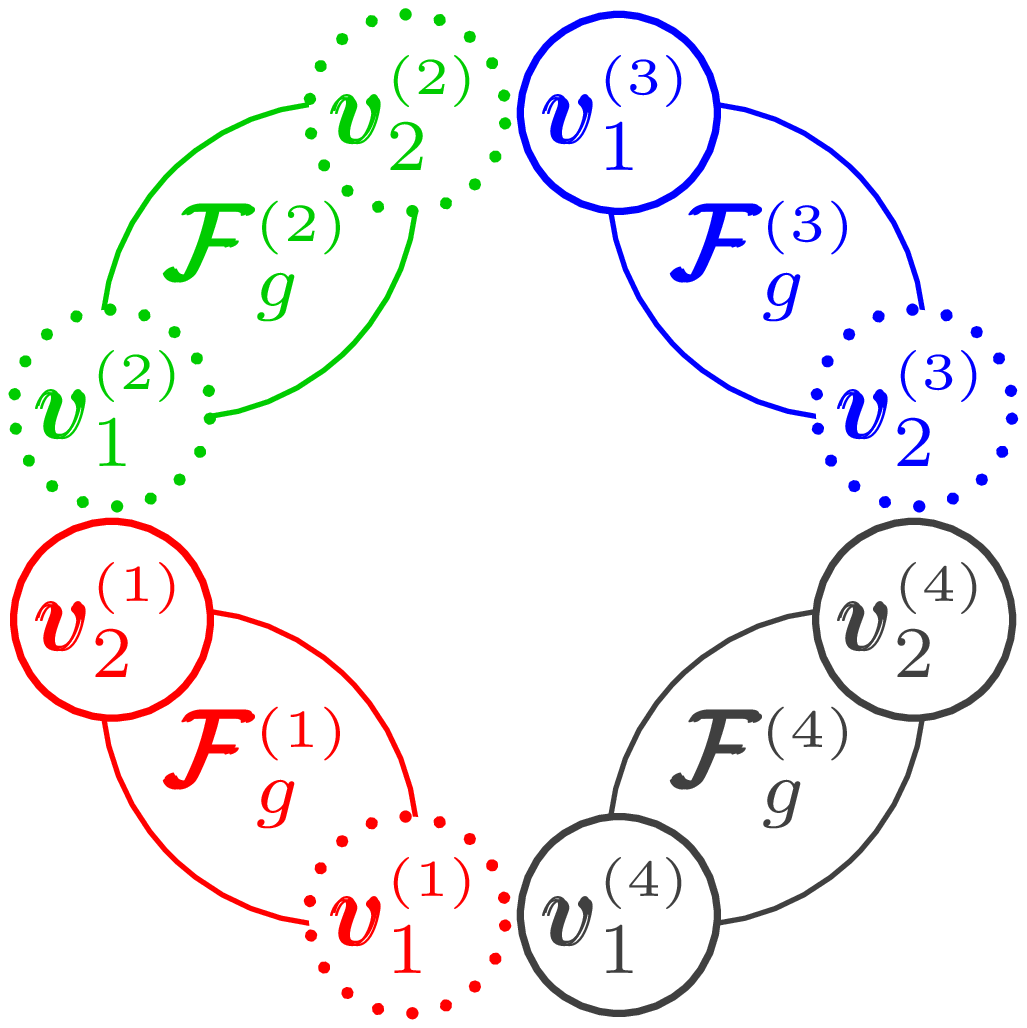}
    \end{subfigure}
    \begin{subfigure}[b]{0.2\textwidth}
        \includegraphics[width=\textwidth]{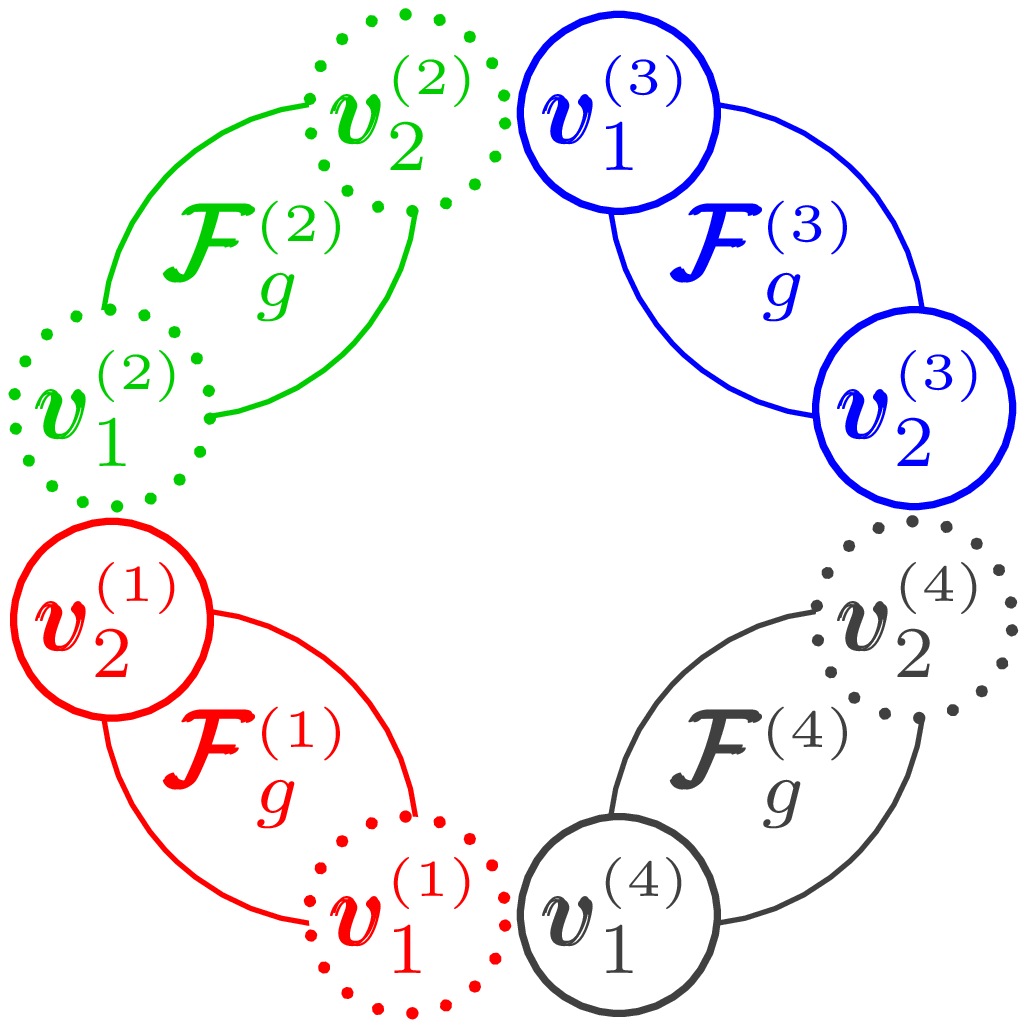}
    \end{subfigure}
    \begin{subfigure}[b]{0.2\textwidth}
        \includegraphics[width=\textwidth]{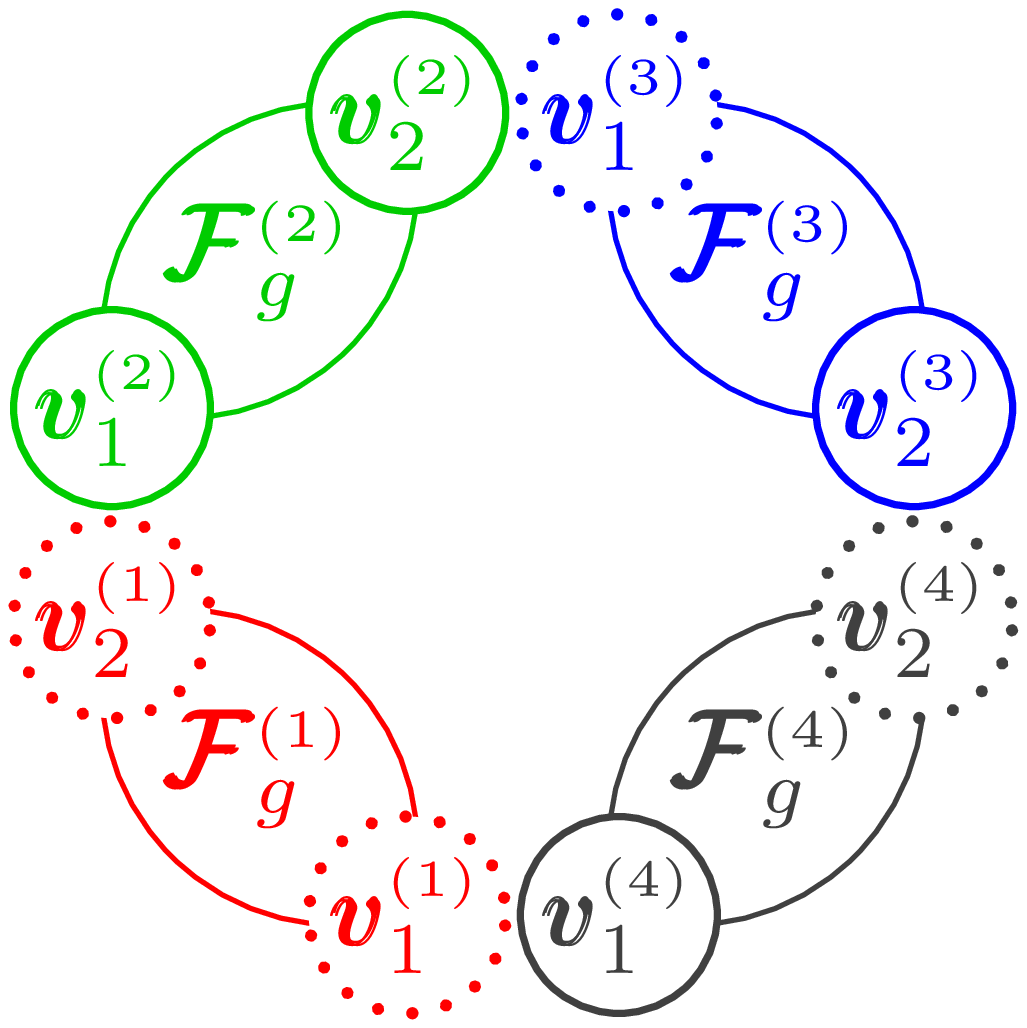}
    \end{subfigure}
    \begin{subfigure}[b]{0.2\textwidth}
        \includegraphics[width=\textwidth]{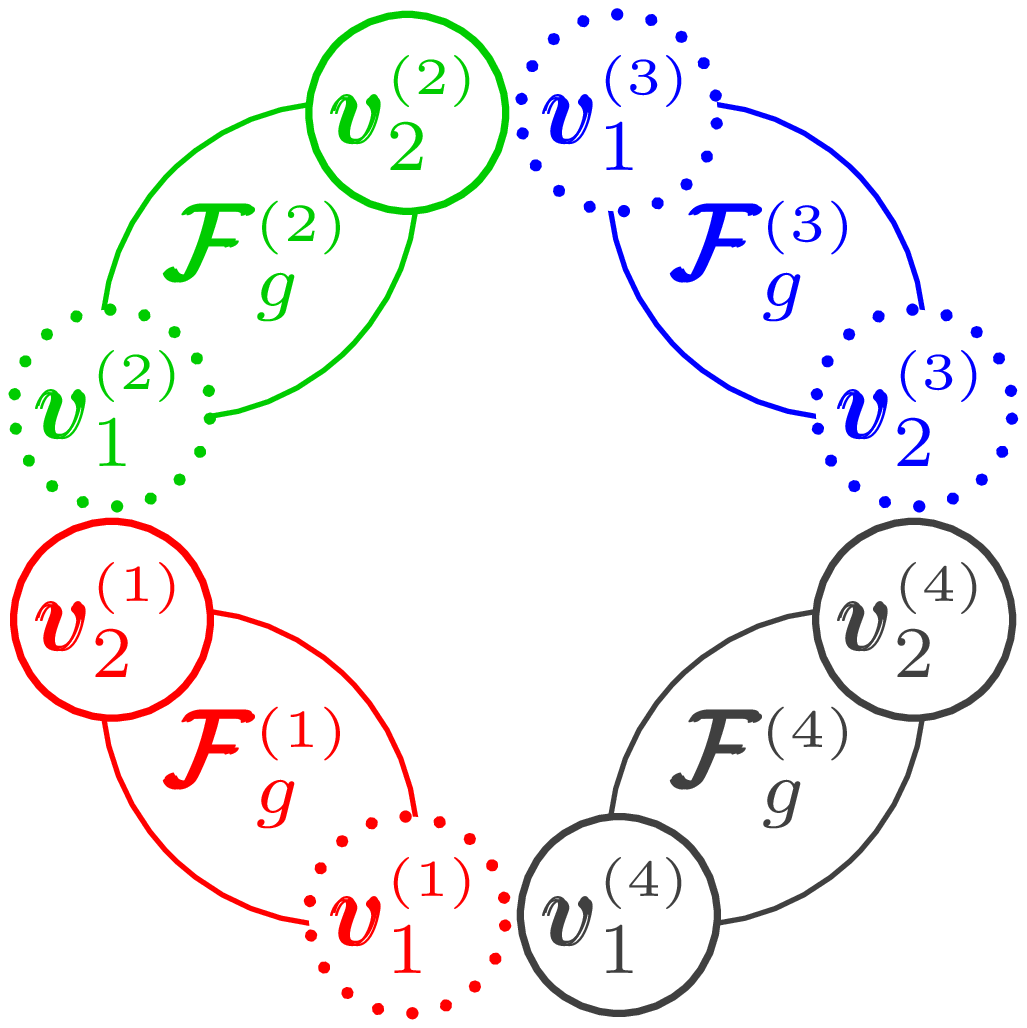}
    \end{subfigure}
    \caption{Configurations of maximum matchings for $\mathcal{F}_{g+1}$ with size \(a_g+2b_g+c_g\).}
\end{subfigure}
\caption{Illustration of all possible configurations of maximum matchings for $\mathcal{F}_{g+1}$.}
\label{cgeq}
\end{figure}

\subsection{Number of maximum matchings}

Let $\theta_g$ be the number of maximum matchings of $\mathcal{F}_g$. To determine $\theta_g$, we introduce two additional quantities.  Let $\phi_g$ be the number of maximum matchings of $\mathcal{F}_g\setminus \{v_1,v_2\}$, and let $\varphi_g$ be the number of maximum matchings of $\mathcal{F}_g\setminus \{v_1\}$, which is equal to the number of maximum matchings of $\mathcal{F}_g\setminus \{v_2\}$.

\begin{theo}\label{TheoNMMA}
For network $\mathcal{F}_g$, $g \geq 1$, the three quantities $\phi_g$, $\varphi_g$ and $\theta_g$ can be determined recursively according to the following relations:
\begin{eqnarray}\label{countingmm}
\phi_{g+1} &=& 4\phi_g^2\varphi_g^2, \notag \\
\varphi_{g+1} &=& 4\phi_g^2\varphi_g\theta_g + 4\phi_g\varphi_g^3, \notag \\
\theta_{g+1} &=& 2\phi_g^2\theta_g^2 + 2\varphi_g^4 + 12\phi_g\varphi_g^2\theta_g,\notag
\end{eqnarray}
with initial conditions $\phi_1 = 1$, $\varphi_1 = 2$
and $\theta_1 = 2$.
\end{theo}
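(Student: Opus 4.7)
The plan is to mirror the decomposition used in Theorem~\ref{theomm}: write $\mathcal{F}_{g+1}$ as four copies $\mathcal{F}_g^{(i)}$, $i=1,2,3,4$, glued pairwise at the hub vertices $v_1,v_2,v_3,v_4$ as in Definition~\ref{Def:LW02}. For each configuration drawn in Figures~\ref{ageq},~\ref{bgeq},~\ref{cgeq} I will multiply together the counts contributed by the four copies and sum across configurations to obtain the three recurrences.

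The step that makes the counting clean is a hub-coverage lemma, which I would establish by induction on $g$ using the explicit values $a_g=(4^g-4)/6$, $b_g=a_g+1$, $c_g=b_g+1$ from Theorem~\ref{theomm}: every maximum matching of $\mathcal{F}_g$ covers all four hubs, every maximum matching of $\mathcal{F}_g\setminus\{v_1\}$ covers $v_2,v_3,v_4$, and every maximum matching of $\mathcal{F}_g\setminus\{v_1,v_2\}$ covers $v_3,v_4$ (and symmetric analogues). This is a standard exchange argument: if a hub $v_i$ were uncovered, the matching would lie in the corresponding further hub-removed subgraph, whose matching number is strictly smaller; along the way I would verify that removing any pair of hubs from $\mathcal{F}_g$ drops the matching number to $a_g$, again via the same self-similar decomposition. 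A consequence is that, within each copy $\mathcal{F}_g^{(i)}$, depending on whether the configuration assigns both, one, or neither of its hubs to it, the restricted matching is a maximum matching of $\mathcal{F}_g$, $\mathcal{F}_g\setminus\{v_k\}$, or $\mathcal{F}_g\setminus\{v_k,v_\ell\}$, and hence contributes exactly $\theta_g$, $\varphi_g$, or $\phi_g$ choices respectively.

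With the lemma in hand, a maximum matching of the ambient graph is specified uniquely by choosing, for each hub still present, which of the two incident copies supplies its covering edge. For $\phi_{g+1}$ (hubs $v_3,v_4$) there are $2^2=4$ configurations, each giving $\phi_g^2\varphi_g^2$. For $\varphi_{g+1}$ (hubs $v_2,v_3,v_4$) there are $2^3=8$ configurations: $4$ in which some copy---necessarily copy (2) or (3)---covers both of its hubs, each contributing $\phi_g^2\varphi_g\theta_g$, and $4$ in which exactly one copy covers none of its hubs, each contributing $\phi_g\varphi_g^3$. For $\theta_{g+1}$ (all four hubs) there are $2^4=16$ configurations, which I will partition by the number of doubly-covered copies into classes of sizes $2$, $12$, $2$, contributing $\phi_g^2\theta_g^2$, $\phi_g\varphi_g^2\theta_g$, and $\varphi_g^4$ respectively.

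I expect the principal obstacle to be the combinatorial census for $\theta_{g+1}$: showing that the only pairs of copies that can be doubly covered simultaneously are $\{\mathcal{F}_g^{(1)},\mathcal{F}_g^{(3)}\}$ and $\{\mathcal{F}_g^{(2)},\mathcal{F}_g^{(4)}\}$ (the two pairs sharing no hub), giving the first coefficient $2$; that each of the four copies admits exactly three completions in which it alone is doubly covered, giving $12$; and that the remaining two configurations correspond to a cyclic assignment of hubs around the quadrangle $\mathcal{F}_1$ in which every copy covers exactly one of its hubs, giving the second $2$. The initial conditions $\phi_1=1$, $\varphi_1=2$, $\theta_1=2$ are read off directly from the fact that $\mathcal{F}_1$ is a $4$-cycle.
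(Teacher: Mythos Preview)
Your approach is essentially the paper's: the paper also classifies maximum matchings of $\mathcal{F}_{g+1}$ (and its hub-deleted variants) according to which of the two incident copies covers each merged hub, observes via the equalities $2a_g+b_g+c_g=a_g+3b_g$ and $2a_g+2c_g=4b_g=a_g+2b_g+c_g$ that every one of the $4$, $8$, and $16$ assignment patterns in Figures~\ref{ageq},~\ref{bgeq},~\ref{cgeq} attains the maximum size, and then multiplies the copy-level counts. Your explicit census $2+12+2$ for $\theta_{g+1}$ is correct and matches the figure groupings, and your hub-coverage argument is exactly the mechanism the paper leaves implicit in the pictures.

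One small overstatement to fix: the claim that removing \emph{any} pair of hubs from $\mathcal{F}_g$ drops the matching number to $a_g$ is false for adjacent pairs (already at $g=1$, $\mathcal{F}_1\setminus\{v_1,v_3\}$ is a single edge, matching number $1\neq a_1=0$), and correspondingly not every maximum matching of $\mathcal{F}_1\setminus\{v_1\}$ covers $v_3$ and $v_4$. Fortunately you never use this strong form: at the copy level only the $\{v_1,v_2\}$-coverage is relevant, and that follows from $a_g<b_g<c_g$; at the ambient level the coverage of all four merged hubs in $\mathcal{F}_{g+1}$ follows directly from the size bound $|M|\le 4a_g+\sum_i n_i$ together with $c_{g+1}=4a_g+4$, so no extra lemma is needed.
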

\begin{proof}
Theorem~\ref{theomm} show that $a_g = \frac{4^g-4}{6}$, $b_g = \frac{4^g+2}{6}$ and $c_g = \frac{4^g+8}{6}$, which means
\begin{align*}
a_{g+1} &= 2a_g+2b_g, \\
b_{g+1} &= 2a_g+b_g+c_g = a_g+3b_g, \\
c_{g+1} &= 2a_g+2c_g = 4b_g = a_g+2b_g+c_g.
\end{align*}
Thus, Fig.~\ref{ageq},~\ref{bgeq} and~\ref{cgeq} actually provide  arrangements of all non-overlapping  maximum matchings for $\mathcal{F}_g\setminus \{v_1,v_2\}$, $\mathcal{F}_g\setminus \{v_1\}$, and $\mathcal{F}_g$, respectively.
From these three figures we can obtain directly the recursive relations for $\phi_g$, $\varphi_g$ and $\theta_g$. {According to Fig.~\ref{ageq}, we can see that, for the four configurations their contribution to  $\phi_{g+1}$  is identical.  By  using addition and multiplication principles, we obtain
\begin{align*}
 \phi_{g+1}= 4\phi_g^2\varphi_g^2.
\end{align*}
In a similar way, we can derive the recursive relations for $\varphi_{g+1}$ and $\theta_{g+1}$:
\begin{align}
\varphi_{g+1} &= 4\phi_g^2\varphi_g\theta_g + 4\phi_g\varphi_g^3, \notag \\
\theta_{g+1} &= 2\phi_g^2\theta_g^2 + 2\varphi_g^4 + 12\phi_g\varphi_g^2\theta_g,\notag
\end{align}
as stated by the theorem.}
\end{proof}


Theorem~\ref{TheoNMMA} shows that the number
of maximum matchings of network $\mathcal{F}_g$ can be calculated in $\mathcal{O}(\ln N_g)$ time.

\section{Pfaffian orientation and perfect matchings of a non-fractal scale-free network}

In the preceding section, we study the size and number of maximum matchings of  a fractal scale-free network $\mathcal{F}_g$, which has no perfect matching for all $g>1$. In this section, we study maximum matchings in the non-fractal counterpart $\mathcal{H}_g$ of $\mathcal{F}_g$, {which has an  infinite fractal dimension~\cite{ZhZhZoChGu09}.} Both $\mathcal{F}_g$ and $\mathcal{H}_g$ have the same degree sequence for all $g\geq 1$, but $\mathcal{H}_g$ has perfect matchings. Moreover, we determine the number of perfect matchings in $\mathcal{H}_g$ by applying the Pfaffian method, and show that $\mathcal{H}_g$ has the same entropy for perfect matchings as that associated with   an extended Sierpi\'nski graph~\cite{KlMo05}.

\subsection{Network construction and structural characteristics}

The non-fractal scale-free network is also built iteratively. 

\begin{definition}\label{Def:NF01}
Let $\mathcal{H}_g$ be the non-fractal scale-free network $\mathcal{H}_g=(\mathcal{V}(\mathcal{H}_g),\,\mathcal{E}(\mathcal{H}_g))$, $g \geq 1$,
after $g$ iterations, with vertex set $\mathcal{V}(\mathcal{H}_g)$ and edge set $\mathcal{E}(\mathcal{H}_g)$. Then, $\mathcal{H}_g$ is constructed in the following iterative way.

For $g=1$, $\mathcal{H}_1$ consists of a quadrangle.

For $g>1$, $\mathcal{H}_g$ is derived from $\mathcal{H}_{g-1}$ by replacing each edge in $\mathcal{H}_{g-1}$ with a quadrangle on rhs of the arrow in Fig.~\ref{Fig.1}.
\end{definition}

Figure~\ref{Fig.2} illustrates three networks for $g=1,2,3$.

\begin{figure}
\centering
\includegraphics[width=0.4\textwidth]{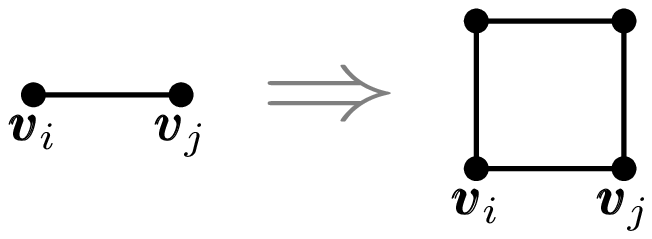}
\caption{Iterative construction method for the non-fractal scale-free network. Each edge  $(v_i,v_j)$ generates two vertices, which, together with $v_i$ and $v_j$, form a quadrangle on the rhs of the arrow.}
\label{Fig.1}
\end{figure}

\begin{figure}
\centering
\includegraphics[width=0.6\textwidth]{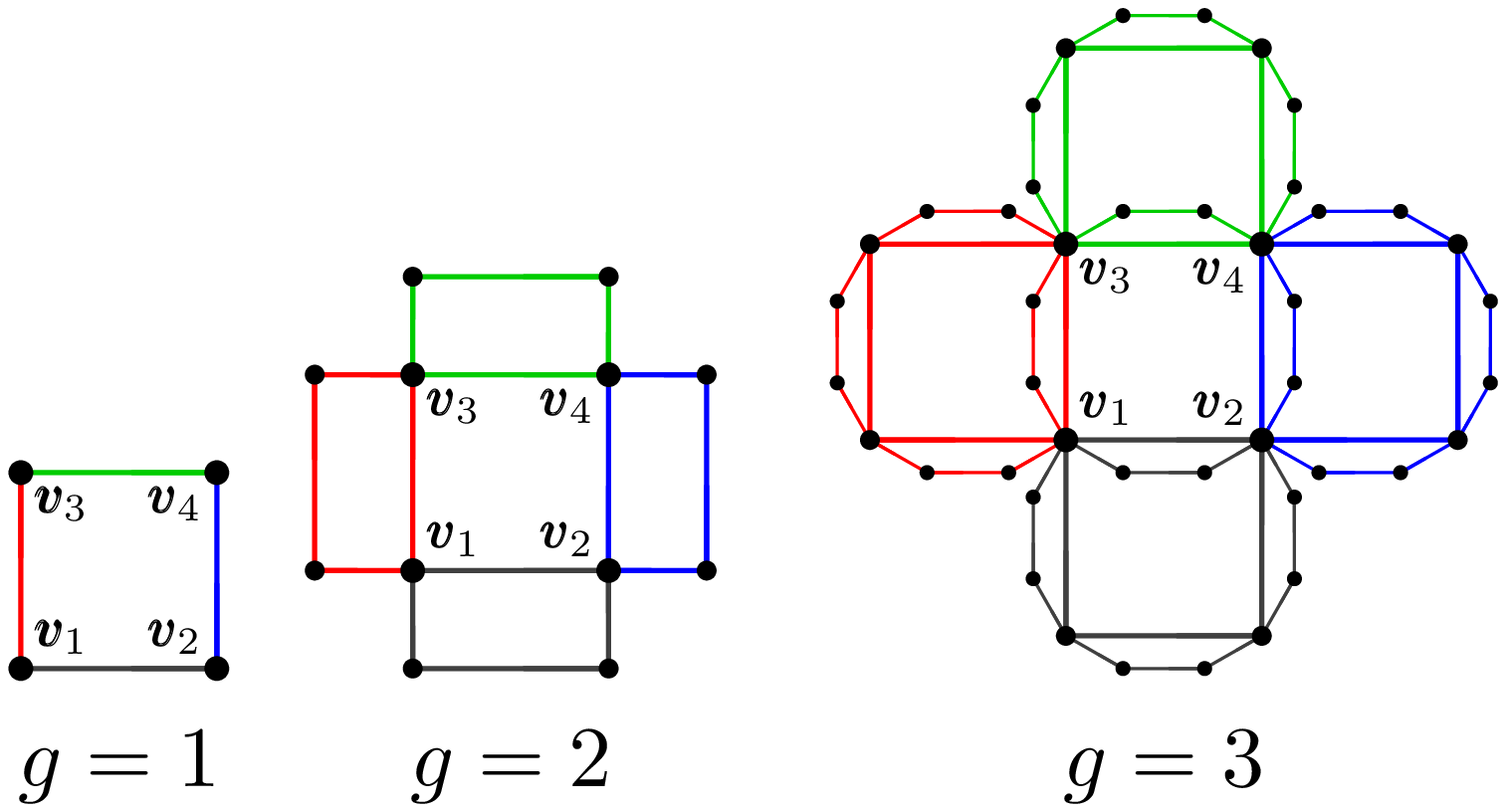}
\caption{The non-fractal scale-free network $\mathcal{H}_{1}$, $\mathcal{H}_{2}$, and $\mathcal{H}_{3}$.}
\label{Fig.2}
\end{figure}

The non-fractal scale-free network is also self-similar, which can also be generated in an alternative approach~\cite{ZhZhZoChGu09}. Similar to its fractal counterpart $\mathcal{F}_g$, in $\mathcal{H}_g$, $g \geq 1$, the initial four vertices created at $g =1$ have the largest degree, which are call hub vertices. We label the four hub vertices in $\mathcal{H}_1$ by $v_1$, $v_2$, $v_3$, and $v_4$: one pair of diagonal vertices are label as $v_1$ and $v_4$, and the other pair of vertices are labeled as $v_2$ and $v_3$, see Fig.~\ref{Fig.2}. Then, the non-fractal scale-free network can be created alternatively as shown in Fig.~\ref{Fig.3}.

\begin{figure}
\centering
\includegraphics[width=0.7\textwidth]{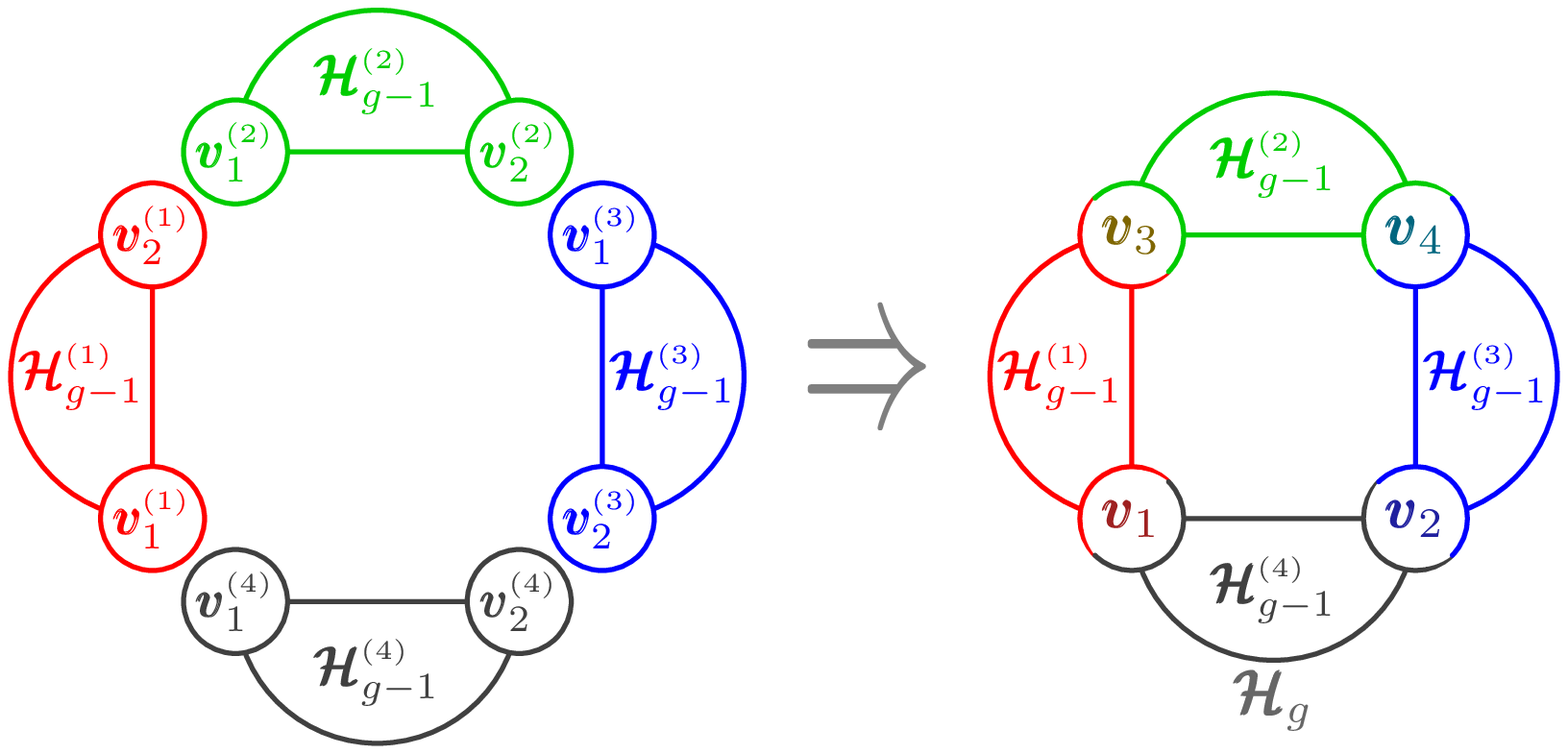}
\caption{Another construction approach for the non-fractal scale-free  network.} 
\label{Fig.3}
\end{figure}

\begin{definition}\label{Def:NF02}
Given the network $\mathcal{H}_{g-1}=(\mathcal{V}(\mathcal{H}_{g-1}),\,\mathcal{E}(\mathcal{H}_{g-1}))$, $g > 1$, $\mathcal{H}_g=(\mathcal{V}(\mathcal{H}_g),\,\mathcal{E}(\mathcal{H}_g))$ is obtained by executing the following two operations:

(i)  Amalgamating four copies of $\mathcal{H}_{g-1}$, denoted by $\mathcal{H}_{g-1}^{(i)}$, $i=1,2,3,4$, the four hub vertices of which are denoted by $v_k^{(i)}$, $k=1,2,3,4$, with $v_k^{(i)}$ in $\mathcal{H}_{g-1}^{(i)}$ corresponding to $v_k$ in $\mathcal{H}_{g-1}$.

(ii)  Identifying $v_1^{(1)}$ and $v_1^{(4)}$ (or $v_2^{(2)}$ and $v_1^{(3)}$, $v_2^{(1)}$ and $v_1^{(2)}$, $v_2^{(3)}$ and $v_2^{(4)}$) as the hub vertex $v_1$ (or $v_4$,  $v_3$,  $v_2$) in $\mathcal{H}_g$.
\end{definition}


{In the sequel, we will use the above  notations for $\mathcal{F}_g$ to represent the same  quantities corresponding to those of $\mathcal{H}_g$ in the case without confusion.}

In $\mathcal{H}_g$, the number of vertices is $N_g=\frac{2}{3}\left(4^{g}+2 \right)$, the number of edges is $E_g=4^g$. According to the first construction method, the number of vertices created at iteration $g_i$, $ g_i >1$, is $L_v(g_i)=2\times 4^{ g_i -1}$, the degree of a vertex $i$ created at iteration $g_i$, $ g_i \geq 1$, is $d_i(g)= 2^{g-g_i+1}$, all possible degree of vertices is $2^{ g-g_i+1}$, $ 1 \leq g_i \leq g$, and the number of vertices with degree is $L_v(g_i)$.

As shown in~\cite{ZhZhZoChGu09}, for all $g\geq 1$, $\mathcal{H}_g$ has the same degree sequence as that of $\mathcal{F}_g$. Thus, $\mathcal{H}_g$ is scale-free with the power exponent $\gamma=3$, identical to that of $\mathcal{F}_g$.

In spite of the resemblance of degree sequence between $\mathcal{H}_g$ and $\mathcal{F}_g$, there are obvious difference between them. For example,
$\mathcal{H}_g$ is non-fractal since its fractal dimension is infinite~\cite{ZhZhZoChGu09}. Another example is that network
$\mathcal{H}_g$ is typically small-world.
\begin{proposition}\cite{ZhZhZoChGu09}
The average distance of $\mathcal{H}_g$ is
$$
\mu(\mathcal{H}_g) = \frac{2}{3} \times \frac{8+16\times 4^g+3\times 16^g + 6g\times 16^g}{4\times 16^g + 10\times 4^g + 4}.
$$
\end{proposition}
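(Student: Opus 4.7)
The plan is to exploit the four-copy self-similar construction of $\mathcal{H}_g$ from Definition~\ref{Def:NF02}. Setting $S_g=\sum_{\{u,v\}\subseteq \mathcal{V}(\mathcal{H}_g)}d_{\mathcal{H}_g}(u,v)$, one has $\mu(\mathcal{H}_g)=S_g/\binom{N_g}{2}$, so the task reduces to obtaining a closed form for $S_g$ and then dividing. My strategy is to split every unordered pair of vertices of $\mathcal{H}_g$ according to which of the four copies $\mathcal{H}_{g-1}^{(i)}$ its endpoints lie in, deriving coupled recursions for $S_g$ together with an auxiliary ``hub-to-all'' distance sum $W_g$.

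The first step is a geometric lemma: every geodesic in $\mathcal{H}_g$ that starts in $\mathcal{H}_{g-1}^{(i)}$ and ends in $\mathcal{H}_{g-1}^{(j)}$ with $i\neq j$ exits copy $i$ and enters copy $j$ only through the shared hubs produced by the amalgamation, because the only vertices common to two distinct copies are those hubs. Consequently, for $u\in\mathcal{H}_{g-1}^{(i)}$ and $v\in\mathcal{H}_{g-1}^{(j)}$,
\begin{equation*}
d_{\mathcal{H}_g}(u,v)=\min_{\pi}\Bigl(d_{\mathcal{H}_{g-1}^{(i)}}(u,h_\pi^{\mathrm{in}})+D(h_\pi^{\mathrm{in}},h_\pi^{\mathrm{out}})+d_{\mathcal{H}_{g-1}^{(j)}}(h_\pi^{\mathrm{out}},v)\Bigr),
\end{equation*}
where $\pi$ ranges over the admissible routings through the four-copy skeleton and $D$ denotes hub-to-hub distance in that skeleton. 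This reduces every cross-copy distance to sums of ``vertex to hub'' distances inside single smaller copies.

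Accordingly, I introduce $W_g=\sum_{u\in\mathcal{V}(\mathcal{H}_g)}d_{\mathcal{H}_g}(v_1,u)$, which by the symmetry of the construction equals the analogous sum from any of the four outer hubs of $\mathcal{H}_g$. Summing the previous identity over all cross-copy pairs, and adding the within-copy contribution $4S_{g-1}$ (minus a correction for shared hubs counted twice), yields coupled recursions of the form
\begin{align*}
W_g&=aW_{g-1}+\alpha(g),\\
S_g&=4S_{g-1}+\beta(W_{g-1},N_{g-1},g),
\end{align*}
with explicit coefficient $a$ and corrections $\alpha,\beta$ polynomial in $4^g$ and $g$. Solving this linear system by standard telescoping, using the 4-cycle $\mathcal{H}_1$ for initial data, and finally dividing by $\binom{N_g}{2}$ with $N_g=\tfrac{2}{3}(4^g+2)$, produces the stated closed form after simplification.

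The main obstacle lies in the cross-copy bookkeeping of the second step. There are six pairs of distinct copies, and for the two pairs that are ``diagonal'' in the four-copy skeleton (those not sharing a hub directly), a geodesic must first reach one outer hub and then traverse part of an intermediate copy to reach a second hub, so $d_{\mathcal{H}_g}(u,v)$ appears as a minimum over two competing routes. The careful comparison of those routes, and its interaction with the within-copy hub-to-vertex distance profile encoded by $W_{g-1}$, is what powers the entire calculation; once this routing has been resolved, the remaining manipulations are routine algebra.
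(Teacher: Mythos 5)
This proposition is quoted in the paper from \cite{ZhZhZoChGu09} without any internal proof, so there is nothing in the paper itself to compare against; your plan — exploiting the four-copy amalgamation of Definition~\ref{Def:NF02} to set up coupled recursions for the total distance sum $S_g$ and a hub-to-all sum $W_g$, then dividing by $\binom{N_g}{2}$ — is the natural and standard route for such hierarchical networks and is in the spirit of how the cited reference derives the formula. Your preliminary observations are also essentially right: copies meet only at the identified hubs, the two connection hubs of each copy are adjacent (the old hub--hub edges are retained in $\mathcal{H}_g$, which is exactly what makes it small-world), so within-copy distances are preserved and cross-copy distances decompose through hub routings; I also checked that the target formula is consistent with the pair-average normalization (e.g.\ $\mu(\mathcal{H}_1)=4/3$, $S_1=8$).

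The genuine gap is that the proposal stops exactly where the content of the proposition begins. You never derive the coefficients $a$, $\alpha(g)$, $\beta$, and, more importantly, the step you yourself flag as the crux — evaluating $\sum_{u,v}\min$ over the two competing routes for the two ``diagonal'' copy pairs — is not a routine consequence of knowing $W_{g-1}$: the sum of minima is not a linear function of $W_{g-1}$ alone. To close the system you need extra structural facts (or extra auxiliary quantities), e.g.\ that $\mathcal{H}_g$ is bipartite so the distances from any vertex to the two adjacent connection hubs differ by exactly one, and a hub-swapping automorphism showing that exactly half the vertices of a copy are closer to each of its two connection hubs; with these, $\min\{d(u,v_3)+d(v_4,v),\,d(u,v_1)+d(v_2,v)\}$ can be summed in closed form, but none of this appears in your argument. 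Likewise the claim that the within-copy contribution is $4S_{g-1}$ ``minus a correction for shared hubs counted twice'' reflects unfinished bookkeeping (no pair is in fact counted twice, since two copies share at most one vertex), and the final telescoping and verification against the stated closed form are only promised. As it stands the proposal is a plausible outline of the derivation, not a proof of the specific formula; carrying out the routing analysis, writing the recursions explicitly, and checking them against the $g=1$ data are all still required.
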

For infinite $g$, $\mu(\mathcal{H}_g)$ approximates $g$, and thus increases logarithmically with number of vertices, implying that network $\mathcal{H}_g$ exhibits a small-world behavior.

Interestingly, network $\mathcal{H}_g$ is absolutely uncorrelated.
\begin{proposition}
In network $\mathcal{H}_g$, $g\geq 1$, the average degree of all the neighboring vertices for vertices with degree $d$ is
$$k_{\rm nn} (d) = g+1,$$
independent of $d$.
\label{corrHg0}
\end{proposition}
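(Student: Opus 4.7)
The plan is to compute $k_{\mathrm{nn}}(d)$ directly by classifying the neighbours of a fixed vertex $v$ of degree $d = 2^{g-g_i+1}$ (so $v$ was born at iteration $g_i$) according to the iteration at which each neighbour was created, and then averaging the resulting sum of neighbour-degrees over all vertices of that degree.

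First I would track how the neighbourhood of $v$ evolves through the iterative construction of Definition~\ref{Def:NF01}. For $g_i \geq 2$, the vertex $v$ appears as one of the two new vertices of a quadrangle that replaces some edge $\{u,u'\}$ of $\mathcal{H}_{g_i-1}$, so at birth $v$ has exactly two neighbours: one ``old'' endpoint (call it $p(v)$, born at some iteration $g_{p(v)}\leq g_i-1$) and the other new vertex, born at iteration $g_i$. At each later iteration $j\in\{g_i+1,\ldots,g\}$, every current edge incident to $v$ is subdivided by a new quadrangle, which adds exactly one fresh neighbour of birth iteration $j$ per incident edge. Since the degree of $v$ doubles at each step, this gives $n_j(v) = 2^{j-g_i}$ neighbours of iteration $j$ for $g_i<j\leq g$, plus one neighbour born at $g_i$ and one born at $g_{p(v)}$. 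Using $d_u(g)=2^{g-g_u+1}$ for the degree of a vertex $u$ born at iteration $g_u$, summing neighbour-degrees and dividing by $d_v(g)=2^{g-g_i+1}$ yields
\[
\frac{1}{d_v(g)} \sum_{u \sim v} d_u(g) \;=\; (g-g_i+1) \;+\; 2^{\,g_i - g_{p(v)}},
\]
so that it remains to show that the average of $2^{g_i-g_{p(v)}}$ over all $L_v(g_i)$ vertices $v$ born at iteration $g_i$ equals $g_i$.

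This averaging identity is the main technical step, and I would establish it by double counting. Since each edge $\{u,u'\}$ of $\mathcal{H}_{g_i-1}$ produces exactly one new vertex attached to $u$ and one attached to $u'$,
\[
\sum_{v\text{ born at }g_i} 2^{\,g_i-g_{p(v)}} \;=\; \sum_{u \in \mathcal{V}(\mathcal{H}_{g_i-1})} d_u(g_i-1)\cdot 2^{\,g_i-g_u}.
\]
Because $d_u(g_i-1)=2^{g_i-g_u}$, each term equals $4^{g_i-g_u}$. Grouping vertices by birth iteration $h$ and using $L_v(1)=4$ together with $L_v(h)=2\cdot 4^{h-1}$ for $h\geq 2$, a short geometric sum produces $4^{g_i}\cdot g_i/2$; dividing by $L_v(g_i)=4^{g_i}/2$ gives exactly $g_i$. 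Combined with the previous display this proves $k_{\mathrm{nn}}(d)=g+1$ for every degree $d$ realised by vertices born at some $g_i\geq 2$.

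Finally, the hub case $g_i=1$ requires a separate short check: since $\mathcal{H}_1$ is itself a quadrangle, each of the four hubs already has two hub-neighbours at birth, followed by $n_j(v) = 2^{j-1}$ new neighbours at each later iteration $j\in\{2,\ldots,g\}$. A direct count then gives
\[
\sum_{u\sim v} d_u(g) \;=\; 2\cdot 2^g + \sum_{j=2}^{g} 2^{j-1}\cdot 2^{g-j+1} \;=\; (g+1)\cdot 2^g,
\]
so $k_{\mathrm{nn}}(2^g)=g+1$ as well. The only nontrivial step in the whole argument is the averaging identity of the third paragraph; everything else is direct bookkeeping based on the iterative construction.
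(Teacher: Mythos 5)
Your proposal is correct and takes essentially the same route as the paper: both classify the links of an iteration-$g_i$ vertex into one older neighbour, one simultaneously created neighbour, and the $2^{j-g_i}$ neighbours gained at each later iteration $j$, and both aggregate the older-neighbour contribution by summing over the vertices of $\mathcal{H}_{g_i-1}$ weighted by their degree $d_u(g_i-1)$ (your double-counting identity is exactly the first sum in the paper's expression for $k_{\rm nn}$). The only difference is organizational — you compute per vertex and then average, while the paper works with class totals from the start — so the two arguments coincide in substance.
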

\begin{proof}
Note that in network $\mathcal{H}_g$, the possible degree is $d_i(g)=2^{g-g_i+1}$ with $ 1 \leq g_i \leq g$.  {For any $d_i(g)=2^{g-g_i+1}$, let $k_{\rm nn} (d_i(g))$ be  the average degree of the neighboring vertices for  all the $L_v(g_i)$ vertices with degree $d_i(g)$.  Then, $k_{\rm nn} (d_i(g))$ is equal to the ratio of  the total degree of all neighbors of the $L_v(g_i)$ vertices having degree $d_i(g)$ to the total degree of these $L_v(g_i)$ vertices, given by}
\begin{align}\label{knn01}
k_{\rm nn} (d_i(g)) =\frac{1}{L_v(g_i)d_i(g)}&\times \Bigg(\sum\limits_{g'_{i}=1}^{g_{i}-1} {d(g'_{i},g)L_v (g'_{i})d(g'_{i},g_i-1)}+ \nonumber \\
&\quad \sum\limits_{g'_{i}=g_{i} + 1}^{g} {d(g'_{i}, g)L_v (g_{i})d(g_{i},g'_i-1)}\Bigg )+1,
\end{align}
where $d(x,y)$, $x \leq y$, represents the degree of a vertex in network $\mathcal{H}_y$, which was generated at iteration $x$. In Eq.~\eqref{knn01}, the first sum on the rhs  accounts for the links made to vertices with larger degree (i.e. $1\leq g'_{i} <g_{i}$) when the vertices was generated at iteration $g_{i}$. The second sum
explains the links made to the current smallest degree vertices
at each iteration $g'_{i}> g_{i}$. The last term 1 describes the link
connected to the simultaneously emerging vertex. After simple
algebraic manipulations, we have exactly $k_{\rm nn} (d_i(g)) = g+1$,
which does not depend on $d_i(g)$.
\end{proof}


The absence of degree correlations in network $\mathcal{H}_g$ can also be seen from its Pearson correlation coefficient.
\begin{proposition}
The Pearson correlation coefficient of network $\mathcal{H}_g$, $g\geq 1$, is
$r(\mathcal{H}_g)=0$.
\label{corrHg}
\end{proposition}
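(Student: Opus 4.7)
The plan is to follow exactly the template of the proof of Proposition~\ref{corrFt}: evaluate the three edge-sums $\sum_{m=1}^{E_g} j_m k_m$, $\sum_{m=1}^{E_g}(j_m+k_m)$, and $\sum_{m=1}^{E_g}(j_m^2+k_m^2)$ that enter~\eqref{rNe}, and show that for $\mathcal{H}_g$ the numerator of~\eqref{rNe} vanishes identically.

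The decisive ingredient is Proposition~\ref{corrHg0}, which gives $k_{\rm nn}(d)=g+1$ in $\mathcal{H}_g$, independently of $d$. Regrouping $\sum_m j_m k_m$ vertex-by-vertex,
\[
\sum_{m=1}^{E_g} j_m k_m =\tfrac{1}{2}\sum_{i=1}^{N_g} d_i(g)\!\!\sum_{(i,j)\in\mathcal{E}(\mathcal{H}_g)}\!\! d_j(g) =\tfrac{1}{2}\sum_{g_i=1}^{g} L_v(g_i)\,d_i(g)^2\,k_{\rm nn}(d_i(g)),
\]
the constant $k_{\rm nn}=g+1$ factors out and, using $\sum_i d_i(g)^2=\sum_m(j_m+k_m)$, this produces the crucial identity
\[
\sum_{m=1}^{E_g} j_m k_m =\tfrac{g+1}{2}\sum_{m=1}^{E_g}(j_m+k_m).
\]
Since $\mathcal{H}_g$ and $\mathcal{F}_g$ share the same degree sequence, I can import the value $\sum_m(j_m+k_m)=(g+1)\,2^{2g+1}$ already obtained in the proof of Proposition~\ref{corrFt} without any further work. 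Consequently $\sum_m j_m k_m=(g+1)^2\,4^g$.

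Substituting into~\eqref{rNe} together with $E_g=4^g$,
\[
E_g\sum_{m=1}^{E_g} j_m k_m =(g+1)^2\,16^g =\Big[\tfrac12\sum_{m=1}^{E_g}(j_m+k_m)\Big]^2,
\]
so the numerator of~\eqref{rNe} vanishes and the claim $r(\mathcal{H}_g)=0$ follows. The only mild subtlety, and the one point I expect to need a word of comment, is the borderline case $g=1$: there $\mathcal{H}_1$ is a $4$-cycle, regular of degree $2$, so the denominator of~\eqref{rNe} also vanishes and $r$ is formally of the shape $0/0$. Under the standard convention that a regular graph is uncorrelated, this still reads as $r(\mathcal{H}_1)=0$; for every $g\geq 2$ the degree sequence is non-constant, the denominator is strictly positive, and the vanishing of the numerator directly yields the stated equality.
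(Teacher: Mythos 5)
Your proposal is correct and follows essentially the same route as the paper: both use Proposition~\ref{corrHg0} ($k_{\rm nn}(d)=g+1$, independent of $d$) together with the degree-sequence sums $\sum_m j_m k_m=4^g(g+1)^2$ and $\sum_m(j_m+k_m)=(g+1)2^{2g+1}$ to show the numerator of~\eqref{rNe} vanishes. Your extra remark about the degenerate $g=1$ case (where the denominator also vanishes for the regular $4$-cycle) is a valid observation that the paper glosses over, but it does not change the substance of the argument.
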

\begin{proof}
Using a similar proof process for Proposition \ref{corrFt}, we can determine related summations over all $E_g$ edges in $\mathcal{H}_g$:
\begin{equation*}
\sum \limits_{m=1}^{E_g} j_m k_m =\frac{1}{2} \sum \limits_{g_i=1}^{g} L_v(g_i) d^2_i(g) k_{\rm nn}(d_i(g))
 =  4^{g}(g+1)^2,
\end{equation*}
and
\begin{equation*}
\sum \limits_{m=1}^{E_g} (j_m  + k_m)  = \sum \limits_{i=1}^{N_g}d_i^2(g)
 =  \sum \limits_{g_i=1}^{g} L_v(g_i) d^2_i(g)
 =  (g+1)2^{2g+1}.
\end{equation*}
Then, according to Eq.~\eqref{rNe}, the numerator of $r(\mathcal{H}_g)$ is
\begin{equation*}
E_g\sum \limits_{m=1}^{E_g} j_m k_m - \left[\sum \limits_{m=1}^{E_g} \frac{1}{2} (j_m+ k_m)\right]^2= 4^g 4^{g} (g+1)^2 - \left[ 2^{2g}(g+1) \right]^2 =  0,
\end{equation*}
which leads to $r(\mathcal{H}_g)=0$.
\end{proof}



\subsection{Pfaffian orientation}

We here define an orientation of network $\mathcal{H}_g$, and then prove that the orientation is a Pfaffian orientation of $\mathcal{H}_g$, by using its self-similar structure.

\begin{definition}\label{Def:LW02}
The orientation $\mathcal{H}_g^e$ of network $\mathcal{H}_g$, $g \geq 1$, is defined as follows:

For $g=1$, the orientations of four edges $(v_1,v_2)$, $(v_1,v_3)$, $(v_2,v_4)$, and $(v_3,v_4)$ in $\mathcal{H}_1$ are, respectively, from $v_1$ to $v_2$, from $v_1$ to $v_3$, $v_4$ to $v_2$, and $v_3$ to $v_4$.

For $g>1$, $\mathcal{H}_g^e$ is obtained from $\mathcal{H}_{g-1}^e$. Note that $\mathcal{H}_g$ includes  four copies of $\mathcal{H}_{g-1}$, denoted by $\mathcal{H}_{g-1}^{(i)}$, $1\leq i \leq 4$. The orientations of  $\mathcal{H}_{g-1}^{(i)}$ are represented by $\mathcal{H}_{g-1}^{e\,(i)}$, each of which is a replica of $\mathcal{H}_{g-1}^e$.
\end{definition}

Fig.~\ref{Fig.4} shows the orientations of network $\mathcal{H}_g$ for $g=1,2,3$.

 \begin{figure}
 \centering
 \includegraphics[width=0.6\textwidth]{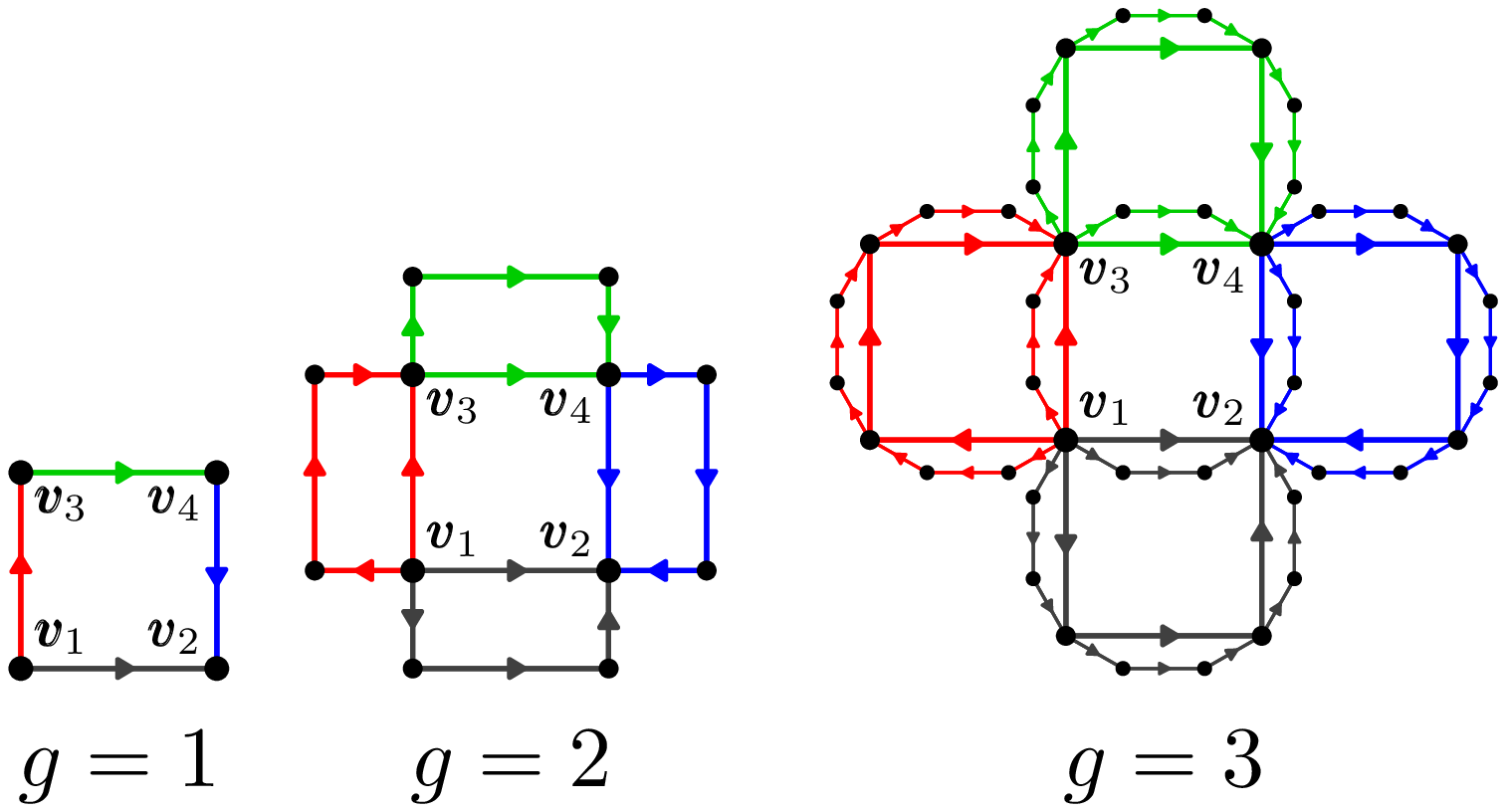}
 \caption{The orientation $\mathcal{H}_g^e$ of $\mathcal{H}_g$ for $g=1,2,3$.}
 \label{Fig.4}
 \end{figure}

Although no polynomial algorithm for checking whether a given orientation is Pfaffian or not is known~\cite{Lin200916}, for the network $\mathcal{H}_g$, we can prove that $\mathcal{H}_g^e$ is Pfaffian.

\begin{theo}\label{Lem:LW04}
For all $g\geq 1$, the orientation $\mathcal{H}_g^e$ is a Pfaffian orientation of network $\mathcal{H}_g$.
\end{theo}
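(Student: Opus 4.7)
The plan is to induct on $g$, exploiting the self-similar decomposition from Definition~\ref{Def:NF02} of $\mathcal{H}_g$ into four copies $\mathcal{H}_{g-1}^{(1)},\dots,\mathcal{H}_{g-1}^{(4)}$ that meet only at the four hub cut vertices $v_1,v_2,v_3,v_4$ of $\mathcal{H}_g$. The base case $g=1$ is immediate: $\mathcal{H}_1$ is a single quadrangle, which is its own unique nice cycle, and reading off the four prescribed arcs $v_1\to v_2$, $v_1\to v_3$, $v_4\to v_2$, $v_3\to v_4$ one finds that traversing the cycle gives exactly one co-oriented edge, an odd count.

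For the inductive step, I would take an arbitrary nice cycle $C$ of $\mathcal{H}_g$ and use the fact that the copies meet in cut vertices to split into two cases. In the first case, $C$ lies entirely inside one copy $\mathcal{H}_{g-1}^{(i)}$, and the main sub-task is to promote $C$ to a nice cycle of $\mathcal{H}_{g-1}^{(i)}$. Starting from a perfect matching $M$ of $\mathcal{H}_g\setminus C$, I would inspect how $M$ covers the shared hubs belonging to the other three copies and, when needed, perform local swaps inside those copies so that the two shared hubs of $\mathcal{H}_{g-1}^{(i)}$ are matched inside $\mathcal{H}_{g-1}^{(i)}$ itself. The required flexibility --- that each $\mathcal{H}_{g-1}^{(j)}$ admits a perfect matching with any prescribed subset of its two shared hubs left uncovered --- follows from a small companion induction on $g$. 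Once niceness transfers, the inductive hypothesis yields that $C$ is oddly oriented in $\mathcal{H}_{g-1}^{e\,(i)}$, hence in $\mathcal{H}_g^e$.

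In the second case $C$ crosses copies and decomposes cyclically as $P_1P_2\cdots P_k$ where each $P_j$ is an elementary path inside one copy joining that copy's two shared hubs. The parity of $C$ equals $\sum_j \pi(P_j) \bmod 2$, where $\pi(P_j)$ counts co-oriented edges of $P_j$ traversed in the direction prescribed by $C$. An auxiliary parity lemma --- an immediate consequence of the inductive hypothesis applied inside each copy --- says that any two hub-to-hub paths in $\mathcal{H}_{g-1}^{(i)}$ whose union is a nice cycle of $\mathcal{H}_{g-1}^{(i)}$ have opposite parities. Fixing a reference hub-to-hub path $\hat{P}_{(i)}$ in each copy and using the lemma to substitute each $P_j$ by $\hat{P}_{(i_j)}$ with a controllable parity change, the total parity of $C$ reduces to the parity of a closed walk on $\{v_1,v_2,v_3,v_4\}$ built from the $\hat{P}_{(i)}$'s, which can be computed directly from the base orientation $\mathcal{H}_1^e$ and checked to be odd in every admissible cyclic pattern.

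The principal obstacle will be the bookkeeping in the second case: one must enumerate the cyclic sequences of copies that $C$ can visit (constrained by the $4$-cycle pattern of hub sharing among the four copies), choose reference paths in each copy consistently, and verify that in each pattern the resulting closed walk on the shared hubs has odd parity. The promotion step in the first case is conceptually routine but still requires a careful proof of the ``prescribed uncovered hubs'' perfect-matching claim for the copies, which must itself be done by induction alongside the main argument.
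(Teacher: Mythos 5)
Your overall frame (induction on $g$ over the four-copy decomposition, base case checked on the quadrangle, then a two-case split according to whether the nice cycle $C$ stays in one copy or crosses copies) is the paper's frame, but both halves of your inductive step have genuine gaps. In the first case, your promotion mechanism does not work as described: if the perfect matching $M$ of $\mathcal{H}_g\setminus C$ matches both shared hubs of $\mathcal{H}_{g-1}^{(i)}$ by edges lying in the neighbouring copies, then whether those two hubs can instead be covered \emph{inside} $\mathcal{H}_{g-1}^{(i)}\setminus C$ is a property of $\mathcal{H}_{g-1}^{(i)}\setminus C$ alone; no amount of re-matching (``local swaps'') in the other three copies can create the needed edges, and leaving the hubs uncovered by the other copies just destroys the perfect matching. (Also, your companion claim ``a perfect matching with any prescribed subset of the two shared hubs left uncovered'' is false for subsets of odd size, since $N_{g-1}$ is even.) The paper sidesteps this entirely: it descends to the \emph{smallest} sub-copy $\mathcal{L}\cong\mathcal{H}_k$ containing $C$; minimality forces $C$ to wind through all four hubs of $\mathcal{L}$, hence through every vertex at which $\mathcal{L}$ attaches to the rest of $\mathcal{H}_g$, so in $\mathcal{H}_g\setminus C$ the set $\mathcal{L}\setminus C$ is separated from everything else and $M$ restricts verbatim to a perfect matching of $\mathcal{L}\setminus C$. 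Thus $C$ is a nice cycle of $\mathcal{L}$ and the induction hypothesis applies with no matching surgery at all.

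In the second case you are also missing a justification and working harder than necessary. Your substitution scheme replaces each hub-to-hub piece $P_j$ by a reference path $\hat P_{(i_j)}$ using a parity lemma that presupposes $P_j\cup\hat P_{(i_j)}$ is a \emph{nice cycle} of the copy; nothing guarantees the union is even a cycle (the two paths may share interior vertices), let alone a nice one, so the ``controllable parity change'' is unsupported. The paper instead proves a dedicated path lemma (Lemma~\ref{Lem:LW03}): every nice path of $\mathcal{H}_{g-1}$ between the two distinguished hubs is oddly oriented, established by its own induction over the same decomposition. It then uses the ring structure of the gluing (copy $1$ joins $v_1,v_3$; copy $2$ joins $v_3,v_4$; copy $3$ joins $v_4,v_2$; copy $4$ joins $v_2,v_1$) to see that a cross-copy nice cycle decomposes into exactly four sub-paths, one per copy in the cyclic order $v_1,v_3,v_4,v_2$ — there is no family of ``cyclic patterns'' to enumerate — each of which is a nice path of its copy; three are traversed forward (odd) and one backward (even), giving odd total parity. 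If you want to salvage your outline, you should replace the promotion-by-swaps step with the smallest-sub-copy argument and replace the reference-path bookkeeping with a path-parity lemma of this kind.
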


{In order to prove Theorem~\ref{Lem:LW04}, we first prove that for any $g\geq 1$,  there exist perfect matchings for $\mathcal{H}_g$.  When $g=1$,  $\mathcal{H}_1$  is a quadrangle, which has two perfect matchings. Suppose that $\mathcal{H}_{g-1}$ ($g\geq 2$) has perfect matchings. If we keep the matching configurations  for vertices in  $\mathcal{H}_{g-1}$, we can cover those new vertices generated at iteration $g$ as follows. According to the first network construction approach, for any two vertices generated by an old  edge in   $\mathcal{H}_{g-1}$,  we cover this vertex pair by the new edge connecting them in $\mathcal{H}_g$. }

We further introduce some notations and give some auxiliary results. For arbitrary sequential hub vertices $u_1, u_2, \cdots, u_m$, let $S_{\mathcal{H}_g}(u_1, u_2, \cdots, u_m)$  present the set of paths (if $u_1\neq u_m$) or cycles (if $u_1=u_m$) of $\mathcal{H}_g$, where each path or cycle takes the form $u_1-\cdots-u_2-\cdots-u_m$,  exclusive of other hub vertices.  {Obviously, in $\mathcal{H}_g$ there exist nice paths starting from vertex $v_1$ to $v_2$. For example, the directed edge from $v_1$ to $v_2$ is a nice path.}
\begin{lemma}\label{Lem:LW03}
For $g\geq 1$, if $P$ is a nice path of $\mathcal{H}_g$ starting from vertex $v_1$ to $v_2$, then $P$ is oddly oriented relative to $\mathcal{H}_g^e$.
\end{lemma}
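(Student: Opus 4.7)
The plan is to induct on $g$. For the base case $g=1$, the oriented quadrangle has edges $v_1\to v_2$, $v_1\to v_3$, $v_4\to v_2$, $v_3\to v_4$, and only two elementary paths go from $v_1$ to $v_2$: the single edge $v_1\to v_2$ and the three-edge path $v_1\to v_3\to v_4\to v_2$. A direct inspection shows both are nice and both have all their edges oriented in the direction of traversal, giving one and three co-oriented edges respectively, both odd.

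For the inductive step, suppose the lemma holds for $\mathcal{H}_{g-1}^e$, and let $P$ be a nice path from $v_1$ to $v_2$ in $\mathcal{H}_g$. By Definition~\ref{Def:NF02}, $\mathcal{H}_g$ is the union of four copies $\mathcal{H}_{g-1}^{(i)}$ that share only the four hubs of $\mathcal{H}_g$, in the cyclic pattern: copies $1$ and $4$ share $v_1$, copies $1$ and $2$ share $v_3$, copies $2$ and $3$ share $v_4$, and copies $3$ and $4$ share $v_2$. Because $P$ is elementary, it can switch copies only at a shared hub and can never revisit $v_1$ or $v_2$. A short enumeration of the possible sequences of visited copies rules out alternatives like $(1,4,3)$ or $(4,1,2,3)$ (they would force revisiting $v_1$) and $(4,3)$ or $(1,2,3,4)$ (they would force passing through $v_2$ before the end), leaving exactly two configurations: either (A) $P$ lies entirely in copy~$4$ and runs from $v_1^{(4)}$ to $v_2^{(4)}$, or (B) $P$ splits as $P_1\cdot P_2\cdot P_3$ with $P_i$ a path from $v_1^{(i)}$ to $v_2^{(i)}$ inside copy~$i$, joined at $v_3$ and $v_4$.

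Next I transport the nice-path condition down one level. Pick a perfect matching $M$ of $\mathcal{H}_g\setminus P$ and let $M_i$ be the restriction of $M$ to the edges of copy~$i$. In each case, every hub of $\mathcal{H}_g$ that lies in copy~$i$ is an endpoint of the relevant sub-path (hence removed) whenever the sub-path is non-empty, while the hubs of $\mathcal{H}_g$ that are \emph{not} touched by $P$ belong to copies disjoint from copy~$i$. Therefore every vertex of copy~$i$ outside the sub-path is internal to copy~$i$ in $\mathcal{H}_g$ and must be matched by an edge of copy~$i$, so $M_i$ is a perfect matching of $\mathcal{H}_{g-1}^{(i)}$ minus the sub-path. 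Under the canonical isomorphism $\mathcal{H}_{g-1}^{(i)}\cong\mathcal{H}_{g-1}$ (sending $v_k^{(i)}$ to $v_k$), each sub-path becomes a nice $v_1$-to-$v_2$ path of $\mathcal{H}_{g-1}$.

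By Definition~\ref{Def:LW02} the orientation $\mathcal{H}_{g-1}^{e,(i)}$ is an exact replica of $\mathcal{H}_{g-1}^e$, so the parity of co-oriented edges along a sub-path in $\mathcal{H}_g^e$ equals the parity of its image in $\mathcal{H}_{g-1}^e$. The inductive hypothesis makes each sub-path oddly oriented, and hence $P$ itself carries an odd number of co-oriented edges: one odd contribution in case~(A) and the sum of three odd contributions in case~(B). The step I expect to be the main obstacle is the case enumeration, because it requires a careful parity/reachability argument in the quotient $4$-cycle of copies to rule out all longer visit sequences; everything that follows reduces to a clean two-case restriction of $M$ and an application of the inductive hypothesis.
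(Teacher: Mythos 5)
Your proof is correct and follows essentially the same route as the paper: induction on $g$, the observation that a nice $v_1$--$v_2$ path either lies entirely in the copy $\mathcal{H}_{g-1}^{(4)}$ or splits into three $v_1^{(i)}$--$v_2^{(i)}$ sub-paths through copies $1,2,3$, and the inductive hypothesis applied to the replica orientations with parities adding. The only difference is that you explicitly justify the two steps the paper merely asserts (the enumeration of copy-visit sequences and the transport of niceness to the sub-paths via restricting a perfect matching), which is a welcome filling-in of detail rather than a new approach.
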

\begin{proof}
By induction. For $g=1$, it is obvious that the base case holds.

For $g>1$, suppose that the statement is true for $\mathcal{H}_{g-1}$. By construction, for any nice $P$ of $\mathcal{H}_g$ from vertex $v_1$ to $v_2$, it belongs to either of the two sets: $S_{\mathcal{H}_g}(v_1,v_2)$ and $S_{\mathcal{H}_g}(v_1,v_3,v_4,v_2)$.

For the first case $P \in S_{\mathcal{H}_g}(v_1,v_2)$, $P$ is evidently a nice path of $\mathcal{H}_{g-1}^{(4)}$ starting from vertex $v_1^{(4)}$ to vertex $v_2^{(4)}$. By induction hypothesis, $P$ is oddly oriented.

For the second case $P \in S_{\mathcal{H}_g}(v_1,v_3,v_4,v_2)$, we split $P$ into three sub-paths, $P_1$, $P_2$ and $P_3$, such that $P_1 \in S_{\mathcal{H}_g}(v_1,v_3)$, $P_2 \in S_{\mathcal{H}_g}(v_3,v_4)$ and $P_3 \in S_{\mathcal{H}_g}(v_4,v_2)$. Notice that $P_1$ corresponds to a nice path of $\mathcal{H}_{g-1}^{(1)}$ from $v_1^{(1)}$ to  $v_2^{(1)}$. By induction hypothesis, $P_1$ is oddly oriented. Analogously, we can prove that $P_2$ and $P_3$ are both oddly oriented. Therefore, $P$ is oddly oriented.
\end{proof}

\begin{refproof}[Proof of Theorem~\ref{Lem:LW04}.]
In order to prove that $\mathcal{H}_g^e$ is a Pfaffian orientation of $\mathcal{H}_g$, we only require to prove that every nice cycle of $\mathcal{H}_g$ is oddly oriented relative to the orientation $\mathcal{H}_g^e$ of $\mathcal{H}_g$. By induction. For $g=1$, $\mathcal{H}_1$ has a unique nice cycle. It is easy to see that this nice cycle is oddly oriented relative to $\mathcal{H}_1^e$.

For $g>1$, assume that the statement is true for all $\mathcal{H}_j$ ($1 \leq j < g$). Let $C$ be an arbitrary nice cycle of $\mathcal{H}_g$. By construction, $C$  belongs to either a subgraph $\mathcal{H}_{g-1}^{(i)}$, $i=1,2,3,4$, of $\mathcal{H}_g$ or set $S_{\mathcal{H}_g}(v_1,v_3,v_4,v_2,v_1)$.

When $C$ belongs to $\mathcal{}H_{g-1}^{(i)}$, $i=1,2,3,4$, we can prove that there exists a subgraph $\mathcal{L}$ of $\mathcal{H}_{g-1}^{(i)}$ satisfying that $\mathcal{L}$ is isomorphic to $\mathcal{H}_k$ (with $k$ being the smallest integer between $1$ and $g-1$) and ${C}$ is a nice cycle of $\mathcal{L}$. Such a subgraph $\mathcal{L}$ can be obtained in the following manner. First, let $\mathcal{L}=\mathcal{H}_{g-1}^{(i)}$. If ${C}$ belongs to one of the four mutually isomorphic subgraphs $\mathcal{H}_{g-2}^{(i')}$, $i'=1,2,3,4$, forming $\mathcal{L}$, then let $\mathcal{L}=H_{g-2}^{(i')}$. In an analogous  way, by iteratively using the operations on $\mathcal{H}_{g-2}^{(i')}$ and the resulting subgraphs, we can find the smallest integer $k$ ($1 \leq k < g$), such that $\mathcal{L}$ is isomorphic to $\mathcal{H}_k$. We next show that $C$ is a nice cycle of $\mathcal{L}$. {Let $v_1^{*}$, $v_2^{*}$, $v_3^{*}$ and $v_4^{*}$ be the four hub vertices of $\mathcal{L}$, corresponding to the hubs $v_1$, $v_2$, $v_3$ and $v_4$ in $\mathcal{H}_k$. Then $C$ must belongs to $S_{\mathcal{L}}(v_1^{*}, v_3^{*}, v_4^{*}, v_2^{*}, v_1^{*})$. Therefore, in $\mathcal{H}_g \setminus C$, the vertices in $\mathcal{L} \setminus C$ are separated from other ones.  Because $C$ is a nice cycle of $\mathcal{H}_g$, $\mathcal{H}_g \setminus C$ has a perfect matching, implying that $\mathcal{L} \setminus C$ has also a perfect matching. Hence, $C$ is a nice cycle of $\mathcal{L}$. By induction hypothesis, $C$ is oddly oriented relative to $\mathcal{L}$, indicating that $C$ is oddly oriented with respect to  $\mathcal{H}_g^e$.}

When the nice cycle $C \in S_{\mathcal{H}_g}(v_1,v_3,v_4,v_2,v_1)$, it can be split into four nice sub-paths $P_1$, $P_2$, $P_3$ and $P_4$, with $P_1 \in S_{\mathcal{H}_g}(v_1,v_3)$, $P_2 \in S_{\mathcal{H}_g}(v_3,v_4)$, $P_3 \in S_{\mathcal{H}_g}(v_4,v_2)$, and $P_4 \in S_{\mathcal{H}_g}(v_2,v_1)$. {Then, $P_1$, $P_2$, $P_3$ are nice paths of $\mathcal{H}^{(1)}_{g-1}$, $\mathcal{H}^{(2)}_{g-1}$, and $\mathcal{H}^{(3)}_{g-1}$, respectively. By Lemma~\ref{Lem:LW03}, $P_1$, $P_2$, $P_3$  are all oddly oriented. Analogously, $P_4$ is evenly oriented. Thus,  $C$ is oddly oriented with respect to $\mathcal{H}_g^e$.}
\end{refproof}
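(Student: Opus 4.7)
The plan is to proceed by induction on $g$, exploiting the self-similar decomposition of $\mathcal{H}_g$ into four copies $\mathcal{H}_{g-1}^{(i)}$ amalgamated at the hubs $v_1, v_2, v_3, v_4$. For the base case $g=1$, $\mathcal{H}_1$ has a unique nice cycle (itself), and a direct inspection of the explicit orientation shows that this $4$-cycle contains an odd number of co-oriented edges in either traversal direction. A preliminary remark is that $\mathcal{H}_g$ admits perfect matchings for every $g$ (easy induction: extend any perfect matching of $\mathcal{H}_{g-1}$ by matching each pair of newly inserted vertices across the new edge between them), so the notion of nice cycle is nonvacuous.

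For the inductive step, I would first classify the nice cycles of $\mathcal{H}_g$. Since any two of the four copies share at most one hub vertex, the ``copy graph'' of $\mathcal{H}_g$ is itself a $4$-cycle, and hence any simple cycle $C$ not contained in a single $\mathcal{H}_{g-1}^{(i)}$ must traverse the four copies in cyclic order and therefore pass through all four hubs; concretely, $C \in S_{\mathcal{H}_g}(v_1,v_3,v_4,v_2,v_1)$. This structural dichotomy is the crux of the argument. In the first case, with $C$ contained in some $\mathcal{H}_{g-1}^{(i)}$, I would iteratively descend to the smallest subgraph $\mathcal{L} \cong \mathcal{H}_k$ with $k<g$ still containing $C$; because the hubs of $\mathcal{L}$ separate $\mathcal{L}$ from the remainder of $\mathcal{H}_g$, any perfect matching of $\mathcal{H}_g\setminus C$ restricts to one of $\mathcal{L}\setminus C$, so $C$ is nice in $\mathcal{L}$ and the induction hypothesis yields that $C$ is oddly oriented.

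In the second case, I would decompose $C$ at the four hubs into sub-paths $P_i \subseteq \mathcal{H}_{g-1}^{(i)}$ with $P_1 \in S_{\mathcal{H}_g}(v_1,v_3)$, $P_2 \in S_{\mathcal{H}_g}(v_3,v_4)$, $P_3 \in S_{\mathcal{H}_g}(v_4,v_2)$, and $P_4 \in S_{\mathcal{H}_g}(v_2,v_1)$. Under the hub identifications, $P_1$, $P_2$, $P_3$ each correspond to nice paths from $v_1$ to $v_2$ inside a copy of $\mathcal{H}_{g-1}$, so Lemma~\ref{Lem:LW03} makes them oddly oriented, whereas $P_4$ corresponds to the reverse traversal of a $v_1$-to-$v_2$ nice path, which flips parity and makes it evenly oriented. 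Since a nice path has odd length, the total number of co-oriented edges in $C$ is odd $+$ odd $+$ odd $+$ even $=$ odd, as required. The main obstacle I anticipate is in the first case: properly verifying that $C$ remains nice after passing to the smaller self-similar subgraph $\mathcal{L}$, which rests on an articulation-type separation argument at the hubs of $\mathcal{L}$ and on ensuring that the iterative descent terminates at a well-defined $\mathcal{H}_k$. The auxiliary Lemma~\ref{Lem:LW03} itself would be proved by an entirely parallel induction, splitting a nice $v_1$-to-$v_2$ path either as a single nice path in $\mathcal{H}_{g-1}^{(4)}$ or as a concatenation of three nice $v_1$-to-$v_2$ paths across copies $1,2,3$.
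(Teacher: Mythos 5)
Your proposal takes essentially the same route as the paper's proof: the same induction with the same dichotomy (a nice cycle either lies in a single copy $\mathcal{H}_{g-1}^{(i)}$, handled by descending to a minimal self-similar subgraph $\mathcal{L}\cong\mathcal{H}_k$ and using the hub-separation argument to see that $C$ stays nice, or passes through all four hubs and is split into $P_1,P_2,P_3,P_4$ handled via Lemma~\ref{Lem:LW03}). Your added details -- the copy-graph justification of the dichotomy and the explicit odd-length parity flip showing $P_4$ is evenly oriented while $P_1,P_2,P_3$ are oddly oriented -- merely make explicit what the paper leaves implicit, so the two arguments coincide.
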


\subsection{Number of perfect matchings}

We are now ready to determine the number and entropy of perfect matchings in network $\mathcal{H}_g$. The main results can be stated as follows.

\begin{theo}\label{Lem:LW12}
The number of perfect matchings of $\mathcal{H}_g$, $g\geq 1$, is $\psi(H_g)=2^{\frac{1}{9}\cdot4^g+\frac{2}{3}g-\frac{1}{9}}$, and the entropy for perfect matchings in $\mathcal{H}_g$, $g\to \infty$, is $z(\mathcal{H}_g)=\frac{\ln2}{3}$.
\end{theo}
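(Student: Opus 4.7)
The plan is to use the amalgamation structure of Definition~\ref{Def:NF02} --- in which $\mathcal{H}_g$ is assembled from four copies $\mathcal{H}_{g-1}^{(i)}$ glued in pairs at their hub vertices --- to derive a coupled recursion for $\theta_g=\psi(\mathcal{H}_g)$ and an auxiliary quantity $\phi_g$, defined as the number of matchings of $\mathcal{H}_g$ covering every vertex except $v_1$ and $v_2$ (equivalently, the number of perfect matchings of $\mathcal{H}_g\setminus\{v_1,v_2\}$). The Pfaffian orientation established in Theorem~\ref{Lem:LW04} guarantees that $\theta_g>0$, but the counting itself is carried out by direct enumeration.

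The key combinatorial observation is a parity argument. Since $N_{g-1}$ is even, any matching of $\mathcal{H}_{g-1}^{(i)}$ covering all of its non-shared vertices must leave an even number of the shared hubs $\{v_1^{(i)},v_2^{(i)}\}$ uncovered --- either $0$ (contributing $\theta_{g-1}$) or $2$ (contributing $\phi_{g-1}$); the ``single-hub-uncovered'' state is forbidden. Encoding each copy by $s_i\in\{0,1\}$, where $s_i=1$ means both of its identified hubs are covered internally, the requirement that each identified hub of $\mathcal{H}_g$ be covered by exactly one of its two containing copies --- read off Definition~\ref{Def:NF02}: $v_1$ is shared by copies $1,4$; $v_3$ by $1,2$; $v_4$ by $2,3$; $v_2$ by $3,4$ --- yields
\[
s_1+s_4=s_1+s_2=s_2+s_3=s_3+s_4=1,
\]
whose only solutions are $(s_1,s_2,s_3,s_4)\in\{(1,0,1,0),(0,1,0,1)\}$. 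Each contributes $\theta_{g-1}^2\phi_{g-1}^2$, so $\theta_g=2\,\theta_{g-1}^2\phi_{g-1}^2$. For $\phi_g$, requiring $v_1$ and $v_2$ to remain uncovered forces $s_1=s_3=s_4=0$; the XOR conditions at $v_3,v_4$ then force $s_2=1$, giving a single configuration with $\phi_g=\theta_{g-1}\phi_{g-1}^3$.

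To close out, I would take logarithms. With $t_g=\log_2\theta_g$, $p_g=\log_2\phi_g$, and the initial data $t_1=1,\ p_1=0$, the system becomes $t_g=2t_{g-1}+2p_{g-1}+1$ and $p_g=t_{g-1}+3p_{g-1}$. The coefficient matrix has eigenvalues $1$ and $4$; the left eigenvectors $(1,-1)$ and $(1,2)$ decouple the recursion into $t_g-p_g=g$ and $t_g+2p_g=\tfrac{1}{3}(4^g-1)$, yielding $t_g=\tfrac{1}{9}(4^g+6g-1)$, which is exactly the stated exponent. Combined with $N_g/2=\tfrac{1}{3}(4^g+2)$, substituting into \eqref{Equ:Pre02} gives $z(\mathcal{H}_g)=\lim_{g\to\infty}\tfrac{t_g\ln 2}{N_g/2}=\tfrac{\ln 2}{3}$.

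The main obstacle is the second paragraph: one must carefully unpack Definition~\ref{Def:NF02} to obtain the correct XOR constraints at each of the four shared hubs, and justify the parity step that eliminates the ``single-hub-uncovered'' state on each copy; once this combinatorial bookkeeping is in place, the remaining linear algebra and limit computation are routine.
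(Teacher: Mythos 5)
Your proposal is correct, but it proves the theorem by a genuinely different route than the paper. The paper never enumerates matchings directly: it evaluates $\det(A(\mathcal{H}_g^e))$ for the Pfaffian orientation of Theorem~\ref{Lem:LW04}, deriving determinant recursions for the sub-matrices of Definition~\ref{Def:LW03} via Laplace expansion (Lemmas~\ref{Lem:LW06}--\ref{Lem:LW09}), namely $\det(K_{g+1})=[\det(K_g)]^3\det(A_g)$ and $\det(A_{g+1})=4[\det(A_g)]^2[\det(K_g)]^2$, and then applies Lemma~\ref{Lem:Pre01} to take a square root. You instead run a transfer-matrix-style enumeration on the four-copy amalgamation of Definition~\ref{Def:NF02}, exactly in the spirit of the paper's own treatment of $\mathcal{F}_g$ in Theorems~\ref{theomm} and~\ref{TheoNMMA}: the parity step (each copy leaves an even number of its two shared hubs uncovered) and the XOR constraints at the four identified hubs are stated correctly, the hub--copy incidences you read off Definition~\ref{Def:NF02} are right, and the initial values $\theta_1=2$, $\phi_1=1$ and the linear-algebraic solution $t_g=\frac19(4^g+6g-1)$ reproduce the stated exponent and entropy. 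What your approach buys is economy: it bypasses the entire Pfaffian apparatus (Theorem~\ref{Lem:LW04} and Lemmas~\ref{Lem:LW05}--\ref{Lem:LW11}), whereas the paper's route additionally yields the Pfaffian orientation as a result of independent interest; conversely, your recursions $\theta_{g+1}=2\theta_g^2\phi_g^2$ and $\phi_{g+1}=\theta_g\phi_g^3$ are precisely the square roots of the paper's determinant recursions (with $\theta_g^2=\det A_g$, $\phi_g^2=\det K_g$), so the two arguments cross-check each other; they also show that the paper's stated initial value $\det(K_1)=2$ should read $\det(K_1)=1$, since $K_1$ is the $2\times 2$ skew matrix of the single edge $(v_3,v_4)$. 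Two minor caveats: your appeal to Theorem~\ref{Lem:LW04} for $\theta_g>0$ is both unnecessary and not quite valid (a Pfaffian orientation by itself does not imply existence of a perfect matching; the paper proves existence by a separate induction), and in a full write-up you should state explicitly that the four copies pairwise intersect only in single hub vertices and share no edges, which is what legitimizes the independent multiplication over copies.
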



Below we will prove Theorem \ref{Lem:LW12} by evaluating the determinant of the skew adjacency matrix for the Pfaffian orientation $\mathcal{H}_g^e$ of network $\mathcal{H}_g$. To this end, we first introduce some additional quantities and provide some lemmas.

\begin{definition}\label{Def:LW03}
The six matrices $A_g$, $B_g$, $B_g^{\prime}$, $D_g$, $D_g^{\prime}$ and $K_g$ associated with the Pfaffian orientation $\mathcal{H}_g^e$ are defined as follows:

$A_g$ is the skew adjacency matrix $A(H_g^e)$, for simplicity.

$B_g$ (or $B_g^{\prime}$) is a sub-matrix of $A_g$, obtained by deleting from $A_g$ the row and column corresponding to vertex $v_1$ (or $v_2$).

$D_g$ (or $D_g^{\prime}$) is a sub-matrix of $A_g$, obtained by deleting from $A_g$ the row corresponding to vertex $v_1$ (or $v_2$) and the column corresponding to vertex $v_2$ (or $v_1$).

$K_g$ is a sub-matrix of $A_g$, obtained by deleting from $A_g$ two rows and two columns corresponding to vertex $v_1$ and $v_2$.
\end{definition}

The following Lemma is immediate from the second construction of network $\mathcal{H}_{g+1}$, see Definition~\ref{Def:NF02}.
\begin{lemma}\label{Lem:LW05}
For $g\geq 1$, matrices $A_{g+1}$, $B_{g+1}$, $B_{g+1}^{\prime}$, $D_{g+1}$,  $D_{g+1}^{\prime}$ and $K_{g+1}$ satisfy the following relations:
\begin{equation}\label{Equ:LW01}
A_{g+1}=\left(
\begin{array}{cccccccc}
 0 & 1 & 1 & 0 & x_g & o  & o  & x_g \\
 -1 & 0 & 0 & -1 & o  & o & y_g & y_g  \\
 -1 & 0 & 0 & 1 & y_g & x_g & o  & o  \\
 0 & 1 & -1 & 0 & o  & y_g  & x_g & o \\
 -x_g^{\top} & o^{\top}  & -y_g^{\top} & o^{\top}  & K_g & O  & O  & O  \\
 o^{\top}  & o^{\top} & -x_g^{\top} & -y_g^{\top}  & O  & K_g & O  & O  \\
 o^{\top}  & -y_g^{\top} & o^{\top}  & -x_g^{\top} & O  & O  & K_g & O  \\
 -x_g^{\top} & -y_g^{\top}  & o^{\top}  & o^{\top} & O  & O  & O  & K_g
\end{array}\right)\,,
\end{equation}
\begin{equation}\label{Equ:LW02}
B_{g+1}=\left(
\begin{array}{ccccccc}
 0 & 0 & -1 & o  & o & y_g & y_g  \\
 0 & 0 & 1 & y_g & x_g & o  & o  \\
 1 & -1 & 0 & o  & y_g  & x_g & o \\
 o^{\top}  & -y_g^{\top} & o^{\top}  & K_g & O  & O  & O  \\
 o^{\top} & -x_g^{\top} & -y_g^{\top}  & O  & K_g & O  & O  \\
 -y_g^{\top} & o^{\top}  & -x_g^{\top} & O  & O  & K_g & O  \\
 -y_g^{\top}  & o^{\top}  & o^{\top} & O  & O  & O  & K_g
\end{array}\right)\,,
\end{equation}
\begin{equation}\label{Equ:LW03}
B_{g+1}^{\prime}=\left(
\begin{array}{ccccccc}
 0 & 1 & 0 & x_g & o  & o  & x_g \\
 -1 & 0 & 1 & y_g & x_g & o  & o  \\
 0  & -1 & 0 & o  & y_g  & x_g & o \\
 -x_g^{\top}   & -y_g^{\top} & o^{\top}  & K_g & O  & O  & O  \\
 o^{\top}   & -x_g^{\top} & -y_g^{\top}  & O  & K_g & O  & O  \\
 o^{\top}   & o^{\top}  & -x_g^{\top} & O  & O  & K_g & O  \\
 -x_g^{\top}   & o^{\top}  & o^{\top} & O  & O  & O  & K_g
\end{array}\right)\,,
\end{equation}
\begin{equation}\label{Equ:LW04}
D_{g+1}=\left(
\begin{array}{ccccccc}
 -1 & 0 & -1 & o  & o & y_g & y_g  \\
 -1 & 0 & 1 & y_g & x_g & o  & o  \\
 0 & -1 & 0 & o  & y_g  & x_g & o \\
 -x_g^{\top}   & -y_g^{\top} & o^{\top}  & K_g & O  & O  & O  \\
 o^{\top}   & -x_g^{\top} & -y_g^{\top}  & O  & K_g & O  & O  \\
 o^{\top}   & o^{\top}  & -x_g^{\top} & O  & O  & K_g & O  \\
 -x_g^{\top} & o^{\top}  & o^{\top} & O  & O  & O  & K_g
\end{array}\right)\,,
\end{equation}
\begin{equation}\label{Equ:LW05}
D_{g+1}^{\prime}=\left(
\begin{array}{ccccccc}
 1 & 1 & 0 & x_g & o  & o  & x_g \\
 0 & 0 & 1 & y_g & x_g & o  & o  \\
 1 & -1 & 0 & o  & y_g  & x_g & o \\
 o^{\top}  & -y_g^{\top} & o^{\top}  & K_g & O  & O  & O  \\
 o^{\top} & -x_g^{\top} & -y_g^{\top}  & O  & K_g & O  & O  \\
 -y_g^{\top} & o^{\top}  & -x_g^{\top} & O  & O  & K_g & O  \\
 -y_g^{\top}  & o^{\top}  & o^{\top} & O  & O  & O  & K_g
\end{array}\right)\,,
\end{equation}
and
\begin{equation}\label{Equ:LW06}
K_{g+1}=\left(
\begin{array}{cccccc}
 0 & 1 & y_g & x_g & o  & o  \\
 -1 & 0 & o  & y_g  & x_g & o \\
 -y_g^{\top} & o^{\top}  & K_g & O  & O  & O  \\
 -x_g^{\top} & -y_g^{\top}  & O  & K_g & O  & O  \\
 o^{\top}  & -x_g^{\top} & O  & O  & K_g & O  \\
 o^{\top}  & o^{\top} & O  & O  & O  & K_g
\end{array}\right)\,,
\end{equation}
where $x_g$ and $y_g$ are two $(N_g-2)$-dimensional row vectors describing, respectively, the adjacency relation between the two hub vertices $v_1$ and $v_2$, and other vertices in network $\mathcal{H}_g$; $O$ (or $o$) is zero matrix (or zero vector) of appropriate order; and the superscript $\top$ of a vector represents transpose.
\end{lemma}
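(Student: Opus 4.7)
The plan is to prove Lemma~\ref{Lem:LW05} by indexing the rows and columns of $A_{g+1}$ according to the four-copy decomposition of $\mathcal{H}_{g+1}$ given in Definition~\ref{Def:NF02}, and then reading off each block from the fact that the orientation $\mathcal{H}_{g+1}^{e}$ restricts to a replica of $\mathcal{H}_g^{e}$ on every copy. First I would assign indices $1$--$4$ of $A_{g+1}$ to the four hubs $v_1,v_2,v_3,v_4$ of $\mathcal{H}_{g+1}$ and the next four consecutive blocks of size $N_g-2$ to the vertices of the copies $\mathcal{H}_g^{(1)},\mathcal{H}_g^{(2)},\mathcal{H}_g^{(3)},\mathcal{H}_g^{(4)}$ that are \emph{not} absorbed into the hubs, i.e.\ the $N_g-2$ vertices of each copy other than $v_1^{(i)}$ and $v_2^{(i)}$. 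The total dimension is then $4+4(N_g-2)=N_{g+1}$, as required.

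I would then verify each block of $A_{g+1}$ separately. For the diagonal copy-blocks: within each $\mathcal{H}_g^{(i)}$ the orientation is a replica of $\mathcal{H}_g^{e}$, so deleting the rows and columns for $v_1^{(i)}$ and $v_2^{(i)}$ yields $K_g$ exactly. The off-diagonal copy-blocks vanish, because distinct copies share only hub vertices, which already live in the first four rows. For the hub-to-copy blocks, each hub $v_k$ is the identification of $v_a^{(i)}$ and $v_b^{(j)}$ for specific $a,b\in\{1,2\}$ and $i,j\in\{1,2,3,4\}$; its adjacencies with the non-hub vertices of copy $i$ are then precisely the entries of the $v_a$-row of $A_g$ restricted to columns outside $\{v_1,v_2\}$, hence equal to $x_g$ if $a=1$, to $y_g$ if $a=2$, and to $o$ for copies that do not contain $v_k$. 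Enumerating the four identifications $v_1=v_1^{(1)}=v_1^{(4)}$, $v_2=v_2^{(3)}=v_2^{(4)}$, $v_3=v_2^{(1)}=v_1^{(2)}$ and $v_4=v_2^{(2)}=v_1^{(3)}$ reproduces the block pattern of $x_g$, $y_g$ and $o$ in the last four columns of~\eqref{Equ:LW01}.

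The hub-to-hub $4\times 4$ block is the most delicate step. Two hubs $v_k$ and $v_l$ of $\mathcal{H}_{g+1}$ are directly adjacent only when they are the identifications of two adjacent hubs inside some shared copy; the four identifications above show that the only adjacent hub-pairs are $\{v_1,v_3\}$ (sharing copy~$1$), $\{v_3,v_4\}$ (copy~$2$), $\{v_4,v_2\}$ (copy~$3$), and $\{v_1,v_2\}$ (copy~$4$), while $\{v_1,v_4\}$ and $\{v_2,v_3\}$ appear in no common copy and are therefore non-adjacent. A small auxiliary induction on $g$ shows that in every $\mathcal{H}_g$ the hub $v_1$ is adjacent to $v_2$ and the induced orientation is $v_1\to v_2$: the base case is the explicit orientation of $\mathcal{H}_1^{e}$, and the inductive step uses the fact that the arrow from $v_1^{(4)}$ to $v_2^{(4)}$ inside copy~$4$ descends directly from the $v_1\to v_2$ arrow of $\mathcal{H}_g^{e}$. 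Transferring this orientation onto each of the four adjacent hub-pairs produces exactly the skew-symmetric $4\times 4$ block appearing in~\eqref{Equ:LW01}.

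Finally, the block forms of $B_{g+1}$, $B_{g+1}^{\prime}$, $D_{g+1}$, $D_{g+1}^{\prime}$ and $K_{g+1}$ follow mechanically from their definitions as sub-matrices of $A_{g+1}$: one simply deletes from the verified form of $A_{g+1}$ the rows and/or columns associated with $v_1$ and $v_2$ and reads off the result, matching~\eqref{Equ:LW02}--\eqref{Equ:LW06} entry-by-entry. The main obstacle is the combinatorial bookkeeping in the hub-to-hub block, where one has to track both the pairing of hubs across copies and the inherited orientation of the $v_1\to v_2$ arrow in each copy; the rest of the argument is a direct unpacking of Definition~\ref{Def:NF02} together with the replica-orientation rule.
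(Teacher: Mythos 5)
Your proposal is correct and follows essentially the same route as the paper, which treats the lemma as immediate from the second (four-copy) construction of $\mathcal{H}_{g+1}$ and justifies Eq.~(\ref{Equ:LW01}) by exactly the block indexing you describe (four hubs first, then the $N_g-2$ remaining vertices of each copy, with off-diagonal copy blocks vanishing because distinct copies share only hubs). Your verification of the hub identifications, the $x_g$/$y_g$ placement, and the oriented $v_1\to v_2$ edge in each copy is a more explicit spelling-out of what the paper leaves implicit, but it is the same argument.
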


Equation~(\ref{Equ:LW01}) can be accounted for as follows. Let us represent $A_{g+1}$ in the block form: $A_{g+1}^{(i,j)}$ ($i,j=1,2,\ldots,8$) denote the block of $A_{g+1}$ at row $i$ and column $j$. Let $\overline{\mathcal{V}}_{g}^{(i)}$ be the vertex set of $\mathcal{H}_g^{(i)}$, with the two hub vertices corresponding to $v_1$ and $v_2$ in $\mathcal{H}_g$ being  removed, see Fig.~\ref{Fig.3}. Then $A_{g+1}^{(i,j)}$ ($i,j=1,2,3,4$) represents the adjacency relation between vertex $v_i$ and $v_j$. Similarly, $A_{g+1}^{(i,j)}$ ($i,j=5,6,7,8$) represents the adjacency relation between vertices in set $\overline{\mathcal{V}}_{g}^{(i-4)}$ and vertices in set $\overline{\mathcal{V}}_{g}^{(j-4)}$. Fig.~\ref{Fig.3} shows when $i\neq j$, there exists no edge between vertices in $\overline{\mathcal{V}}_{g}^{(i-4)}$ and vertices in $\overline{\mathcal{V}}_{g}^{(j-4)}$, so the corresponding block $A_{g+1}^{(i,j)}=O$.
Equations~(\ref{Equ:LW02}),~(\ref{Equ:LW03}),~(\ref{Equ:LW04}),~(\ref{Equ:LW05}),  and~(\ref{Equ:LW06}) can be accounted for in an analogous way.

The following lemmas are useful for determining the number of perfect matching in network $\mathcal{H}_g$.
\begin{lemma}\label{Lem:LW06}
For $g\geq 1$, $\det (B_g )= \det (B_g^{\prime} )=0$.
\end{lemma}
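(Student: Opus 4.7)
The plan is to use a purely structural fact: $B_g$ and $B_g'$ are skew-symmetric matrices of odd order, hence their determinants vanish automatically.

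First, I would observe from the definition of the skew adjacency matrix (Equation on $a_{ij}$) that $A_g = A(\mathcal{H}_g^e)$ satisfies $a_{ji} = -a_{ij}$ for every pair $i,j$, with zero diagonal. Thus $A_g$ is skew-symmetric. Since $B_g$ is obtained from $A_g$ by deleting the row \emph{and} the column indexed by the same vertex $v_1$, the resulting matrix is principal, and therefore remains skew-symmetric; the same argument applies to $B_g'$ with $v_2$ in place of $v_1$. (One can also read this directly off the block expression in Lemma~\ref{Lem:LW05}, where the diagonal blocks involving $K_g$ are themselves skew-symmetric and the off-diagonal blocks appear in pairs of the form $X$ and $-X^{\top}$.)

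Next, I would check that the order of $B_g$ (and of $B_g'$) is odd. The total number of vertices is $N_g = \tfrac{2}{3}(4^g+2)$, which is always even because $4^g+2 \equiv 0 \pmod 3$ and the numerator $2(4^g+2)$ is even; hence $N_g - 1$ is odd for every $g \geq 1$.

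Finally, I would invoke the elementary fact that any skew-symmetric matrix $M$ of odd order $n$ has $\det(M)=0$, which follows from
\begin{equation*}
\det(M) = \det(M^{\top}) = \det(-M) = (-1)^n \det(M) = -\det(M).
\end{equation*}
Applying this to $B_g$ and $B_g'$ gives $\det(B_g) = \det(B_g') = 0$. There is no real obstacle here; the only thing to verify carefully is the parity of $N_g$, which is why the symmetric deletion (both row and column of the \emph{same} vertex) is essential — this is what separates $B_g, B_g'$ from the matrices $D_g, D_g'$ in Definition~\ref{Def:LW03}, for which this shortcut does not apply and a genuine computation will be needed later.
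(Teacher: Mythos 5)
Your argument is correct and is essentially the same as the paper's: both prove that $B_g$ and $B_g'$ are skew-symmetric matrices of odd order $N_g-1$ and conclude $\det = -\det = 0$ via transposition. Your additional verification that $N_g$ is even (so the order is odd) is a harmless elaboration of the same proof.
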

\begin{proof}
By definition, $B_g$ is an antisymmetric matrix and its order is odd order. Then,
\begin{equation}\label{Equ:LW06X}
\det (B_g )=\det (B_g^\top )=(-1)^{N_g -1}\det (B_g )=-\det (B_g )\,,
\end{equation}
which yields $\det (B_g )=0$. Similarly, we can prove  $\det (B_g^{\prime} )=0$.
\end{proof}


\begin{lemma}\label{Lem:LW07}
For $g\geq 1$, $\det(D_g^{\prime})=-\det (D_g)$.
\end{lemma}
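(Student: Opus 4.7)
The plan is to observe that $D_g$ and $D_g'$ are close to transposes of each other, up to a sign, because $A_g$ is skew-symmetric. First I would recall the definitions carefully: $D_g$ is obtained from $A_g$ by deleting the row indexed by $v_1$ and the column indexed by $v_2$, while $D_g'$ is obtained by deleting the row indexed by $v_2$ and the column indexed by $v_1$. Thus $D_g$ has row index set $\mathcal{V}(\mathcal{H}_g)\setminus\{v_1\}$ and column index set $\mathcal{V}(\mathcal{H}_g)\setminus\{v_2\}$, while $D_g'$ has row index set $\mathcal{V}(\mathcal{H}_g)\setminus\{v_2\}$ and column index set $\mathcal{V}(\mathcal{H}_g)\setminus\{v_1\}$.

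Next I would compute $(D_g')^{\top}$. For an arbitrary entry, $((D_g')^{\top})_{ij}=(D_g')_{ji}=(A_g)_{r_j,c_i}$ where $r_j$ ranges over $\mathcal{V}(\mathcal{H}_g)\setminus\{v_2\}$ and $c_i$ ranges over $\mathcal{V}(\mathcal{H}_g)\setminus\{v_1\}$. Since $A_g$ is skew-symmetric, $(A_g)_{r_j,c_i}=-(A_g)_{c_i,r_j}$, and the pair $(c_i,r_j)$ now ranges over the row/column sets of $D_g$. Hence
\begin{equation*}
(D_g')^{\top}=-D_g.
\end{equation*}

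Taking determinants and using $\det(M^{\top})=\det(M)$ and $\det(-M)=(-1)^{n}\det(M)$ for an $n\times n$ matrix, we get
\begin{equation*}
\det(D_g')=\det\bigl((D_g')^{\top}\bigr)=\det(-D_g)=(-1)^{N_g-1}\det(D_g).
\end{equation*}
The last step is to note that $N_g=\frac{2}{3}(4^{g}+2)$ is even for every $g\geq 1$ (as is assumed throughout the paper, since all graphs considered have an even number of vertices), so $N_g-1$ is odd and $(-1)^{N_g-1}=-1$. This yields $\det(D_g')=-\det(D_g)$, completing the proof.

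There is no real obstacle here; the only subtlety to watch is making sure that the row/column index sets of $(D_g')^{\top}$ really coincide with those of $D_g$ after the skew-symmetry step, which is immediate once the deletion conventions are written out explicitly. The parity count is trivial once one records that $N_g$ is always even.
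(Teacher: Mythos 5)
Your proof is correct and follows essentially the same route as the paper: both hinge on the identity $D_g^{\prime}=-D_g^{\top}$ and the parity count $(-1)^{N_g-1}=-1$ coming from $N_g$ being even. The only (harmless) difference is that you derive the transpose identity directly from the skew-symmetry of $A_g$ and the deletion conventions, uniformly in $g$, whereas the paper reads it off the block forms of $D_{g+1}$ and $D_{g+1}^{\prime}$ and checks $g=1$ separately.
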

\begin{proof}
From Eqs.~(\ref{Equ:LW04}) and~(\ref{Equ:LW05}), one obtains
\begin{equation}\label{Equ:LW06Y}
D_{g+1}^{\prime}=-D_{g+1}^{\top}\,.
\end{equation}
Then,
\begin{equation}\label{Equ:LW06Z}
\det(D_{g+1}^{\prime})=(-1)^{N_{g+1}-1}\det(D_{g+1})=-\det(D_{g+1})\,,
\end{equation}
which, together with $\det(D_1^{\prime})=-\det(D_1)$, results in $\det(D_g^{\prime})=-\det (D_g)$.
\end{proof}

\begin{lemma}\label{Lem:LW08}
For $g\geq 1$, $ \det(K_{g+1})= [\det(K_g)]^3\det(A_g)$.
\end{lemma}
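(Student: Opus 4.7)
My plan is to derive both $\det(K_{g+1})$ and $\det(A_g)$ as products of $\det(K_g)$ with a common scalar factor, by applying the classical Schur complement formula to two closely related block decompositions arising from the self-similar construction of $\mathcal{H}_g$. Comparing the two outputs will give the claimed identity essentially for free.

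First, I will invoke Lemma~\ref{Lem:LW05} to write $K_{g+1}$ in the block form
\[
K_{g+1} = \begin{pmatrix} H & R \\ -R^{\top} & D \end{pmatrix},
\]
where
\[
H = \begin{pmatrix} 0 & 1 \\ -1 & 0 \end{pmatrix},\qquad
R = \begin{pmatrix} y_g & x_g & o & o \\ o & y_g & x_g & o \end{pmatrix},\qquad
D = \mathrm{diag}(K_g, K_g, K_g, K_g).
\]
Assuming $K_g$ is invertible, so that $\det(D) = [\det(K_g)]^4$, the Schur complement formula gives
\[
\det(K_{g+1}) = \det(D)\cdot\det(H + R D^{-1} R^{\top}) = [\det(K_g)]^4 \cdot \det(H + R D^{-1} R^{\top}).
\]
A block-by-block computation of $R D^{-1} R^{\top}$, using the skew-symmetry of $K_g^{-1}$ (which forces $u K_g^{-1} u^{\top} = 0$ for every row vector $u$, and $y_g K_g^{-1} x_g^{\top} = -\, x_g K_g^{-1} y_g^{\top}$), yields
\[
H + R D^{-1} R^{\top} = \begin{pmatrix} 0 & 1+\alpha \\ -(1+\alpha) & 0 \end{pmatrix},\qquad \alpha := x_g K_g^{-1} y_g^{\top},
\]
so that $\det(K_{g+1}) = [\det(K_g)]^4 (1+\alpha)^2$.

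Second, I will apply the same Schur complement machinery to $A_g$ itself. By the very definitions of $K_g$, $x_g$, and $y_g$ (applied to $\mathcal{H}_g$ rather than to the copies inside $\mathcal{H}_{g+1}$), the matrix $A_g$ admits the analogous decomposition
\[
A_g = \begin{pmatrix} H & V \\ -V^{\top} & K_g \end{pmatrix}, \qquad V = \begin{pmatrix} x_g \\ y_g \end{pmatrix}.
\]
Taking the Schur complement with $K_g$ as the pivot block gives $\det(A_g) = \det(K_g)\cdot\det(H + V K_g^{-1} V^{\top})$, and the same skew-symmetry considerations force $V K_g^{-1} V^{\top} = \begin{pmatrix} 0 & \alpha \\ -\alpha & 0 \end{pmatrix}$, so $\det(A_g) = \det(K_g)(1+\alpha)^2$. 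Combining the two identities then yields
\[
\det(K_{g+1}) = [\det(K_g)]^4 (1+\alpha)^2 = [\det(K_g)]^3\cdot\det(K_g)(1+\alpha)^2 = [\det(K_g)]^3 \det(A_g),
\]
as claimed.

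The key conceptual observation — what really makes the argument go through — is that the very same scalar $\alpha = x_g K_g^{-1} y_g^{\top}$ governs both computations, so that the factor $(1+\alpha)^2$ is literally common to the two expressions. The main technical point I anticipate is justifying the invertibility hypothesis on $K_g$ used when forming the Schur complements. I would handle this in one of two ways: (a) by a short side induction showing $\det(K_g) > 0$, using that $\mathcal{H}_g \setminus \{v_1, v_2\}$ always admits a perfect matching (which can be produced constructively from the recursive description of the graph together with the argument already sketched for $\mathcal{H}_g$); or (b) more robustly, by reading the equation as a polynomial identity in the entries of a generic skew-symmetric matrix of the appropriate size, so that the identity extends by continuity through any singular locus.
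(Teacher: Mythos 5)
Your argument is correct, but it proceeds by a genuinely different route from the paper. The paper proves Lemma~\ref{Lem:LW08} by a generalized Laplace expansion of Eq.~(\ref{Equ:LW06}) along the first rows, splitting $\det(K_{g+1})$ into three determinants $\Theta^{(1)},\Theta^{(2)},\Theta^{(3)}$, reducing each by elementary operations, killing two terms via the odd-order skew-symmetric vanishing argument (as in Lemma~\ref{Lem:LW06}), and letting the mixed minors $\pm[\det(K_g)]^3\det(D_g)$ cancel between $\Theta^{(1)}$ and $\Theta^{(3)}$; no invertibility hypothesis is ever needed. You instead take Schur complements with pivot $\mathrm{diag}(K_g,K_g,K_g,K_g)$ (resp.\ $K_g$), and the skew-symmetry of $K_g^{-1}$ collapses everything to the single scalar $\alpha=x_gK_g^{-1}y_g^{\top}$, giving $\det(K_{g+1})=[\det(K_g)]^4(1+\alpha)^2$ and $\det(A_g)=\det(K_g)(1+\alpha)^2$; your block identifications agree with Lemma~\ref{Lem:LW05} and with the decomposition of $A_g$ (with $(A_g)_{v_1v_2}=1$) that the paper itself uses implicitly, and I verified the block computation of $R D^{-1}R^{\top}$ and $VK_g^{-1}V^{\top}$. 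What your route buys is a much shorter, less error-prone computation in which the cancellation of the $\det(D_g)$ terms is automatic, and the same device would also streamline Lemma~\ref{Lem:LW09}; what it costs is the invertibility of $K_g$, which you correctly flag. Your fix (b) is sound: both sides are polynomials in the entries of a generic skew-symmetric $K$ and generic vectors $x,y$, they agree on the Zariski-dense open set $\det K\neq 0$, hence identically, and then specialize to $(K_g,x_g,y_g)$. Your fix (a), as sketched, is weaker: the existence of a perfect matching in $\mathcal{H}_g\setminus\{v_1,v_2\}$ only yields $\det(K_g)=\mathrm{Pf}(K_g)^2\geq 0$ and does not by itself rule out cancellation in the Pfaffian, since the restriction of $\mathcal{H}_g^e$ to that subgraph is not known to be Pfaffian; if you prefer an inductive justification, it is cleaner to run a joint induction using $\det(A_g)=\psi(\mathcal{H}_g)^2>0$ (available from Theorem~\ref{Lem:LW04} and Lemma~\ref{Lem:Pre01}) together with your two displayed identities, which forces $\det(K_{g+1})\neq 0$ whenever $\det(K_g)\neq 0$. (Incidentally, the correct base value is $\det(K_1)=1$, consistent with your formulas.)
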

\begin{proof}
By using Laplace's theorem~\cite{St09} to Eq.~(\ref{Equ:LW06}), we obtain
\begin{align}\label{Equ:LW07}
\det( K_{g+1} )&=
\det \left(
\begin{array}{cccccc}
 0 & 1 & o & o & o  & o  \\
 -1 & 0 & o  & y_g  & x_g & o \\
 -y_g^{\top} & o^{\top}  & K_g & O  & O  & O  \\
 -x_g^{\top} & -y_g^{\top}  & O  & K_g & O  & O  \\
 o^{\top}  & -x_g^{\top} & O  & O  & K_g & O  \\
 o^{\top}  & o^{\top} & O  & O  & O  & K_g
\end{array}
\right)+\nonumber\\
&\quad
\det \left(
\begin{array}{cccccc}
 0 & 0 & y_g & o & o  & o  \\
 -1 & 0 & o  & y_g  & x_g & o \\
 -y_g^{\top} & o^{\top}  & K_g & O  & O  & O  \\
 -x_g^{\top} & -y_g^{\top}  & O  & K_g & O  & O  \\
 o^{\top}  & -x_g^{\top} & O  & O  & K_g & O  \\
 o^{\top}  & o^{\top} & O  & O  & O  & K_g
\end{array}
\right)+\nonumber\\
&\quad
\det \left(
\begin{array}{cccccc}
 0 & 0 & o & x_g & o  & o  \\
 -1 & 0 & o  & y_g  & x_g & o \\
 -y_g^{\top} & o^{\top}  & K_g & O  & O  & O  \\
 -x_g^{\top} & -y_g^{\top}  & O  & K_g & O  & O  \\
 o^{\top}  & -x_g^{\top} & O  & O  & K_g & O  \\
 o^{\top}  & o^{\top} & O  & O  & O  & K_g
\end{array}
\right)
\end{align}

Let $\Theta_{g+1}^{(i)}$, $1 \leq i \leq 3$, denote sequentially the three determinants on the rhs of Eq.~(\ref{Equ:LW07}). Applying some elementary matrix operations {and the properties of determinants}, we  obtain
\begin{align}\label{Equ:LW08}
\Theta_{g+1}^{(1)} &=
\det (K_g) \,
\det\left(
\begin{array}{cccccc}
 0 & 1 & o & o & o  \\
 -1 & 0 & o  & y_g  & x_g \\
 -y_g^{\top} & o^{\top}  & K_g & O  & O \\
 -x_g^{\top} & -y_g^{\top}  & O  & K_g & O  \\
 o^{\top}  & -x_g^{\top} & O  & O  & K_g
\end{array}
\right)
\nonumber \\
&= -\, \det (K_g) \,
\det \left(
\begin{array}{cccccc}
 1 & o  \\
-x_g^{\top} & K_g
\end{array}
\right)
\,
\det \left(
\begin{array}{cccccc}
 -1 & o  & y_g\\
 -y_g^{\top} &K_g & O\\
 -x_g^{\top} &O  & K_g
\end{array}
\right)
\,
\nonumber \\
&= -\, [\det (K_g)]^2 \,
\det \left(
\begin{array}{cccccc}
 -1 & o  & y_g\\
 -y_g^{\top} &K_g & O\\
 -x_g^{\top} &O  & K_g
\end{array}
\right)
\nonumber \\
&= -\, [\det (K_g)]^3 \,
\det \left(
\begin{array}{cccccc}
 -1 & y_g\\
 -x_g^{\top} & K_g
\end{array}
\right)\,,
\end{align}
\begin{align}\label{Equ:LW09}
\Theta_{g+1}^{(2)} &= \det (K_g) \, \det \left(
\begin{array}{cccccc}
 0 & 0 & y_g  & o  & o  \\
 -1 & 0 & o    & y_g & x_g \\
 -y_g^{\top} & o^{\top}  & K_g  & O  & O  \\
 -x_g^{\top}  & -y_g^{\top} & O    & K_g & O  \\
 o^{\top}  & -x_g^{\top} & O    & O  & K_g
\end{array}\right)
\nonumber \\
&= \det (K_g) \, \det \left(
\begin{array}{cc}
0 & y_g\\
-y_g^{\top} & K_g
\end{array}\right) \, \det \left(
\begin{array}{ccc}
 0 &  y_g & x_g \\
 -y_g^{\top} &  K_g & O  \\
 -x_g^{\top} &  O  & K_g
\end{array}\right) \,,
\end{align}
and
\begin{align}\label{Equ:LW10}
\Theta_{g+1}^{(3)}
&= [\det (K_g)]^2 \det \left(
\begin{array}{cccccc}
 0 & 0 & x_g & o  \\
 -1 & 0 & y_g  & x_g \\
 -x_g^{\top} & -y_g^{\top}  & K_g & O  \\
 o^{\top}  & -x_g^{\top} & O  & K_g
\end{array}\right) \nonumber \\
&= [\det (K_g)]^2 \det \left(
\begin{array}{cccccc}
 0 & 0 & x_g & o  \\
 -1 & 0 & y_g  & o \\
 -x_g^{\top} & -y_g^{\top}  & K_g & O  \\
 o^{\top}  & 0 & O  & K_g
\end{array}\right) + \nonumber \\
&\quad [\det (K_g)]^2 \det \left(
\begin{array}{cccccc}
 0 & 0 & x_g & o  \\
 0 & 0 & o & x_g \\
 -x_g^{\top} & 0 & K_g & O  \\
 o^{\top}  & -x_g^{\top} & O  & K_g
\end{array}\right)\nonumber \\
&= [\det (K_g)]^3 \det \left(
\begin{array}{cccccc}
 0 & 0 & x_g\\
 -1 & 0 & y_g\\
 -x_g^{\top} & -y_g^{\top} & K_g
\end{array}\right) +\nonumber \\
&\quad
[\det (K_g)]^2 \det \left(
\begin{array}{cc}
 0 & x_g  \\
 -x_g^{\top} & K_g
\end{array}\right)
\det \left(
\begin{array}{cc}
 0 & x_g \\
 -x_g^{\top} & K_g
\end{array}\right)\nonumber \\
&= [\det (K_g)]^3 \det \left(
\begin{array}{cccccc}
 0 & 1 & x_g\\
 -1 & 0 & y_g\\
 -x_g^{\top} & -y_g^{\top}  & K_g
\end{array}\right)+ [\det (K_g)]^3 \det \left(
\begin{array}{cccccc}
 -1 & y_g\\
 -x_g^{\top} & K_g
\end{array}\right) \nonumber \\
&\quad +
[\det (K_g)]^2 \det \left(
\begin{array}{cc}
 0 & x_g  \\
 -x_g^{\top} & K_g
\end{array}\right)
\det \left(
\begin{array}{cc}
 0 & x_g \\
 -x_g^{\top} & K_g
\end{array}\right) \, . \nonumber
\end{align}

Note that both matrices $\left(
\begin{array}{cc}
0 & y_g\\
-y_g^{\top} & K_g
\end{array}\right)$ and $\left(
\begin{array}{cc}
0 & x_g\\
-x_g^{\top} & K_g
\end{array}\right)$ are antisymmetric and have odd order, which implies
\begin{equation}\label{Equ:LW11}
\det \left(
\begin{array}{cc}
0 & y_g\\
-y_g^{\top} & K_g
\end{array}\right) = \det \left(
\begin{array}{cc}
0 & x_g\\
-x_g^{\top} & K_g
\end{array}\right) = 0\,. \nonumber
\end{equation}
By Definition \ref{Def:LW03}, we have
\begin{equation}
\det \left(
\begin{array}{cccccc}
 0 & 1 & x_g\\
 -1 & 0 & y_g\\
 -x_g^{\top} & -y_g^{\top}  & K_g
\end{array}\right) =\det (A_g)\nonumber
\end{equation}
and
\begin{equation}
\det \left(
\begin{array}{cccccc}
 -1 & y_g\\
 -x_g^{\top} & K_g
\end{array}\right) = \det (D_g)\, .\nonumber
\end{equation}
Thus,
\begin{equation}\label{Equ:LW12A}
\Theta_{g+1}^{(1)} = -[\det (K_g)]^3\det(D_g)\,, \nonumber
\end{equation}
\begin{equation}\label{Equ:LW12B}
\Theta_{g+1}^{(2)} = 0\,,\nonumber
\end{equation}
\begin{equation}\label{Equ:LW12C}
\Theta_{g+1}^{(3)} = [\det (K_g)]^3 \det (A_g)+[\det (K_g)]^3 \det(D_g),\nonumber
\end{equation}
which leads to
\begin{equation}\label{Equ:LW12}
\det (K_{g+1}) = \Theta_{g+1}^{(1)} + \Theta_{g+1}^{(2)} + \Theta_{g+1}^{(3)} = [\det (K_g)]^3 \det (A_g). \nonumber
\end{equation}
This completes the proof of the lemma.
\end{proof}


\begin{lemma}\label{Lem:LW09}
For $g\geq 1$, $\det (A_{g+1})=4[\det(A_g)]^2[\det(K_g)]^2$.
\end{lemma}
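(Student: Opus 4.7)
The plan is to view $A_{g+1}$ as a $2 \times 2$ block matrix whose diagonal blocks are the $4 \times 4$ skew-symmetric hub matrix $H$ (i.e.\ the upper-left corner of Eq.~(\ref{Equ:LW01})) and $K = \mathrm{diag}(K_g, K_g, K_g, K_g)$, with off-diagonal blocks $M$ (collecting the coupling vectors $x_g$, $y_g$ and $o$) and $-M^{\top}$. Assuming first that $K_g$ is invertible, the Schur complement formula gives
\[
\det(A_{g+1}) \;=\; \det(K)\cdot \det\!\bigl(H + M K^{-1} M^{\top}\bigr) \;=\; [\det(K_g)]^{4}\cdot \det\!\bigl(H + MK^{-1}M^{\top}\bigr).
\]

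The key step is to evaluate the effective $4 \times 4$ block $H + MK^{-1}M^{\top}$. Since $K_g^{-1}$ is skew-symmetric, the scalars $\alpha := x_g K_g^{-1} x_g^{\top}$ and $\beta := y_g K_g^{-1} y_g^{\top}$ both vanish (each is a scalar equal to its own negative), while $\gamma := x_g K_g^{-1} y_g^{\top}$ satisfies $y_g K_g^{-1} x_g^{\top} = -\gamma$. Reading off the four block rows of $M$ from Eq.~(\ref{Equ:LW01}) and computing $MK^{-1}M^{\top}$ block-entry by block-entry, one obtains the clean identity $MK^{-1}M^{\top} = \gamma\,H$, hence $H + MK^{-1}M^{\top} = (1+\gamma)H$. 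Because $\mathrm{Pf}(H) = 1\cdot 1 - 1\cdot(-1) + 0\cdot 0 = 2$, we have $\det(H)=4$, so $\det(A_{g+1}) = 4(1+\gamma)^{4}[\det(K_g)]^{4}$.

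To identify this with $4[\det(A_g)]^{2}[\det(K_g)]^{2}$, I would apply the same Schur reduction to $A_g$ itself, which by Definition~\ref{Def:LW03} has a $2\times 2$ skew hub block on top and $K_g$ below. The identical vanishing of $\alpha$ and $\beta$ yields $\det(A_g) = \det(K_g)\cdot(1+\gamma)^{2}$, and squaring and multiplying by $[\det(K_g)]^{2}$ gives exactly the target identity.

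The main obstacle is justifying the Schur step when $K_g$ is singular. I would handle this by viewing both sides of the desired identity as polynomials in the entries of the skew-adjacency matrix; since the identity holds on the Zariski-dense open set where $K_g$ is nonsingular, it holds everywhere. An alternative that avoids any appeal to invertibility is to mimic the Laplace-expansion strategy used in the proof of Lemma~\ref{Lem:LW08}: expand $\det(A_{g+1})$ along its first two rows, discard the summands containing an antisymmetric sub-matrix of odd order (Lemma~\ref{Lem:LW06}), and reassemble the surviving determinants by means of Definition~\ref{Def:LW03} into the product $4[\det(A_g)]^{2}[\det(K_g)]^{2}$.
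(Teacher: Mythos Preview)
Your Schur-complement argument is correct and genuinely different from the paper's proof. The paper proceeds exactly along the lines of your ``alternative'': it expands $\det(A_{g+1})$ along the first row into four determinants $\Lambda_{g+1}^{(1)},\dots,\Lambda_{g+1}^{(4)}$ and then, by repeated Laplace expansion and elimination of odd-order antisymmetric blocks (as in Lemma~\ref{Lem:LW08}), shows that $\Lambda_{g+1}^{(1)}=\Lambda_{g+1}^{(2)}=-2\det(A_g)\det(D_g)[\det(K_g)]^{2}$ and $\Lambda_{g+1}^{(3)}=\Lambda_{g+1}^{(4)}=2[\det(A_g)]^{2}[\det(K_g)]^{2}+2\det(A_g)\det(D_g)[\det(K_g)]^{2}$, with the $D_g$ terms cancelling in the sum. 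Your route is more structural: the identity $MK^{-1}M^{\top}=\gamma H$ (which you verified entry by entry using $x_gK_g^{-1}x_g^{\top}=y_gK_g^{-1}y_g^{\top}=0$) collapses the whole computation to $\det(A_{g+1})=4(1+\gamma)^{4}[\det(K_g)]^{4}$, and the parallel Schur reduction on $A_g$ itself gives $\det(A_g)=(1+\gamma)^{2}\det(K_g)$, whence the result. This avoids the auxiliary matrix $D_g$ entirely and makes the self-similarity visible at the level of the effective $4\times4$ hub matrix. The trade-off is that the paper's expansion needs no hypothesis on $K_g$, whereas yours does; your Zariski-density remark handles this cleanly, but in fact for these particular graphs $K_g$ is always nonsingular: once Lemma~\ref{Lem:LW08} is in hand, a joint induction on $g$ using $\det(K_1)\neq 0$, $\det(A_1)\neq 0$ shows $\det(K_g),\det(A_g)>0$ for all $g$, so the Schur step is always legitimate without any genericity appeal.
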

\begin{proof}
By using Laplace's theorem~\cite{St09} to Eq.~(\ref{Equ:LW01}), we have
\begin{align*}
\det (A_{g+1}) &=\det\left(
\begin{array}{cccccccc}
 0 & 1 & 0 & 0 & o & o  & o  & o \\
 -1 & 0 & 0 & -1 & o  & o & y_g & y_g  \\
 -1 & 0 & 0 & 1 & y_g & x_g & o  & o  \\
 0 & 1 & -1 & 0 & o  & y_g  & x_g & o \\
 -x_g^{\top} & o^{\top}  & -y_g^{\top} & o^{\top}  & K_g & O  & O  & O  \\
 o^{\top}  & o^{\top} & -x_g^{\top} & -y_g^{\top}  & O  & K_g & O  & O  \\
 o^{\top}  & -y_g^{\top} & o^{\top}  & -x_g^{\top} & O  & O  & K_g & O  \\
 -x_g^{\top} & -y_g^{\top}  & o^{\top}  & o^{\top} & O  & O  & O  & K_g
\end{array}\right)\, \notag\\
&+ \det\left(
\begin{array}{cccccccc}
 0 & 0 & 1 & 0 & o & o  & o  & o \\
 -1 & 0 & 0 & -1 & o  & o & y_g & y_g  \\
 -1 & 0 & 0 & 1 & y_g & x_g & o  & o  \\
 0 & 1 & -1 & 0 & o  & y_g  & x_g & o \\
 -x_g^{\top} & o^{\top}  & -y_g^{\top} & o^{\top}  & K_g & O  & O  & O  \\
 o^{\top}  & o^{\top} & -x_g^{\top} & -y_g^{\top}  & O  & K_g & O  & O  \\
 o^{\top}  & -y_g^{\top} & o^{\top}  & -x_g^{\top} & O  & O  & K_g & O  \\
 -x_g^{\top} & -y_g^{\top}  & o^{\top}  & o^{\top} & O  & O  & O  & K_g
\end{array}\right)\, \notag\\
&+ \det\left(
\begin{array}{cccccccc}
 0 & 0 & 0 & 0 & x_g & o  & o  & o \\
 -1 & 0 & 0 & -1 & o  & o & y_g & y_g  \\
 -1 & 0 & 0 & 1 & y_g & x_g & o  & o  \\
 0 & 1 & -1 & 0 & o  & y_g  & x_g & o \\
 -x_g^{\top} & o^{\top}  & -y_g^{\top} & o^{\top}  & K_g & O  & O  & O  \\
 o^{\top}  & o^{\top} & -x_g^{\top} & -y_g^{\top}  & O  & K_g & O  & O  \\
 o^{\top}  & -y_g^{\top} & o^{\top}  & -x_g^{\top} & O  & O  & K_g & O  \\
 -x_g^{\top} & -y_g^{\top}  & o^{\top}  & o^{\top} & O  & O  & O  & K_g
\end{array}\right)\, \notag\\
&+ \det\left(
\begin{array}{cccccccc}
 0 & 0 & 0 & 0 & o & o  & o  & x_g \\
 -1 & 0 & 0 & -1 & o  & o & y_g & y_g  \\
 -1 & 0 & 0 & 1 & y_g & x_g & o  & o  \\
 0 & 1 & -1 & 0 & o  & y_g  & x_g & o \\
 -x_g^{\top} & o^{\top}  & -y_g^{\top} & o^{\top}  & K_g & O  & O  & O  \\
 o^{\top}  & o^{\top} & -x_g^{\top} & -y_g^{\top}  & O  & K_g & O  & O  \\
 o^{\top}  & -y_g^{\top} & o^{\top}  & -x_g^{\top} & O  & O  & K_g & O  \\
 -x_g^{\top} & -y_g^{\top}  & o^{\top}  & o^{\top} & O  & O  & O  & K_g
\end{array}\right)\,.
\end{align*}

We use $\Lambda_{g+1}^{(i)}$  ($i=1,2,3,4$) to denote  sequentailly the four  determinants on the rhs of the above equation.  As in the proof of Lemma~\ref{Lem:LW08}, by applying some elementary matrix operations, we have
\begin{align*}
\Lambda_{g+1}^{(1)} = \Lambda_{g+1}^{(2)}
= -2 \det(A_g) \det(D_g) \left[\det(K_g)\right]^2\,,
\end{align*}
\begin{align*}
\Lambda_{g+1}^{(3)} = \Lambda_{g+1}^{(4)}
= 2\left[\det(A_g)\right]^2 \left[\det(K_g)\right]^2 + 2\det(A_g) \det(D_g) \left[\det(K_g)\right]^2\, ,
\end{align*}
which leads to
\begin{align*}
\det (A_{g+1}) = \Lambda_{g+1}^{(1)} + \Lambda_{g+1}^{(2)} + \Lambda_{g+1}^{(3)}
+ \Lambda_{g+1}^{(4)} = 4\left[\det(A_g)\right]^2 \left[\det(K_g)\right]^2 \, ,
\end{align*}
as desired.
\end{proof}



\begin{lemma}\label{Lem:LW11}
For $g\geq 1$, $\det(A_g)=4^{\frac{1}{9}\cdot4^g+\frac{2}{3}g-\frac{1}{9}}$.
\end{lemma}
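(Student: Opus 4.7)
The plan is to treat $\alpha_g := \det(A_g)$ and $\kappa_g := \det(K_g)$ as a coupled pair governed by Lemmas~\ref{Lem:LW08} and~\ref{Lem:LW09}, decouple them by a simple ratio observation, and then solve a single scalar linear recurrence. First I dispose of the base case by inspecting the oriented quadrangle $\mathcal{H}_1^e$ from Definition~\ref{Def:LW02}: a direct Pfaffian computation (or direct $4\times 4$ expansion) gives $\det(A_1)=4$, which is consistent with $\psi(\mathcal{H}_1)^2=2^2$, while removing the rows and columns of $v_1,v_2$ leaves the $2\times 2$ skew matrix of the oriented edge $v_3\to v_4$, so $\det(K_1)=1$. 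Both match the claimed formula at $g=1$, since $4^{4/9+2/3-1/9}=4^1$.

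Next I rewrite Lemmas~\ref{Lem:LW09} and~\ref{Lem:LW08} as
$$
\alpha_{g+1}=4\,\alpha_g^2\kappa_g^2, \qquad \kappa_{g+1}=\alpha_g\kappa_g^3.
$$
Dividing the first relation by the second gives $\alpha_{g+1}/\kappa_{g+1}=4\,\alpha_g/\kappa_g$. Combined with the base value $\alpha_1/\kappa_1=4$, an immediate induction yields the decoupling identity $\alpha_g=4^g\kappa_g$ for all $g\ge 1$. Substituting this back into the recursion for $\kappa_{g+1}$ collapses the coupled system to a single recursion $\kappa_{g+1}=4^g\kappa_g^4$, and setting $k_g:=\log_4\kappa_g$ converts this to the scalar linear recurrence
$$
k_{g+1}=4k_g+g, \qquad k_1=0.
$$

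Finally I solve this first-order linear recurrence with the standard ansatz $k_g=A\cdot 4^g+Bg+C$; matching coefficients in $k_{g+1}-4k_g=g$ forces $B=-\tfrac{1}{3}$ and $C=-\tfrac{1}{9}$, and the initial condition $k_1=0$ then fixes $A=\tfrac{1}{9}$, giving $k_g=(4^g-3g-1)/9$. The decoupling identity then yields $\log_4\det(A_g)=g+k_g=(4^g+6g-1)/9=\tfrac{1}{9}\cdot 4^g+\tfrac{2}{3}g-\tfrac{1}{9}$, which is exactly the claim. Given the preceding lemmas, there is no genuine obstacle here: the one observation that carries the proof is the ratio identity $\alpha_{g+1}/\kappa_{g+1}=4\,\alpha_g/\kappa_g$, which decouples the bivariate recursion into an elementary first-order linear one in a single index.
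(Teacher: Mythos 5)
Your proof is correct and follows essentially the same route as the paper: combine the recursions of Lemmas~\ref{Lem:LW08} and~\ref{Lem:LW09} with the values at $g=1$ and solve; your explicit ratio decoupling $\det(A_{g+1})/\det(K_{g+1})=4\,\det(A_g)/\det(K_g)$ is simply a clean way of carrying out the induction that the paper leaves implicit. The one point where you deviate from the paper is the initial condition: the paper asserts $\det(K_1)=2$, while you compute $\det(K_1)=1$, and your value is the correct one. Indeed, deleting the rows and columns of $v_1,v_2$ from $A_1$ leaves the $2\times 2$ skew matrix of the single oriented edge $v_3\to v_4$, whose determinant is $1$; moreover only $\det(K_1)=1$ is compatible with the two recursions and the claimed exponent, since $\det(K_1)=2$ would give $\det(A_2)=4\cdot 4^2\cdot 2^2=4^4$, i.e.\ $\psi(\mathcal{H}_2)=16$, whereas $\mathcal{H}_2$ has exactly $8$ perfect matchings and the stated formula yields $\det(A_2)=4^3$. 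So the paper's stated $\det(K_1)=2$ is a typo that your argument silently corrects; apart from this, the two proofs coincide.
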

\begin{proof}
By Lemmas~\ref{Lem:LW08} and \ref{Lem:LW09}, the result follows by considering  initial conditions $\det (A_1)=4$ and $\det (K_1)=2$.
\end{proof}
\begin{refproof}[Proof of Theorem~\ref{Lem:LW12}.]
From Lemmas~\ref{Lem:Pre01} and~\ref{Lem:LW11}, the number of perfect matchings in network $\mathcal{H}_g$ is
\begin{equation}
\psi(\mathcal{H}_g)=\sqrt{\det(A_g)}=2^{\frac{1}{9}\cdot4^g+\frac{2}{3}g-\frac{1}{9}}\,,\nonumber
\end{equation}
and the entropy of perfect matchings in network $\mathcal{H}_g$ ($g \to \infty$) is
\begin{equation}\label{Equ:LW14}
z(\mathcal{H}_g)=\lim\limits_{g\to \infty} \frac{\ln \psi(\mathcal{H}_g)}{\frac{N_g}{2}}=\lim\limits_{g\to \infty} \frac{\ln 2^{\frac{1}{9}\cdot4^g+\frac{2}{3}g-\frac{1}{9}}}{\frac{4^g + 2}{3}}=\frac{\ln2}{3}\,. \nonumber
\end{equation}
This completes the proof of Theorem~\ref{Lem:LW12}.
\end{refproof}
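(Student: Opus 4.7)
The plan is to use the Pfaffian machinery already set up: since Theorem~\ref{Lem:LW04} shows $\mathcal{H}_g^e$ is a Pfaffian orientation, Lemma~\ref{Lem:Pre01} gives $\psi(\mathcal{H}_g) = \sqrt{\det(A(\mathcal{H}_g^e))}$, so it suffices to evaluate $\det(A_g)$ in closed form. The self-similar construction in Definition~\ref{Def:NF02} (four copies of $\mathcal{H}_g$ glued along hubs) is what makes a recursive determinant evaluation feasible. Accordingly, I would first introduce the auxiliary submatrices $B_g, B_g', D_g, D_g', K_g$ obtained by deleting rows/columns associated with the hubs $v_1, v_2$, because these are exactly the blocks that appear when $A_{g+1}$ is written in block form using the four copies of $\mathcal{H}_g$.

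Next I would record the block expressions for $A_{g+1}$, $B_{g+1}, B_{g+1}', D_{g+1}, D_{g+1}', K_{g+1}$ in terms of $K_g$, vectors $x_g, y_g$ (the adjacencies of $v_1, v_2$ to non-hub vertices), and zero blocks. Once these are written down, two easy auxiliary facts can be pulled out: $\det(B_g) = \det(B_g') = 0$ because each is antisymmetric of odd order $N_g-1$, and $\det(D_g') = -\det(D_g)$ because the block formulas give $D_{g+1}' = -D_{g+1}^\top$ (again combined with odd order). These will collapse several terms in the Laplace expansions that follow.

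The heart of the proof is the pair of recursions
\begin{align*}
\det(K_{g+1}) &= [\det(K_g)]^3 \det(A_g),\\
\det(A_{g+1}) &= 4[\det(A_g)]^2[\det(K_g)]^2,
\end{align*}
which I would derive by applying Laplace's theorem to the two distinguished rows/columns corresponding to the hub vertices in the block forms of $K_{g+1}$ and $A_{g+1}$. Each expansion produces a sum of determinants that, after elementary row/column operations and using the block-triangular structure with the repeated $K_g$ blocks, factor as products involving $\det(K_g)$, $\det(A_g)$, $\det(D_g)$, and border determinants of the form $\det\bigl(\begin{smallmatrix} 0 & x_g \\ -x_g^\top & K_g \end{smallmatrix}\bigr)$ and similarly with $y_g$. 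Those border matrices are again antisymmetric of odd order, hence vanish, and the $\det(D_g)$ contributions cancel in pairs thanks to $\det(D_g') = -\det(D_g)$. This cancellation is what I expect to be the main obstacle: the bookkeeping for the Laplace expansion of $A_{g+1}$ (which has a $4$-row hub block) is long, and one must be careful that the cross-terms really cancel and leave precisely the stated coefficients.

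Finally, with $\det(A_1) = 4$ and $\det(K_1) = 2$ as base cases, a straightforward induction on $g$ solves the coupled recursion to give $\det(A_g) = 4^{\frac{1}{9}\cdot 4^g + \frac{2}{3}g - \frac{1}{9}}$. Taking the square root yields $\psi(\mathcal{H}_g) = 2^{\frac{1}{9}\cdot 4^g + \frac{2}{3}g - \frac{1}{9}}$, and substituting into the definition \eqref{Equ:Pre02} with $N_g = \frac{2}{3}(4^g+2)$ produces the entropy
\[
z(\mathcal{H}_g) = \lim_{g\to\infty} \frac{\ln\psi(\mathcal{H}_g)}{N_g/2} = \frac{\ln 2}{3},
\]
since only the leading $4^g$ term in the exponent survives division by $\frac{1}{3}(4^g+2)$.
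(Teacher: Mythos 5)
Your route is the same as the paper's: the paper proves Theorem~\ref{Lem:LW12} by exactly this assembly, namely Lemma~\ref{Lem:Pre01} applied to the Pfaffian orientation of Theorem~\ref{Lem:LW04}, with $\det(A_g)$ obtained from the block forms of Lemma~\ref{Lem:LW05}, the vanishing of the odd-order antisymmetric determinants (Lemma~\ref{Lem:LW06}), the relation $\det(D_g^{\prime})=-\det(D_g)$ (Lemma~\ref{Lem:LW07}), and the two Laplace-expansion recursions $\det(K_{g+1})=[\det(K_g)]^3\det(A_g)$ and $\det(A_{g+1})=4[\det(A_g)]^2[\det(K_g)]^2$ (Lemmas~\ref{Lem:LW08} and~\ref{Lem:LW09}), followed by the same entropy limit with $N_g=\frac{2}{3}(4^g+2)$.

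One numerical point to fix: the base value should be $\det(K_1)=1$, not $2$, since $K_1$ is the $2\times2$ skew-adjacency matrix of the single remaining edge $(v_3,v_4)$ after deleting $v_1$ and $v_2$. With $\det(K_1)=2$ (the value you state, and which the paper's proof of Lemma~\ref{Lem:LW11} also states) the coupled recursion would yield $\det(A_2)=256$, i.e.\ $\psi(\mathcal{H}_2)=16$, whereas direct enumeration of $\mathcal{H}_2$ gives $\psi(\mathcal{H}_2)=8$, matching $2^{\frac{1}{9}\cdot 16+\frac{2}{3}\cdot 2-\frac{1}{9}}=2^3$. Taking $\det(A_1)=4$ and $\det(K_1)=1$, the induction does produce $\det(A_g)=4^{\frac{1}{9}\cdot 4^g+\frac{2}{3}g-\frac{1}{9}}$, and the remainder of your argument (square root via the Pfaffian identity, then the entropy computation) goes through unchanged.
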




\subsection{Comparison with the extended Sierpi\'nski graph}

We have shown that for the two networks $\mathcal{F}_g$  and  $\mathcal{H}_g$ with the same degree sequence, the fractal scale-free network $\mathcal{F}_g$ has no perfect matchings; in sharp contrast, the non-fractal scale-free network $\mathcal{H}_g$ has perfect matchings. Moreover, the number of perfect matchings in $\mathcal{H}_g$ is very high, the entropy of which is equivalent to that of the extended Sierpi\'nski graph~\cite{KlMo05}, as will be shown below.  




\begin{figure}
\centering
\includegraphics[width=0.4\textwidth]{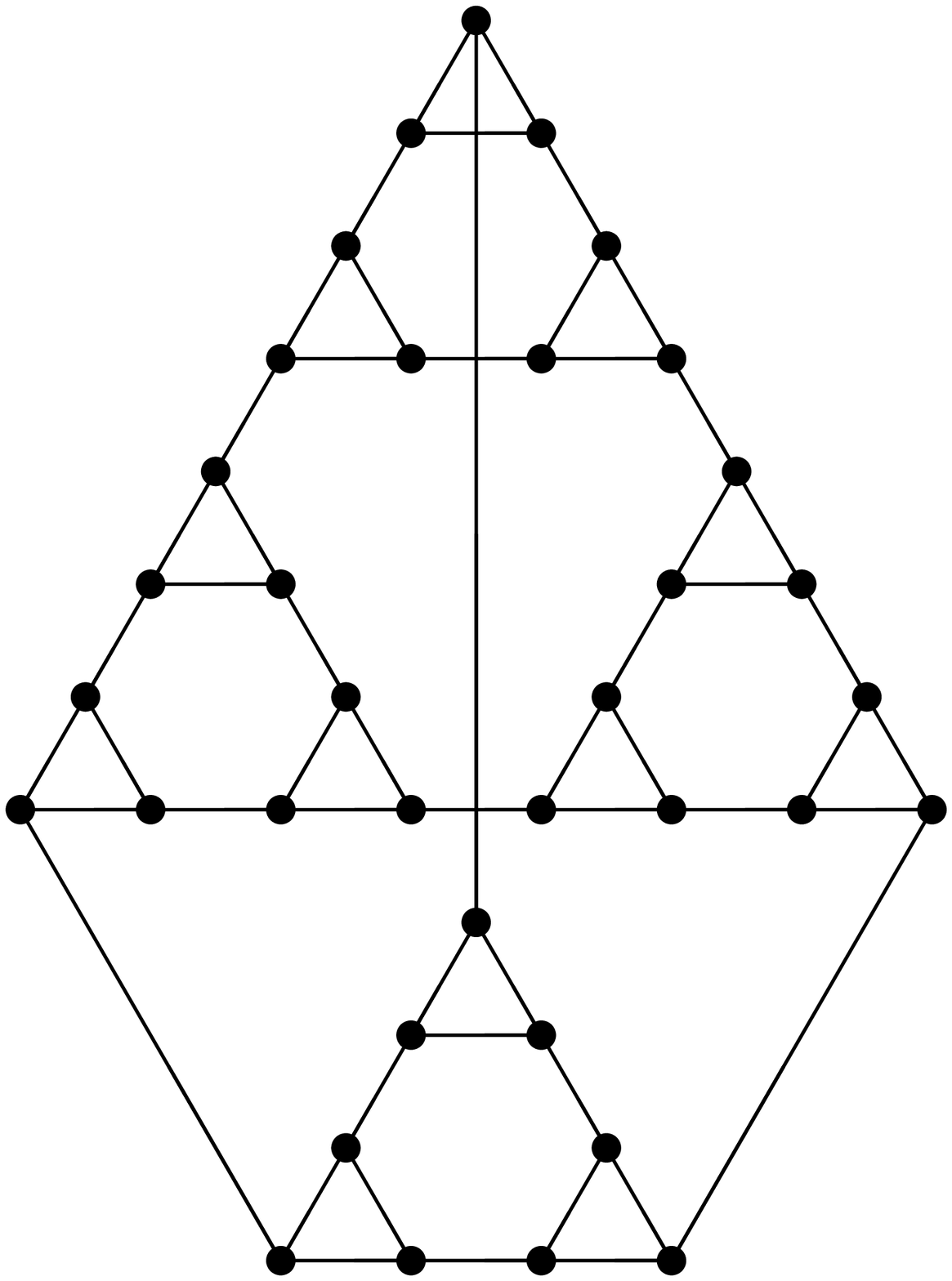}
\caption{The extended Sierpi\'nski graph $\mathcal{S}^{++}_3$.}
\label{spp3}
\end{figure}




The extended Sierpi\'nski graph is a particular case of {S}ierpi{\'n}ski-like graphs  proposed by Klav{\v{z}}ar and Moharin~\cite{KlMo05}, which is in fact a variant of the Tower of Hanoi graph~\cite{HiKlMiPeSt13,ZhWuLiCo16}. The extended Sierpi\'nski graph can be  defined by iteratively applying the subdivided-line
graph operation~\cite{Ha15}. Denote by $\Gamma^1 (\mathcal{G }) = L(B(\mathcal{G})) $, and the $g$th subdivided-line of $\mathcal{G}$ is obtained through the iteration $\Gamma^g(\mathcal{G }) = \Gamma(\Gamma^{g-1}(\mathcal{G }))$. Let $\mathcal{K}_4$ be the complete graph of 4 vertices, and let $\mathcal{S}^{++}_g$ denote the extended Sierpi\'nski graph. Then $\mathcal{S}^{++}_g$, $g\geq 1$, is defined by  $\mathcal{S}^{++}_g=\Gamma^{g-1}(\mathcal{K}_4)$, with $\mathcal{S}^{++}_1=\mathcal{K}_4$. Fig.~\ref{spp3} illustrates an extended Sierpi\'nski graph $\mathcal{S}^{++}_3$.

For all $g\geq 1$, the extended Sierpi\'nski graph $\mathcal{S}^{++}_g$ is a $3$-regular graph. Moreover,  $\mathcal{S}^{++}_g$ is fractal but not small-world.
By definition, it is easy to verify that the number of vertices and edges in the extended Sierpi\'nski graph $\mathcal{S}^{++}_g$ are $N_g = 4\cdot 3^{g-1}$ and $E_g = 2\cdot 3^g$, respectively.








\begin{theo}
The number of perfect matchings in the extended Sierpi\'nski graphs $\mathcal{S}^{++}_g$, $g\geq 1$,
is $\psi(\mathcal{S}^{++}_g)=2^{2\cdot 3^{g-2} + 1}$, and the entropy for perfect matchings in $\mathcal{S}^{++}_g$, $g\to \infty$, is $z(\mathcal{S}^{++}_g)=\frac{\ln2}{3}$.
\end{theo}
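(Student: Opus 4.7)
The plan is to exploit the recursive definition $\mathcal{S}^{++}_g = \Gamma(\mathcal{S}^{++}_{g-1}) = L(B(\mathcal{S}^{++}_{g-1}))$ together with Lemma~\ref{linegraph}. For every $g\geq 2$, $\mathcal{S}^{++}_{g-1}$ is $3$-regular (for $g=2$ this is $\mathcal{K}_4$, for $g>2$ it follows from the regularity property of extended Sierpi\'nski graphs quoted just before the theorem). Therefore the subdivision graph $B(\mathcal{S}^{++}_{g-1})$ has every vertex of degree either $2$ (a newly inserted subdivision vertex) or $3$ (an original vertex), so all vertex degrees are at most $3$. This is exactly the condition under which Lemma~\ref{linegraph} holds with equality.

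The main counting step is therefore to read off
\begin{equation*}
\psi(\mathcal{S}^{++}_g)=\psi\bigl(L(B(\mathcal{S}^{++}_{g-1}))\bigr)=2^{\,E(B(\mathcal{S}^{++}_{g-1}))-N(B(\mathcal{S}^{++}_{g-1}))+1}
\end{equation*}
and to simplify the exponent using the vertex/edge counts of $B(\mathcal{S}^{++}_{g-1})$. The subdivision operation turns each edge into two edges and adds one vertex per edge, so $N(B(\mathcal{S}^{++}_{g-1}))=N_{g-1}+E_{g-1}$ and $E(B(\mathcal{S}^{++}_{g-1}))=2E_{g-1}$. Substituting the known values $N_{g-1}=4\cdot 3^{g-2}$ and $E_{g-1}=2\cdot 3^{g-1}$ and telescoping gives
\begin{equation*}
E(B(\mathcal{S}^{++}_{g-1}))-N(B(\mathcal{S}^{++}_{g-1}))+1=E_{g-1}-N_{g-1}+1=2\cdot 3^{g-1}-4\cdot 3^{g-2}+1=2\cdot 3^{g-2}+1,
\end{equation*}
which yields $\psi(\mathcal{S}^{++}_g)=2^{2\cdot 3^{g-2}+1}$ as claimed. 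One should also briefly note that $E(B(\mathcal{S}^{++}_{g-1}))=4\cdot 3^{g-1}$ is even and that $B(\mathcal{S}^{++}_{g-1})$ is connected, so the parity/connectivity hypotheses of Lemma~\ref{linegraph} are met.

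The entropy then follows by a direct limit computation from Equation~\eqref{Equ:Pre02}, using $N_g=4\cdot 3^{g-1}$:
\begin{equation*}
z(\mathcal{S}^{++}_g)=\lim_{g\to\infty}\frac{\ln\psi(\mathcal{S}^{++}_g)}{N_g/2}=\lim_{g\to\infty}\frac{(2\cdot 3^{g-2}+1)\ln 2}{2\cdot 3^{g-1}}=\frac{\ln 2}{3},
\end{equation*}
so the dominant term $2\cdot 3^{g-2}/(2\cdot 3^{g-1})=1/3$ delivers the stated constant. There is essentially no obstacle beyond correctly identifying that Lemma~\ref{linegraph} applies with equality (the $3$-regularity of $\mathcal{S}^{++}_{g-1}$ is the hypothesis that must be checked), and then bookkeeping of vertex and edge counts; the rest is arithmetic and a straightforward limit.
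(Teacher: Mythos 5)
Your proposal is correct and follows essentially the same route as the paper: identify $\mathcal{S}^{++}_g=L(B(\mathcal{S}^{++}_{g-1}))$, verify that $B(\mathcal{S}^{++}_{g-1})$ has all degrees at most $3$ so Lemma~\ref{linegraph} holds with equality, compute the exponent $2E_{g-1}-(N_{g-1}+E_{g-1})+1=2\cdot 3^{g-2}+1$, and take the limit for the entropy. Your extra remarks on evenness of the edge count and connectivity are a harmless (indeed welcome) tightening of the same argument.
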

\begin{proof}
By definition, $\mathcal{S}^{++}_g=L(B(\mathcal{S}^{++}_{g-1}))$. It is obvious that $B(\mathcal{S}^{++}_{g-1})$ has $N_{g-1}+E_{g-1}=4\cdot 3^{g-2}+2\cdot 3^{g-1}$ vertices
and $2E_{g-1}=4\cdot 3^{g-1}$ edges. Moreover, the degree of a vertex in $B(\mathcal{S}^{++}_{g-1})$ is either 2 or 3. From Lemma~\ref{linegraph}, for all $g\geq 1$, the number of perfect matchings in $\mathcal{S}^{++}_g$ is
\begin{equation*}
\psi(\mathcal{S}^{++}_g) = \psi(L(B(\mathcal{S}^{++}_{g-1}))) = 2^{2E_{g-1}
- (N_{g-1}+ E_{g-1}) + 1} = 2^{2\cdot 3^{g-2} + 1}.
\end{equation*}
Then, and the entropy of perfect matchings in  extended Sierpi\'nski graphs $\mathcal{S}^{++}_g$, $g \to \infty$, is
\begin{equation*}
z(\mathcal{S}^{++}_g)=\lim\limits_{g\to \infty} \frac{\ln \psi(\mathcal{S}^{++}_g)}{\frac{N_g}{2}}=\lim\limits_{g\to \infty} \frac{\ln 2^{2\cdot 3^{g-2} + 1}}{2\cdot 3^{g-1}}=\frac{\ln2}{3}\,,
\end{equation*}
as the theorem claims.
\end{proof}

\section{Conclusion}

In this paper, we have studied both the size and the number of maximum matchings in two self-similar scale-free networks with identical degree distribution, and shown that the first network has no perfect matchings, while the second network has many perfect matchings. For the first network, we determined explicitly the size and number of maximum matchings by using its self-similarity. For the second network, we constructed a Pfaffian orientation, using the skew adjacency matrix of which we determined the exact number of perfect matchings and its associated entropy. Furthermore, we determined the number of perfect matchings in an extended regular Sierpi\'nski graph, and demonstrated that entropy for its perfect matchings equals that of the second scale-free network. Thus, power-law degree distribution itself is not enough to characterize maximum matchings in scale-free networks, and care should be needed when making a general statement on maximum matchings in scale-free networks. Due to the relevance of maximum matchings to structural controllability, our work is helpful for better understanding controllability of scale-free networks.

\section*{Acknowledgements}

This work is supported by the National Natural Science Foundation of China under Grant No. 11275049. 

\end{document}